\documentclass[11 pt]{article}
\usepackage[margin=1in]{geometry}

\usepackage{amsmath,amsthm,amsfonts,graphicx,float}
\usepackage{amssymb,upgreek,stmaryrd,framed,bbm}
\usepackage{wrapfig,mathrsfs,xfrac,mathtools,epsdice}
\usepackage{qtree,setspace,accents,verbatim}
\usepackage[x11names]{xcolor}
\usepackage[]{algorithm2e}
\usepackage[normalem]{ulem}
\usepackage[english]{babel}
\usepackage{framed}
\usepackage[shortlabels]{enumitem}

\usepackage{thmtools}
\usepackage{thm-restate}

\makeatletter\let\@@citation@@=\citation\renewcommand{\citation}[1]{\@@citation@@{#1}\@for\@tempa:=#1\do{\@ifundefined{cit@\@tempa}{\global\@namedef{cit@\@tempa}{}}{}}}\makeatother
\usepackage[linkbordercolor=red,citebordercolor=green,urlbordercolor=blue]{hyperref}
\usepackage[capitalize]{cleveref}
\makeatletter\def\@lbibitem[#1]#2#3\par{\@ifundefined{cit@#2}{}{\@skiphyperreftrue\H@item[\ifx\Hy@raisedlink\@empty\hyper@anchorstart{cite.#2\@extra@b@citeb}\@BIBLABEL{#1}\hyper@anchorend\else\Hy@raisedlink{\hyper@anchorstart{cite.#2\@extra@b@citeb}\hyper@anchorend}\@BIBLABEL{#1}\fi\hfill]\@skiphyperreffalse}\if@filesw\begingroup\let\protect\noexpand\immediate\write\@auxout{\string\bibcite{#2}{#1}}\endgroup\fi\ignorespaces\@ifundefined{cit@#2}{}{#3}} \def\@bibitem#1#2\par{\@ifundefined{cit@#1}{}{\@skiphyperreftrue\H@item\@skiphyperreffalse\Hy@raisedlink{\hyper@anchorstart{cite.#1\@extra@b@citeb}\relax\hyper@anchorend}}\if@filesw\begingroup\let\protect\noexpand\immediate\write\@auxout{\string\bibcite{#1}{\the\value{\@listctr}}}\endgroup\fi\ignorespaces\@ifundefined{cit@#1}{}{#2}}\makeatother

\newcommand{\Ex}{\mathbb{E}}
\newcommand{\F}{\mathbb{F}}

\newcommand{\PP}{\mathbb{P}}

\newcommand{\R}{\mathbb{R}}

\newcommand{\cB}{\mathcal{B}}

\newcommand{\cG}{\mathcal{G}}

\newcommand{\cM}{\mathcal{M}}

\newcommand{\bE}{\mathbb{E}}\newcommand{\bF}{\mathbb{F}}
\newcommand{\bG}{\mathbb{G}}

\newcommand{\bbone}{\boldsymbol{\mathbbm1}}

\newcommand{\eps}{\upvarepsilon}

\renewcommand{\rm}[1]{\mathrm{#1}}

\newcommand{\ceil}[1]{\left\lceil #1 \right\rceil}
\newcommand{\floor}[1]{\left\lfloor #1 \right\rfloor}

\newcommand{\smat}[1]{\begin{smallmatrix}#1\end{smallmatrix}}

\def\multichoose#1#2{\ensuremath{\left(\kern-.3em\left(\genfrac{}{}{0pt}{}{#1}{#2}\right)\kern-.3em\right)}}

\DeclareMathOperator{\var}{Var}

\DeclareMathOperator{\bin}{Bin}

\DeclareMathOperator{\rank}{rank}

\DeclareMathOperator{\RHS}{RHS}

\newtheorem{theorem}{Theorem}[section]
\newtheorem{lemma}[theorem]{Lemma}

\newtheorem{proposition}[theorem]{Proposition}
\newtheorem{corollary}{Corollary}[theorem]

\newtheorem{conjecture}[theorem]{Conjecture}

\newtheorem{innercustomclaim}{Claim} \newenvironment{numclaim}[1]
{\renewcommand\theinnercustomclaim{#1}\innercustomclaim}
  {\endinnercustomclaim}

\theoremstyle{definition}
\newtheorem*{definition}{Definition}
\newtheorem*{observation}{Observation}

\theoremstyle{remark}
\newtheorem*{remark}{Remark}

\newenvironment{subproof}[1][\proofname]{
\begin{proof}[#1]}{\end{proof}}



\newcommand{\gadget}{\mathbf{G}}
\newcommand{\gn}{G}
\DeclareMathOperator{\Adj}{Adj}
\newcommand{\apiv}{\mathbin{\raisebox{-0.3ex}{\(\tilde{\phantom{\times}}\)}\hspace{-0.8em}\times}}
\newcommand{\symdif}{\mathbin{\triangle}}
\newcommand{\disjcup}{\mathbin{\sqcup}}

\title{Almost all graphs are vertex-minor universal}

\author{Ruben Ascoli\thanks{School of Mathematics, Georgia Institute of Technology, Atlanta. Emails: \texttt{\{rascoli3, sfrederickson3, cmcfarland30, lpost3\}@gatech.edu}},
Bryce Frederickson\thanks{Department of Mathematics, Emory University, Atlanta, GA. Email: \texttt{bryce.frederickson@emory.edu}},
Sarah Frederickson$^*$\hspace{-.15cm},
Caleb McFarland$^*$\footnote{Supported in part by the National Science Foundation under Grant No. DMS- 2452111.}\phantom{ }\footnote{Supported in part by the Georgia Tech ARC-ACO Fellowship.},
Logan Post$^*$}

\begin{document}
\maketitle

\begin{abstract}
Answering a question of Claudet, we prove that the uniformly random graph $G\sim \bG(n, 1/2)$ is $\Omega(\sqrt n)$-vertex-minor universal with high probability. That is, for some constant $\alpha\approx 0.911$, any graph on any $\alpha\sqrt n$ specified vertices of $G$ can be obtained as a vertex-minor of $G$. This has direct implications for quantum communications networks: an $n$-vertex $k$-vertex-minor universal graph corresponds to an $n$-qubit $k$-stabilizer universal graph state, which has the property that one can induce any stabilizer state on any $k$ qubits using only local operations and classical communications. 

We further employ our methods in two other contexts. We obtain a bipartite pivot-minor version of our main result, and we use it to derive a universality statement for minors in random binary matroids.
We also introduce the \textit{vertex-minor Ramsey number} $R_{\rm{vm}}(k)$ to be the smallest value $n$ such that every $n$-vertex graph contains an independent set of size $k$ as a vertex-minor. Supported by our main result, we conjecture that $R_{\rm{vm}}(k)$ is polynomial in $k$. We prove $\Omega(k^2) \leq R_{\rm{vm}}(k) \leq 2^k - 1$.
\end{abstract}

\section{Background and main results}

Recent work in quantum information theory has emphasized the role of entangled multipartite states as reusable resources for large-scale quantum communication networks. In such networks, separate parties are restricted to Local Operations and Classical Communication (LOCC), with no ability to perform joint quantum operations once the resource state has been distributed. A central question in this setting is to understand which global entangled states allow pairs of parties to generate entangled EPR pairs.

Motivated by this perspective, Bravyi, Sharma, Szegedy, and de Wolf~\cite{Br+} introduced the notion of $k$-pairable quantum states. An $n$-party state $|\psi\rangle$ is $k$-pairable if, for any choice of $k$ disjoint pairs of parties, there exists a LOCC protocol that starts with $|\psi\rangle$ and ends in a state such that each of those pairs of parties share an EPR pair. This definition reflects a worst-case notion of network universality: a single fixed resource state must, under LOCC, generate EPR pairs for any choice of $k$ disjoint pairs of parties. Bravyi et al.~\cite{Br+} established upper and lower bounds on the achievable value of 
$k$ in terms of the number of parties and number of qubits per party.

Subsequently, Claudet, Mhalla, and Perdrix~\cite{CMP} and Cautr\`es, Claudet, Mhalla, Perdrix, Savin, and Thomass\'e~\cite{Cau+} elaborated on the ideas of Bravyi et al.~\cite{Br+}, focusing on the case where each party has just one qubit and studying a generalization of $k$-pairable states which Cautr\'es et al.~\cite{Cau+} call \emph{$k$-stabilizer universal} states. A state is $k$-stabilizer universal if any stabilizer state on any subset of $k$ qubits can be obtained through LOCC protocols. Stabilizer universality is a stronger notion than $k$-pairability: EPR pairs are stabilizer states, so any $2k$-stabilizer universal state is also $k$-pairable. For more background on stabilizer states, see for example \cite{Hein2006}.

The contributions of \cite{CMP} and \cite{Cau+}, as well as those of the current paper, rely on the connection between quantum states and graph theory. Central to this connection is the notion of vertex-minors, defined as follows.
\begin{definition}
Given a graph $G$ and a vertex $v\in V(G)$, a \emph{local complementation} at $v$ is an operation which replaces $G[N(v)]$, the subgraph of $G$ induced on the neighborhood of $v$, by its complement. The resulting graph is denoted $G*v$ and has the same vertex set as $G$. 
A graph $H$ with $V(H)\subseteq V(G)$ is a \emph{vertex-minor} of $G$ if it can be obtained from $G$ through a sequence of local complementations and vertex deletions.
\end{definition}
Vertex-minors capture an important sequence of operations on quantum states: if $H$ can be obtained as a vertex-minor of $G$, then the quantum graph state corresponding to $H$ can be obtained from the graph state corresponding to $G$ through just local Clifford operations, local Pauli measurements, and classical communications. The converse also holds if $H$ has no isolated vertices \cite{DW}.

Besides serving as a way to characterize certain properties of quantum states, vertex-minors have been the star of numerous fascinating problems in graph theory since the 1980s. They often serve as a dense analogue of the usual minor relation, and so similarly have many deep applications to width parameters, algorithms, and logic \cite{Oum05, geelen2023grid, courcelle2007vertex, buffiere2024shallow}. For more background on vertex-minors and their applications, see the recent survey on vertex-minors by Kim and Oum~\cite{KO}.

The following definition, introduced by Claudet et al.~\cite{CMP} and further studied by Cautr\`es et al.~\cite{Cau+}, provides the bridge between the purely graph-theoretic concept of vertex-minors and the notion of $k$-stabilizer universal states.

\begin{definition} A graph $G$ is \emph{$k$-vertex-minor universal} on $V(G)$ if for every vertex subset $U\subseteq V(G)$ with $|U|=k$, every graph $H$ on $U$ is a vertex-minor of $G$. 
\end{definition}

We emphasize that we require each $H$ to be obtained exactly on each specified vertex set $U$. In particular, there are $\binom{|V(G)|}{k}2^{\binom{k}{2}}$ total choices of pairs $U$ and $H$. If $G$ is a $k$-vertex-minor universal graph then the quantum graph state corresponding to $G$ is $k$-stabilizer universal. This enables explicit constructions of universal resource states for quantum communication protocols \cite{CMP, Cau+}.

Our main result, resolving a question of Claudet~\cite{Cla}, is that for some constant $C$, \emph{almost all} graphs on $Ck^2$ vertices are $k$-vertex-minor universal. That is, a uniformly random graph on $Ck^2$ vertices is $k$-vertex-minor universal with probability tending exponentially fast to $1$ as $k$ increases. The quadratic dependence is best possible up to the constant factor $C$. We use the notation $\bG(n, p)$ for the Erd\H{o}s-Renyi distribution on graphs on $n$ vertices where each edge is included independently with probability $p$; in particular, $\bG(n,1/2)$ is the uniform distribution on graphs on $n$ vertices. 

\begin{restatable}{theorem}{vmunivthm}\label{thm:vm_universal} For all $c>0$, if $G\sim \bG(n,1/2)$ with $n\geq (1+c)\frac{1}{2\log_2(4/3)} k^2$, then $G$ is $k$-vertex-minor universal with probability at least $1-2^{-(1+o(1))ck^2/2}$. 
\end{restatable}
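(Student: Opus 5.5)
The plan is to reduce universality on a fixed $U$ to a spanning statement over $\F_2$ that is independent of $H$, and then take a union bound over $U$ and over dual vectors. Identify graphs on a fixed $k$-set $U$ with vectors in $\F_2^{\binom{U}{2}}$. The basic gadget is a vertex $w\notin U$ with $S=N(w)\cap U$. Performing $G*w$ and then deleting $w$ adds (mod 2) the clique $K_S$ to $G[U]$. Deleting $w$ without complementing adds nothing. So every vertex outside $U$ offers a choice between $0$ and $K_{S_w}$. If the available cliques span $\F_2^{\binom{U}{2}}$, then every $H$ on $U$ should be reachable. The point is that a single spanning event on $U$ handles all $2^{\binom k2}$ targets $H$ at once, so the union bound only runs over the $\binom nk$ choices of $U$. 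This is exactly what the threshold $n\approx k^2/(2\log_2(4/3))$ suggests.

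\textbf{Step 1 (the quadratic-form estimate).} A nonzero dual vector is a nonempty graph $F$ on $U$, and it evaluates $K_S$ as $|E(F[S])| \bmod 2$. For uniform $S\subseteq U$ this is a nonzero quadratic form in the indicator vector of $S$ over $\F_2$. A nonzero quadratic form on $\F_2^k$ takes the value $1$ with probability at least $1/4$: if it is linear this holds with probability $1/2$, and otherwise its bias is at most $1/2$. Hence
\[\PP\bigl[F(K_{S})=0\bigr]\le 3/4 .\]
With $n-k$ essentially independent uniform sets $S_w$, the cliques fail to span only if some nonzero $F$ annihilates all of them. The union bound over the $2^{\binom k2}-1$ choices of $F$ and the $\binom nk$ choices of $U$ bounds the failure probability by
\[2^{k^2/2+O(k\log n)}\,(3/4)^{n-k}.\]
For $n\ge(1+c)k^2/(2\log_2(4/3))$ this is $2^{-(1+o(1))ck^2/2}$, matching the theorem.

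\textbf{Step 2 (the main obstacle: sequential operations).} The gadgets are not independent. Local complementation at $w$ changes the neighborhoods into $U$ of the other outside vertices adjacent to $w$, so the set actually used by a later vertex $w'$ depends on which earlier gadgets were applied. My first attempt processes the outside vertices $w_1,\dots,w_{n-k}$ in order and exposes, at step $i$, only the edges from $w_i$ to $U$. The current neighborhood is
\[S_i=N_G(w_i)\cap U\ \symdif\ (\text{a set determined by earlier choices and edges among } w_1,\dots,w_i).\]
Conditional on everything revealed so far, $S_i$ is therefore still uniform on $2^U$, whichever branch of choices was taken.

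I would then run an adaptive elimination instead of a static span argument. Maintain a subspace $W_i$ of "already correctable" differences, and show that with probability at least $1/4$ each step enlarges $W_i$ unless $W_i$ is already everything. To make the choice for $w_i$ not disturb earlier progress, one can reserve the decision until the end by processing in reverse, or compare with a Gaussian-elimination order. If the adaptive version is too messy, a fallback is to first take an independent set among the outside vertices (in $\bG(n,1/2)$ only $O(\log n)$ vertices fit, which is too few). A better fallback is to pair vertices and use pivots $G\apiv uv$, which act as toggles of complete bipartite graphs between parts of $U$ defined by $N(u),N(v)$. This keeps the "$\ge 1/4$ probability of hitting any fixed nonzero functional" property while giving more control over interactions.

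\textbf{Step 3 (assembly).} Formalize Step 2 as the statement that the failure probability for a fixed $U$ is at most
\[\sum_{F\ne 0}\ \PP\bigl[\text{every step } i \text{ has } F(K_{S_i})=0\bigr]\le 2^{\binom k2}(3/4)^{n-k}.\]
This follows from the conditional uniformity of each $S_i$ by multiplying the conditional probabilities. Then take the union over $U$ as in Step 1. The genuinely delicate point is justifying that annihilation by a single $F$ along the realized branch of choices is the only way to fail. I expect this to be the crux of the proof.
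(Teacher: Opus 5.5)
There is a genuine gap, and it lies in the reduction itself, not only in the bookkeeping of Step~2. The graphs on $U$ reachable by complementing at outside vertices do not in general form the affine span $G[U]+\langle K_{S_w}\rangle$, nor any affine subspace. Take two adjacent outside vertices $a,b$ with $S_a=N(a)\cap U$ and $S_b=N(b)\cap U$. The operation $*a$ replaces $N(b)\cap U$ by $S_a\symdif S_b$. So complementing at $a$ and then at $b$ adds $K_{S_a}\symdif K_{S_a\symdif S_b}=K_{S_b}\symdif K^\triangle_{S_a,S_b}$, which is generally not $K_{S_a}\symdif K_{S_b}$.

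Consequently:
\begin{itemize}
\item ``The cliques span'' does not imply ``every $H$ is reachable'', and an unreachable $H$ does not produce an annihilating dual vector $F$.
\item The inequality in Step~3 is not well posed. There is no single sequence $S_1,\dots,S_{n-k}$. Instead there is one sequence $S_i(b)$ for each branch $b$ of earlier choices, all driven by the same fresh edges $N_G(w_i)\cap U$.
\item Per-branch conditional uniformity is true (it is essentially Lemma~\ref{lem:complementing_independence}), but it does not turn the failure event into an intersection along one sequence. Even if failure forced annihilation along some branch, a union bound over branches would cost a factor $2^{n-k}$, which swamps $(3/4)^{n-k}$.
\item The adaptive elimination has no subspace $W_i$ to maintain, since reachable sets are not closed under addition.
\item The pivot fallback has the same defect, because $G\times uv$ also rewrites the $U$-neighborhoods of the other outside vertices.
\end{itemize}

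The missing idea is to use only pairwise near-independence instead of linear structure. The paper runs a second moment argument on $X=\sum_{J\subseteq[n]}\bbone_{G_J[U]=H}$, using three facts:
\begin{itemize}
\item each $G_J[U]$ is uniform;
\item $G_J[U]$ is independent of $(G_J\symdif G_{J'})[U]$;
\item $\bigl|\PP[G_J[U]=G_{J'}[U]]-2^{-\binom k2}\bigr|\le 2^{2k-\binom k2-|J\symdif J'|}$.
\end{itemize}
The last bound combines three ingredients. The reordering Lemma~\ref{lem:minor-reordering} rewrites the passage from $G_J$ to $G_{J'}$ as operations at pairwise disjoint singletons and pairs; these are chosen from the edges among outside vertices alone and cover $J\symdif J'$. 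The independence lemma makes these operations genuinely independent $\mathrm{Com}$/$\mathrm{Piv}$ random-walk steps. The eigenvalue bounds of Lemma~\ref{lem:eigenvalue_rank} then control the mixing.

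Your Step~1 estimate $\PP[F(K_S)=0]\le 3/4$ is exactly the statement $\lambda^{\mathrm{Com}}_F\le 1/2$, which is why your threshold agrees with the theorem. In the paper this bound enters through Chebyshev, giving failure probability about $2^{2k}(3/4)^n$ per pair $(U,H)$, followed by a union bound over both $U$ and $H$.
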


On the practical side, our explicit bound on the probability of success (see \eqref{eq:vm-final-bound}) converges exponentially quickly beyond the threshold value. Thus we have a probabilistic algorithm to produce a $k$-vertex-minor universal graph on $O(k^2)$ vertices. Explicitly, letting $C=\frac{1}{2\log_2(4/3)}\approx 1.205$, we have that for any $\eps>0$ and $k\geq 3$, a uniformly random graph on $n=\ceil{Ck^2+16k\log k+4\log(1/\eps)}$ vertices is $k$-vertex-minor universal with probability at least $1-\eps$. This follows from keeping track of the error terms in the union bound. The proof of \cref{thm:vm_universal} is non-constructive, and we do not obtain an algorithm for finding a desired vertex-minor in $\bG(n,1/2)$. We note that this problem is NP-hard in general \cite{DAHLBERG2022106222,DHW}.

Although the study of $k$-vertex-minor universal graphs has only taken off in recent years, the relationship between quantum graph states and vertex-minors dates back to 2004 with the pioneering work of 
Van den Nest, Dehaene, and De Moor \cite{VdN}. In the years since,
many authors have studied various models of random graph states and their algorithmic and quantum entanglement properties; see e.g. \cite{Wu+, DHW, KH, GHH} and the references therein for non-exhaustive coverage of the subject.

Specifically regarding the notion of $k$-vertex-minor universal graphs, Claudet et al.~\cite{CMP} showed that there exist $k$-vertex-minor universal graphs with $O(k^4\log k)$ vertices, while any $k$-vertex-minor universal graph must have at least $(k-2)^2/(2\log_2 3)$ vertices. Cautr\`es et al.~\cite{Cau+} improved the existence result to match the lower bound up to a constant factor, establishing the existence of $k$-vertex-minor universal graphs on $(2+o(1))k^2$ vertices. Their construction is based on a random unbalanced bipartite graph $G$ where one side has $\Theta(k^2)$ vertices and the other has $\Theta(k\log k)$ vertices, with edges between the two sides included independently with probability $1/2$. This yields a graph with $\Theta(k^3\log k)$ edges with high probability, which is less than a constant proportion of possible edges. They also give a polynomial time algorithm to find a $k$-vertex graph $H$ as a vertex-minor in $G$ with high probability.

Our \cref{thm:vm_universal} improves the constant factor $2$ from Cautr\`es et al.~\cite{Cau+} to $\frac{1}{2\log_2(4/3)} \approx 1.205$, progressing closer to the lower bound~\cite{CMP} of $1/(2\log_2 3) \approx 0.315$.
More conceptually important, however, is that our result states that almost all graphs -- including dense and non-bipartite graphs -- also have the vertex-minor universality property. This is perhaps surprising in the context of a paper of Gross, Flammia, and Eisert~\cite{GFE}, which shows that all but an exponentially small fraction of quantum states are too entangled to be computationally useful. Our theorem, in contrast, shows that all but an exponentially small fraction of $O(k^2)$-qubit \emph{graph} states are in fact $k$-stabilizer universal.  

Although the vertex-minor problem addressed by \cref{thm:vm_universal} is most relevant to quantum information theory, our methods extend to a stronger notion of universality. The following definition plays a key role.

\begin{definition}
Given a graph $G$ and an edge $uv\in E(G)$, a \textit{pivot} at the edge $uv$ is the operation which replaces $G$ by $G \times uv \coloneq G * u * v * u$. It is easy to verify that $G * u * v * u = G * v * u * v$, so the operation is well-defined. 
A graph $H$ with $V(H)\subseteq V(G)$ is a \emph{pivot-minor} of $G$ if it can be obtained from $G$ through a sequence of pivots and vertex deletions. A graph $G$ is \emph{$k$-pivot-minor universal} on $V(G)$ if for every vertex subset $U\subseteq V(G)$ with $|U|=k$, every graph $H$ on $U$ is a pivot-minor of $G$. 
\end{definition}

Our techniques naturally extend to pivot-minors. Pivot-minors in bipartite graphs are essentially equivalent to minors in binary matroids, and thus we also obtain the following two results. See \Cref{section:Universality for pivot-minors and binary matroids} for more background, precise definitions for matroid minor universality, and proofs.

\begin{theorem}\label{thm:ER_pm_universal}For all $c>0$, if $G\sim \bG(n,1/2)$ with $n\geq (1+c)\frac{2}{\log_2(16/13)}k^2\approx (1+c)6.68k^2$, then $G$ is $k$-pivot-minor universal with probability at least $1-2^{-(1+o(1))ck^2}$.
\end{theorem}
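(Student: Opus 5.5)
The plan is to use a union bound over pairs $(U,H)$ and reduce pivot-minor reachability to a linear-algebra statement over $\F_2$. Fix $U$ with $|U|=k$ and let $W=V(G)\setminus U$. Let $A$ be the adjacency matrix over $\F_2$. I would use the principal pivot transform: if $S\subseteq W$ and $A_{SS}$ is nonsingular, then pivoting along a suitable sequence of edges inside $S$ and deleting $W$ yields the graph on $U$ with adjacency matrix
$$A_{UU}+A_{US}A_{SS}^{-1}A_{SU},$$
whose diagonal is automatically zero. For bipartite graphs this is the usual correspondence between pivot-minors and binary matroid minors, and it is the form that will be used later for matroids. So it suffices to show that, with high probability and for every target $H$, some admissible $S$ gives $A_{UU}+A_{US}A_{SS}^{-1}A_{SU}=A(H)$.

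To make the family of corrections tractable, I would restrict to a very special family of sets $S$. Greedily fix a matching $a_1b_1,\dots,a_mb_m$ in $G[W]$ with $m\approx n/4$; a random graph has one, and alternatively one pairs up $W$ and keeps the pairs that are edges, each with probability $1/2$. Take $S$ to be a union of matched pairs. The simplest case is a single pair, where the correction is $x y^T+y x^T$ with $x=N(a_i)\cap U$ and $y=N(b_i)\cap U$. This is exactly the effect on $U$ of $G\times a_ib_i$ followed by deleting $a_i,b_i$.

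A useful observation is that, conditional on the edges at $a_i,b_i$, the map sending $G$ to $(G\times a_ib_i)-\{a_i,b_i\}$ acts on the remaining edges as a translation by a fixed pattern. Hence the remaining graph is again uniformly random, and independent of $x,y$. Iterating this, doing pivots one pair at a time in a fixed order and choosing at each step whether to use the pair, the graph on $U$ evolves as
$$A_{UU}+\sum_{i\in T}\bigl(x_iy_i^T+y_ix_i^T\bigr).$$
Here, for any fixed choice pattern, the $(x_i,y_i)$ are i.i.d.\ uniform in $(\F_2^U)^2$. The adaptivity is harmless provided we analyse the process as a branching tree and bound failure along it.

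The core estimate is then a spanning or hitting statement over the space $\F_2^{\binom{U}{2}}$ of graphs on $U$. I would aim to show that the probability that some target is not of the form $A_{UU}+\sum_{i\in T}B_i$ is at most $\sum_{\phi\neq 0}\prod_i \PP[\phi(B_i)=0]$ (up to lower-order factors). Here $\phi$ ranges over the $2^{\binom k2}-1$ nonzero linear functionals, i.e.\ nonzero graphs $F$ on $U$ paired with $B_i$ by counting common edges mod $2$. For a single pair, $\phi(xy^T+yx^T)=x^TFy$, viewed as a bilinear form. For $F\ne 0$ of rank $r\ge 2$ (alternating forms have even rank), $\PP[x^TFy=0]=\tfrac12+2^{-r-1}\le 5/8$. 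The worst case $r=2$ combined with the matching density should give the per-pair factor $13/16$: a pair is available with probability about $1/2$, and otherwise contributes nothing, or the bookkeeping gives the stated constant. The total failure bound is then
$$\binom nk 2^{\binom k2}\cdot 2^{\binom k2}(13/16)^{n/4},$$
which is $2^{-(1+o(1))ck^2}$ once $\frac n4\log_2(16/13)\ge(1+c)\frac{k^2}{2}$, matching the threshold $n\ge(1+c)\frac{2}{\log_2(16/13)}k^2$.

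I expect the main obstacle to be converting spanning into exact hitting with subset sums (not arbitrary linear combinations), while handling the order-dependence of pivots. Over $\F_2$ subset sums are linear combinations, but the vectors $B_i$ themselves depend on earlier choices. I would first try the Fourier/second-moment formulation: count the number $Z_H$ of subsets $T$ producing $H$. Its mean is $2^{m-\binom k2}$, and $\PP[Z_H=0]$ is controlled by $\sum_{\phi\ne0}\prod_i\bigl|\Ex(-1)^{\phi(B_i)}\bigr|$. The translation-invariance observation above is what makes this product factorize despite the adaptivity. Getting the exact constant $13/16$ out of that computation is the step I would check most carefully.
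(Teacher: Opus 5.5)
Your overall plan is the route the paper indicates for this theorem: a second moment over the $2^m$ pivot-choice patterns, Fourier analysis on graphs on $U$ with eigenvalue $2^{-\rank F}$, and a union bound. However, two steps fail as written.

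\textbf{The cross-branch correlation is not controlled.} Your translation-invariance observation shows that along a \emph{single} choice pattern $T$ the increments $B_i$ are i.i.d. The second moment needs $\PP[G_T[U]=G_{T'}[U]]$ for two patterns. Once $T$ and $T'$ diverge at an index $j$, the pivot at $j$ changes things for every later $i\in T\cap T'$: both the neighbourhoods $x_i,y_i$ and whether $a_ib_i$ is still an edge differ between the branches. So $G_T[U]\symdif G_{T'}[U]$ is not a subset sum over $T\symdif T'$ of independent steps. The expression $\sum_{\phi\neq 0}\prod_i\PP[\phi(B_i)=0]$ equals the second-moment ratio only when the increments do not depend on the pattern, so it does not follow from translation invariance.

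The missing ingredient is a reordering statement. Conditional on $G[W]$, the sequence $G_T\apiv(T\text{ reversed})\apiv(T')$, which equals $G_{T'}$, can be replaced by pairwise disjoint pivots inside $W$. These are determined by $G[W]$ alone and cover every index of $T\symdif T'$ whose attempted pivot succeeds; this is the pivot analogue of the reordering lemma. In your principal-pivot language it reads $A*S_{T'}=(A*S_T)*(S_T\symdif S_{T'})$, with $(A*S_T)[S_T\symdif S_{T'}]$ nonsingular. Since $G_T$ is itself uniform, the conditional-independence lemma makes these genuinely independent $\mathrm{Piv}$ steps with nontrivial eigenvalues at most $1/4$. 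Each index of $T\symdif T'$ succeeds independently with probability $1/2$, which gives
$\bigl|\PP[G_T[U]=G_{T'}[U]]-2^{-\binom k2}\bigr|\le \Ex[4^{-X}]=(5/8)^{|T\symdif T'|}$ with $X\sim\bin(|T\symdif T'|,1/2)$. Hence $\var[Z_H]/\Ex[Z_H]^2\le 2^{\binom k2}(13/16)^m$.

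\textbf{The count does not give the threshold.} The factor $13/16=\tfrac12+\tfrac12\cdot\tfrac58$ already pays for a slot being unavailable half the time. It must therefore be used with all $m=(n-k)/2$ pairs of a fixed pairing of $W$, chosen independently of $G$, together with attempted pivots $\apiv$. Restricting to a matching of $G[W]$ of size about $n/4$ pays for unavailability a second time. The matching edges also do not stay available, since earlier pivots toggle them: if $G[\{a_1,b_1,a_2,b_2\}]$ is a $4$-cycle, then $a_2b_2\notin E(G\times a_1b_1)$.

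With your exponent $n/4$, the union bound is $2^{k^2-\frac n4\log_2(16/13)+o(k^2)}$. At $\frac n4\log_2(16/13)=(1+c)\frac{k^2}{2}$ this equals $2^{(1-c)k^2/2+o(k^2)}$, which is not small for $c<1$. With $(13/16)^{(n-k)/2}$ instead, one gets $2^{-(1+o(1))ck^2}$ at $n=(1+c)\frac{2}{\log_2(16/13)}k^2$, as claimed.

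Separately, the quantity $\sum_\phi\prod_i|\Ex(-1)^{\phi(B_i)}|$ you mention at the end is not the second-moment quantity. The correct factor is $\prod_i\frac{1+\Ex(-1)^{\phi(B_i)}}{2}$.
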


\begin{restatable}{theorem}{matroidUniversal}\label{thm:matroid_universal}
Let $C=\frac{1}{2\log_2(8/7)}+\frac 14\approx 2.85$. For all $c>0$, if $M$ is uniformly random binary matroid on ground set $[n]$ and $n\geq (1+c)2Ck^2$, then $M$ is $k$-minor universal with probability at least $1-2^{-(1+o(1))ck^2/4}$.
\end{restatable}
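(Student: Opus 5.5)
The plan is to reduce \cref{thm:matroid_universal} to the bipartite pivot-minor universality statement, through the fundamental graph of a binary matroid. First we fix the model. A binary matroid on $[n]$ is uniquely determined by its row space, a subspace $W\le \F_2^n$, so a uniformly random binary matroid is the same as a uniformly random subspace $W$ of $\F_2^n$. Its dimension $r$ satisfies
\[
\PP(\dim W=r)\propto \binom{n}{r}_2 \approx 2^{r(n-r)}.
\]
Hence $\PP(|r-n/2|\ge t)\le 2^{-(2-o(1))t^2}$, and we may condition on $r$ lying in a window $|r-n/2|\le t$ with $t=\Theta(k\sqrt{c})$ or so. The deviations $|r-n/2|$ up to order $k$ are what cost the additive $\tfrac14$ in $C$ and the factor $\tfrac14$ in the exponent. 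I would choose $t$ so that this failure probability is itself $2^{-(1+o(1))ck^2/4}$.

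Next, fix the data for the union bound: a $k$-set $U\subseteq[n]$ and a target binary matroid $N$ on $U$, of rank $s$ say. Order the columns of $[n]$ so that $U$ comes last, and write $W$ in reduced row echelon form with respect to this order. Conditional on $r$ and on the pivot set $P$, the non-forced entries of the RREF matrix are independent and uniform. With overwhelming probability, $P$ avoids $U$ and in fact lies among the first $r+O(1)$ columns. Therefore $M$ has a basis $B=P$ disjoint from $U$, and $M$ is represented by $[I_B\mid A]$. Here the columns of $A$ indexed by $U$ are i.i.d.\ uniform, and the columns indexed by $E\setminus(B\cup U)$ are uniform up to a small triangular pattern of forced zeros; I would simply discard those columns, or absorb them. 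The fundamental graph of $M$ with respect to $B$ is thus a bipartite graph between $B$ and $E\setminus B$ whose edges at $U$, and at most other places, are independent with probability $1/2$. Minors of $M$ correspond to pivot-minors of this bipartite fundamental graph together with the choice of which side a vertex ends on: contraction of elements of $B$ and deletion of non-basis elements are vertex deletions, and basis exchanges are pivots.

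The core step is to show that, with the required probability, one can pivot and delete so as to realize, on $U$, a fundamental graph of $N$ with respect to some basis $B_N\subseteq U$ of size $s$. For this I would invoke the bipartite pivot-minor version of the main result, applied to a random bipartite graph with parts of sizes about $r$ and $n-r$, both $\approx n/2\ge (1+c)Ck^2$. The relevant probability base $8/7$ arises as follows. In each gadget step we need a random $\F_2$-linear condition on fresh vertices to hit one of a few targets. For a bipartite pattern with three relevant random bits, the bad event has probability $7/8$ per fresh vertex, which gives the count $\binom{k}{2}$-type exponent $k^2/(2\log_2(8/7))$. The union bound over $U$ and $N$ has at most $\binom nk 2^{k^2/4+O(k)}$ terms, since the number of binary matroids on $k$ elements is $2^{k^2/4+O(k)}$, maximized at rank $k/2$. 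This must be beaten by the per-instance failure probability $2^{-(1+c)(\cdots)k^2}$, and that is where the second $\tfrac14 k^2$ in $C$ is consumed.

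The main obstacle I expect is this translation step. The bipartite pivot-minor statement prescribes a graph on a fixed vertex set with a fixed bipartition, whereas here the side of $U$ that lies in $B_N$ must be moved from the non-basis side into the basis. I would first try to handle this by targeting the bipartite graph on $U\cup S$, where $S\subseteq B$ is an auxiliary $s$-set. We realize the fundamental graph of $N$ between $S$ and $U$, with a perfect matching to $B_N$ added, and then perform $s$ further pivots along the matching edges $S\leftrightarrow B_N$, followed by contracting $S$. This costs only $O(k)$ extra vertices and so does not affect the leading constant. Combining this with the rank concentration and the union bound yields the bound $1-2^{-(1+o(1))ck^2/4}$.
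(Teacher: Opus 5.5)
\textbf{Verdict: genuine gap.} Your overall architecture (rank concentration, fundamental graph, bipartite pivot-minor universality, union bound) is the paper's. The gap is in the step that turns a uniform matroid into a uniform bipartite graph.

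\textbf{The forced-zero columns are not negligible.} Given the pivot set $P$, the free RREF entries are uniform, but there are forced zeros in the $p_r-r$ non-basis columns preceding the last pivot. That count is not $O(1)$ with overwhelming probability. The pivot set has law proportional to $2^{-\sum_i(p_i-i)}$, so $\PP[p_r-r\ge s]=\Theta(2^{-s})$. To make this event rarer than $2^{-(1+o(1))ck^2/4}$ you must tolerate about $ck^2/4$ such columns. Near the threshold $n\approx 2(1+c)Ck^2$ the non-basis side has no slack, since $n-r\approx(1+c)Ck^2$. So discarding these columns and citing \cref{thm:bip_pm_universal} as a black box only works with a smaller effective $c$, which loses a constant factor in the exponent. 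A repair exists, but it requires reopening the bipartite proof: with all of $U$ on the non-basis side, only the basis side needs aligning vertices, which frees about $(1+c)k^2/4$ non-basis columns.

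The paper avoids forced zeros altogether. By \cref{prop:graph_vs_matroid_dist}, a uniformly random $r$-set $R$ together with $G\sim\bG(L,R,1/2)$ gives $M(G)$ with law proportional to $b(M)$. Concentration of $b(M)$ (\cref{lem:matroid_basis_count_concentration}) then yields $\PP[M_{\mathrm{unif}}\text{ bad}]\le 16e^{-\min(r,n-r)/2}+2\,\PP[G\text{ bad}]$. So passing to an exactly uniform fundamental graph costs only a factor of $2$.

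\textbf{The translation step is either unnecessary or too expensive.} \cref{thm:bip_pm_universal} already covers ordered targets whose bipartition disagrees with the host's, via the aligning step \cref{lem:align_partition}. Each misplaced vertex of $U$ is pivoted against fresh vertices across the partition until an edge appears. Each fresh vertex buys a full bit, and the rest of the graph stays uniform. This spends $m^*\approx(1+c)k^2/4$ vertices, and it is exactly the $+\frac14$ in $C$; rank fluctuations contribute only $O(\sqrt c\,k)=o(k^2)$.

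If you instead run an aligned-partition second moment on your enlarged target on $U\cup S$, the extra vertices are cheap but the extra prescribed adjacencies are not. All $ks$ pairs of $[S,U]$ are fixed, so the per-instance bound becomes $2^{ks}(7/8)^m$ instead of $2^{s(k-s)}(7/8)^m$. At $s=k$, even with only the $\binom nk$ union bound, this forces $m\gtrsim k^2/\log_2(8/7)\approx 5.2k^2$. For small $c$ that exceeds $\min(r,n-r)\approx 2.85(1+c)k^2$.

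\textbf{Other corrections.} The constant $7/8$ is not a ``three random bits'' event. It is $\frac12(1+\frac34)$, obtained by averaging $(3/4)^{|J\symdif J'|}$ over $J'$: an attempted pivot at an index of $J\symdif J'$ succeeds with probability $1/2$, and each success halves the correlation bound. The rank tail is about $2^{-t^2}$, not $2^{-2t^2}$. After pivoting along the matching you must delete $S$; contracting it would turn $B_N$ into loops.
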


For $G$ to be $k$-vertex-minor universal requires that we can obtain every graph on every $k$-subset of vertices of $G$ as a vertex-minor. We can also ask a related question: how big does $n$ need to be to ensure that \emph{every} graph on $n$ vertices contains a \emph{particular} graph $H$ as a vertex-minor on some $k$ vertices? The answer is only finite if the graph $H$ has no edges, or is an \emph{independent set}; indeed, otherwise, a graph of any size $n$ with no edges does not contain $H$ as a vertex-minor. 
\begin{definition}
Let $R_{\rm{vm}}(k)$ be the minimum $n$ such that any $n$-vertex graph contains the independent set of size $k$ as a vertex-minor.
\end{definition}
Recall the classical Ramsey number $R(s,t)$ which is the smallest $n$ such that any $n$-vertex graph contains either an independent set of size $s$ or a clique of size $t$. As observed in \cite{Cam+}, we have $R_{\rm{vm}}(k) \leq R(k,k+1)$, since if a graph contains a clique of size $k+1$, locally complementing at one vertex forms an independent set on the remaining $k$ vertices. The current best upper bound on $R(k,k+1)$ is roughly $3.8^k$, due to Gupta, Ndiaye, Norin, and Wei~\cite{GNNW} following the breakthrough work of Campos, Griffiths, Morris, and Sahasrabudhe~\cite{CGMS}. Here we provide a nontrivial, but still exponential upper bound for $R_{\rm{vm}}(k)$, as well as a quadratic lower bound obtained from the random graph.

\begin{restatable}{theorem}{ramseybounds}\label{thm:ramsey_bounds} For all $k\geq 0$, 
\[
(1+o(1))\frac{1}{2\log_2 3} k^2\leq R_{\rm{vm}}(k)\leq 2^k-1.
\]
\end{restatable}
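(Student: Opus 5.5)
Two bounds are needed: a lower bound by counting and an upper bound by induction.

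\textbf{Lower bound.} I would follow the counting argument of Claudet et al.~\cite{CMP}, adapted to exhibit a graph with no large independent vertex-minor. Take $G\sim\bG(n,1/2)$ and count the ``certificates'' that an independent set of size $k$ on some $U$ is a vertex-minor of $G$. By standard vertex-minor theory (Bouchet; see \cite{KO}), if $H$ is a vertex-minor of $G$ on $U$, then $H$ is obtained from $G$ by deleting each vertex $v\notin U$ after first doing one of three things: nothing, $*v$, or $*v*w*v$ for a neighbour $w$. Equivalently, $H$ is locally equivalent to one of the graphs $G\setminus v$, $G*v\setminus v$, $G\times vw\setminus v$ for each deleted vertex. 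The key step is to show that, for fixed $U$, fixed choice-pattern $\sigma\in\{0,1,2\}^{V\setminus U}$, and fixed order of deletions, the resulting graph on $U$ is uniformly distributed when $G$ is random. It is a linear-algebraic (over $\F_2$) image of the uniformly random edges inside $U$, shifted by a function of the other edges. I would argue this by exposing the edges inside $U$ last: the operations on vertices outside $U$ toggle edges inside $U$ by amounts determined only by edges incident to $V\setminus U$. Hence
\[
\Pr[\text{empty graph on } U \text{ is a vertex-minor}]\le 3^{n-k}\,2^{-\binom k2}.
\]
A union bound over $\binom nk$ sets $U$ then makes the expected number of such $U$ less than $1$ when $n\log_2 3 < (1-o(1))k^2/2$, i.e.\ for $n\le(1+o(1))k^2/(2\log_2 3)$. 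The main obstacle is justifying the reduction to $3^{n-k}$ canonical deletion patterns with a \emph{fixed} determined outcome. Here I would rely on the known fact (used in \cite{CMP}) that for $v$ and graph $G$, any vertex-minor of $G$ not containing $v$ is a vertex-minor of one of those three graphs, applied iteratively. I would also need to handle local equivalence on $U$ itself: the independent set is locally equivalent only to itself, since local complementation at an isolated vertex does nothing. Hence exact equality suffices.

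\textbf{Upper bound.} I would prove by induction on $k$ that every graph on $2^k-1$ vertices has an independent vertex-minor of size $k$. The base case $k=1$ is trivial. For the step, take $G$ on $2^{k+1}-1$ vertices and pick any vertex $v$.
\begin{itemize}
\item If $\deg v\le 2^k-1$, the non-neighbours of $v$ (other than $v$) number at least $2^{k+1}-2-(2^k-1)=2^k-1$. Delete all neighbours of $v$; $v$ is now isolated. Apply induction to the non-neighbours, using only operations on those vertices, which never touch $v$ since $v$ has no neighbours among them. This gives an independent set of size $k$ plus $v$.
\item If $\deg v\ge 2^k$, locally complement at $v$ if useful, or pivot on an edge $vw$ to move into the first case. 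More cleanly: since a local complementation at $v$ keeps $N(v)$, instead pick $w\in N(v)$, pivot on $vw$, and delete $w$, so that $v$'s neighbourhood shrinks. Alternatively, use deleting $v$ versus $*v$ then deleting $v$ to make $N(v)$ of size at least $2^k-1$ and treat it recursively.
\end{itemize}
The hard step is the high-degree case. I would try a simpler recursion instead: find either a vertex $v$ and a set $S$ of size at least $2^k-1$ with $v$ isolated from $S$ after some operation. For $w\in N(v)$, $G\times vw$ gives $N(v)$ equal to $N_G(w)\symdif\{v\}$-type sets, so choosing among options $G$, $G*v$ (which complements within $N(v)$ but does not change $N(v)$), and pivots, I would aim to show that some graph in the local equivalence class has a vertex of degree at most $(n-1)/2$. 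Failing that, I would fall back to the cruder $R(k,k+1)$-style argument, using the fact that a clique in $N(v)$ becomes independent after $*v$.
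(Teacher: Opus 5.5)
Both halves of your proposal have a genuine gap.

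\textbf{Lower bound.} Your key claim is that, for a fixed pattern $\sigma$, the graph produced on $U$ is uniform, because operations outside $U$ only toggle edges of $G[U]$ by amounts depending on edges incident to $V\setminus U$. This fails for pivots whose partner $w$ lies in $U$. Then $*w$ is a complementation at a vertex of $U$. Worse, $G\times vw-v$ overwrites the row of $w$ by $N_G(v)\setminus\{w\}$ rather than toggling it.

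You cannot avoid this case by choosing $w\notin U$, because $v$ may have all its neighbours in $U$. Take $U=\{a,b\}$ and a single outside vertex $v$. If $N(v)=\{a\}$, then $G\times va-v=I_2$ whatever $ab$ is. Here $G-v=G*v-v=G[U]$, so when $ab\in E(G)$ the pivot is the only way to reach $I_2$. If instead $N(v)=\{a,b\}$, the pivot pattern always yields $K_2$. Conditioned on $N(v)\neq\emptyset$, the pivot pattern therefore outputs $I_2$ with probability $2/3$, not $1/2$. So your bound $3^{n-k}2^{-\binom k2}$ per set $U$ is not established.

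The paper avoids this adaptivity issue with a pure counting argument. $G$ has $I_k$ as a vertex-minor iff $G$ is locally equivalent to some $H$ with $\alpha(H)\ge k$. Every local-equivalence class has at most $3^n$ members (Bahramgiri--Beigi). Hence the probability is at most $3^n\Pr[\alpha(G)\ge k]\le 3^n\binom nk 2^{-\binom k2}$, which gives the same threshold.

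\textbf{Upper bound.} Your Case 1 is correct. Case 2, where every vertex has degree at least $2^k$ (more than half the vertices), is the whole difficulty, and you leave it open. The moves you list do not settle it. For $v$ and $w\in N(v)$, set $A=N(v)\setminus(N(w)\cup\{w\})$, $B=N(w)\setminus(N(v)\cup\{v\})$, $C=N(v)\cap N(w)$, and let $D$ be the remaining vertices. The non-neighbourhoods of $v$ in $G$, $G*w$ and $G\times vw$ are $B\cup D$, $C\cup D$ and $A\cup D$. These only guarantee about $(n-2)/3$ non-neighbours, hence a recursion of order $3^k$. The $R(k,k+1)$ fallback is worse still.

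The paper uses a different, non-inductive idea. Replace $G$ by a locally equivalent graph containing an independent set $U$ of maximum size among all independent vertex-minors, so $|U|\le k-1$. Partition $V(G)\setminus U$ into classes $V_S=\{x:N(x)\cap U=S\}$. Maximality then gives three facts:
\begin{enumerate}
\item $V_\emptyset=\emptyset$.
\item $|V_{\{u\}}|\le 1$: complement at $u$ if needed so that $xy\notin E$, and then $(U\setminus\{u\})\cup\{x,y\}$ is independent.
\item $|V_S|\le 2$ for $|S|\ge 2$: for $x,y,z\in V_S$ and $u\in S$, use $*u$ so that $\{x,y,z\}$ spans at most one edge. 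Then $(U\setminus\{u\})\cup\{y,z\}$ is independent in $G\times ux$ if there is no edge, or in $G*x*u$ if $yz$ is the edge.
\end{enumerate}
Summing gives $|V(G)|\le 2|U|+2(2^{|U|}-1-|U|)=2^{|U|+1}-2\le 2^k-2$.
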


The lower bound comes from showing that with high probability, the uniformly random graph $G\sim \bG(n,1/2)$ does not contain an independent set of size $k$ as a vertex-minor if $k \geq (1+\eps)\sqrt{2\log_2(3)n}$, for any $\eps > 0$. Due to \cref{thm:vm_universal}, this is best possible up to the constant factor, and we believe the lower bound is closer to the true answer.

\begin{restatable}{conjecture}{ramseyconj}\label{conj:ramsey_conj} $R_{\rm{vm}}(k) \leq \mathsf{poly}(k)$.
\end{restatable}

The rest of this paper is organized as follows. In \cref{sec:prelims}, we provide the technical graph-theoretic and probabilistic tools we need for the proof of \cref{thm:vm_universal}, state some preliminary lemmas, and give a detailed overview of the proof. In \cref{section:proofs}, we carry out our proof strategy for \cref{thm:vm_universal}. Then, in \cref{section:Universality for pivot-minors and binary matroids}, we shift our focus to pivot-minors and binary matroids, providing background and proofs of \cref{thm:ER_pm_universal} and \cref{thm:matroid_universal}. Finally, in \cref{section:VM Ramsey conclusion}, we prove \cref{thm:ramsey_bounds} about the vertex-minor Ramsey number and conclude with some remaining open questions.

\section{Preliminaries and proof overview}\label{sec:prelims}
We denote the symmetric difference of sets by $A \symdif B = (A \setminus B) \cup (B \setminus A)$. For a graph $G$, we use $V(G)$ and $E(G)$ to denote its vertex set and edge set respectively. For $U \subseteq V(G)$, we use $G[U]$ to denote the graph induced on $U$, and $G - U$ to denote the graph obtained from deleting $U$. The \emph{adjacency matrix} of a graph $G$ (over $\mathbb{F}_2$) is the symmetric matrix $A \in \mathbb{F}_2^{V(G) \times V(G)}$ such that $A_{u,v} = 1$ if $uv \in E(G)$ and $A_{u,v} = 0$ otherwise. Note that as we assume graphs to be simple, $A$ has 0 along the diagonal. For disjoint sets $X,Y$, we define the set $[X,Y]\coloneqq \{\{x,y\}\mid x\in X, y\in Y\}$ and let $G[X,Y]$ be the induced bipartite subgraph with edge set $E(G)\cap [X,Y]$.

\subsection{Vertex-minor preliminaries}
Throughout the paper it will be convenient to have notation which represents both the local complementation and pivot operations. To that end we define the following. Let $x \in \{\varnothing, \{v\}, \{u,v\}\}$. Then
$$G \circ x = \begin{cases}
    G & x = \varnothing\\
    G * v & x = \{v\}\\
    G \times uv & x = \{u,v\}.
\end{cases}$$
Implicitly we assume that if $x = \{u,v\}$, then $uv \in E(G)$, otherwise the operation is not defined. If $(x_i)_{i=1}^\ell$ is a sequence of such $x$, then we often use $G \circ (x_i)_{i=1}^\ell$ as shorthand for $G \circ x_1 \circ x_2 \circ \dots \circ x_n$. We similarly write $G * (v_i)_{i=1}^\ell$ as shorthand for $G*v_1 * \dots * v_\ell$.

We now discuss some well-known facts about vertex-minors and pivot-minors. We say two graphs are \textit{locally equivalent} if one can be obtained from the other by a sequence of local complementations. Similarly we say that two graphs are \textit{pivot equivalent} if one can be obtained from the other by a sequence of pivots. It is easy to see that if $H$ is a vertex- (pivot-) minor of $G$, then $H\subseteq G'$ for some graph $G'$ which is locally (pivot) equivalent to $G$. The following fact is well-known.

\begin{lemma}[{\cite[Proposition 2.5]{Oum05}}]\label{lem:cancel pivots}
    If $vu \in E(G)$, then $uw \in E(G \times vu)$ if and only if $vw \in E(G)$. Furthermore, if both edges exist, we have
    $$G \times vu \times uw = G \times vw.$$
\end{lemma}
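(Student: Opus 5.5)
We prove both claims using the explicit description of a pivot. If $vu\in E(G)$, the graph $G\times vu$ is obtained from $G$ as follows. Partition $V(G)\setminus\{u,v\}$ into four classes:
\[
\begin{aligned}
A &= N(u)\setminus (N(v)\cup\{v\}), \\
B &= N(v)\setminus (N(u)\cup\{u\}), \\
C &= N(u)\cap N(v), \\
D &= \text{the rest}.
\end{aligned}
\]
Then $G\times vu$ toggles every pair between two distinct classes among $A,B,C$, and swaps the roles of $u$ and $v$. Concretely, $N_{G\times vu}(u)=(N_G(v)\setminus\{u\})\cup\{v\}$ and $N_{G\times vu}(v)=(N_G(u)\setminus\{v\})\cup\{u\}$. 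This description follows by computing $G*u*v*u$ directly, tracking each class through the three local complementations. It is standard and is the form we will use.

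\textbf{First claim.} It is immediate from the swap of $u$ and $v$. For $w\neq u,v$ we have $w\in N_{G\times vu}(u)$ if and only if $w\in N_G(v)$.

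\textbf{Second claim.} Work over $\F_2$ with the adjacency matrix, where a pivot is a principal pivot transform. Index by $\{v,u\}$ and the rest, and write
\[
A=\begin{pmatrix} M & X\\ X^T & Z\end{pmatrix}, \qquad M=\begin{pmatrix}0&1\\1&0\end{pmatrix}.
\]
The pivot gives $Z\mapsto Z+X^TMX$ and $X\mapsto MX$, and $X\mapsto MX$ is exactly the swap of rows $u$ and $v$. Pivot transforms compose well. For $A$ indexed by a set $S$ with $A[S]$ invertible, write $A*S$ for the transform. Then $(A*S)*T=A*(S\symdif T)$ whenever the relevant principal submatrices are invertible; this is Tucker's identity, as in \cite{Oum05}. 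Here $S=\{v,u\}$ and $T=\{u,w\}$, so $S\symdif T=\{v,w\}$. The $2\times 2$ block on $\{v,w\}$ is invertible exactly because $vw\in E(G)$. This gives $G\times vu\times uw=G\times vw$ up to the labeling convention.

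\textbf{Main obstacle.} The pivot-transform identity yields a matrix $A*\{v,w\}$ whose rows and columns are indexed by the original labels. We must check that the graph convention $G\times vw$ agrees with this under the $u/v$ relabeling, so it is cleanest to verify everything directly. Compare $G\times vu\times uw$ with $G\times vw$ in three steps.

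1. \emph{Pairs $x,y\notin\{u,v,w\}$.} In the first pivot, $xy$ toggles according to $(x_u y_v + x_v y_u)$, where $x_u$ is the indicator of $xu\in E(G)$. In the second pivot, computed in $G'=G\times vu$, we have $x'_u=x_v$, while $x'_w=x_w+x_u y$-type corrections. A direct mod-$2$ computation should show the total toggle equals $x_v y_w+x_w y_v$, which is exactly the $G\times vw$ toggle.

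2. \emph{Pairs involving one of $u,v,w$.} Check these by the neighborhood-swap rules, using the first claim twice.

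3. \emph{Overall check.} After both pivots, $u$ returns to its original role with its original edge $uv$. Meanwhile $v$ and $w$ have exchanged neighborhoods, consistent with $G\times vw$.

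This bookkeeping in step 1 is the only real work, and we will organize it as the bilinear-form identity above.
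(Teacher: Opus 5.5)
Your argument is correct once you invoke Tucker's identity. The paper gives no proof of its own here; it simply imports the statement from Oum. So your principal-pivot-transform route is a legitimate self-contained justification. The first claim is fine. For the second claim, the hypotheses you actually need are that $A[\{v,u\}]$ and $(A*\{v,u\})[\{u,w\}]$ are nonsingular. The first is $vu\in E(G)$. The second says $uw\in E(G\times vu)$, which is your first claim applied to $vw\in E(G)$. Tucker then gives $(A*\{v,u\})*\{u,w\}=A*\{v,w\}$.

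However, the ``main obstacle'' is not real, and the fallback you put in its place is partly wrong. Your own block computation already shows that $A*\{v,u\}$, indexed by the \emph{original} labels, is exactly the adjacency matrix of the labeled graph $G\times vu$. The exchange of $u$ and $v$ is precisely the row swap $X\mapsto MX$, not a relabeling applied afterwards; computing $G*v*u*v$ directly confirms this. So the two sides of Tucker's identity are the adjacency matrices of $G\times vu\times uw$ and $G\times vw$ as labeled graphs, and the equality is exact, not ``up to labeling''.

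In the direct check, step 3 is false. Write $a$ for the indicator of $uw\in E(G)$. In $G\times vw$ we have $N_{G\times vw}(v)\setminus\{w\}=N_G(w)\setminus\{v\}$, so $uv$ survives only when $a=1$. Moreover, $u\in N_G(v)$ lies in a toggled class for this pivot, so its adjacency to $x\notin\{u,v,w\}$ becomes $x_u+x_w+a\,x_v$ rather than $x_u$.

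Step 1 is only asserted. The correct input is $x'_u=x_v$ and $x'_w=x_w+x_u+a\,x_v$ in $G\times vu$. With this, $x_uy_v+x_vy_u+x'_uy'_w+x'_wy'_u\equiv x_vy_w+x_wy_v \pmod 2$, as required. Either drop the fallback, or replace it with this computation together with the analogous checks for pairs meeting $\{u,v,w\}$.
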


The following result, proven independently by Bouchet~\cite{Bou88} and Fon-Der-Flaass~\cite{FDF88}, will be crucial in reordering the operations when finding a vertex-minor.

\begin{theorem}[\cite{Bou88, FDF88}]\label{thm:structure of minors}
    Suppose $H$ is a vertex-minor of $G$ and $v \in V(G) \setminus V(H)$. Then $H$ is a vertex-minor of
    \begin{enumerate}
        \item $G - v$,
        \item $G * v - v$, or
        \item $G \times vu - v$ for some $u \in N(v)$.
    \end{enumerate}
\end{theorem}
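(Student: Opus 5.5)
We prove the statement by induction on the length of a sequence of operations that produces $H$ from $G$. Since $H$ is a vertex-minor of $G$, there is a graph $G'$ locally equivalent to $G$ with $H = G'[V(H)]$. Vertex deletions commute with local complementations at other vertices: $(G-w)*u = (G*u)-w$ for $u\neq w$. So it is enough to show that if $G'$ is locally equivalent to $G$, then $G'-v$ is locally equivalent to one of $G-v$, $G*v-v$, or $G\times vu - v$ with $u\in N_G(v)$. Then $H=G'[V(H)]$ is an induced subgraph of $G'-v$, hence a vertex-minor of one of the three graphs.

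\textbf{Step 1 (single operation).} Let $G'=G*w$. If $w=v$, then $G'-v=G*v-v$, which is case 2. If $w\neq v$, then $G'-v=(G-v)*w$ is locally equivalent to $G-v$, which is case 1. So a single local complementation always lands in one of the three classes.

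\textbf{Step 2 (closure under further operations).} The inductive step is to show that each of the three classes is invariant under local complementation at any vertex $w$. In other words, for any graph $K$ and any $w$, the graph $K*w - v$ must be locally equivalent to $K'-v$, where $K'\in\{G, G*v, G\times vu\}$ lies in the same or another admissible class. Composing such statements gives the induction. The main identities needed are the following.
\begin{itemize}
\item $(K*w)-v=(K-v)*w$ for $w\neq v$.
\item $K*v*v=K$.
\item The pivot formula $K\times vu = K*v*u*v$ from the definition.
\item \cref{lem:cancel pivots}: for $u\in N_K(v)$, pivoting at $uw$ after pivoting at $vu$ gives $K\times vw$, and $uw\in E(K\times vu)$ exactly when $vw\in E(K)$.
\end{itemize}
Concretely, write the set $\{K-v,\; K*v-v,\; K\times vu - v : u\in N_K(v)\}$ up to local equivalence and show that it is unchanged when $K$ is replaced by $K*w$.

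\textbf{Step 3 (case analysis).} If $w\neq v$ and $w\notin N_K(v)$, then local complementation at $w$ does not change the neighbourhood of $v$. It commutes with $*v$ up to operations away from $v$, and the three classes are preserved. If $w\in N_K(v)$, use $(K*w)*v = K*w*v$ together with the relation $K*w*v*w = K\times vw$. From these one gets, up to local equivalence of the graph with $v$ deleted:
\begin{itemize}
\item $K*w-v \sim K-v$;
\item $(K*w)*v - v \sim K\times vw - v$;
\item $(K*w)\times vu - v$ is equivalent to $K*v - v$ or to $K\times vu' - v$ for a suitable neighbour $u'$.
\end{itemize}
The last item uses \cref{lem:cancel pivots} to rewrite the composition of pivots through $v$. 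If $w=v$, the operations $*v$ and $\times vu$ are permuted among themselves, since $K*v*v=K$ and $K*v\times vu$ is locally equivalent, after deleting $v$, to $K\times vu'$ for the appropriate $u'$.

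The main obstacle is this case $w\in N_K(v)$. It requires careful bookkeeping of how the neighbourhood of $v$ changes and checking that the new pivot partner really is a neighbour of $v$ in $G$. I would handle it by computing adjacency matrices over $\F_2$: local complementation at $w$ adds $a_w a_w^{T}$ off the diagonal, and deleting $v$ is taking a principal submatrix. Then verify the three identities above as matrix identities.
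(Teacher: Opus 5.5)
The paper does not prove this theorem; it is quoted from Bouchet and Fon-Der-Flaass, so your argument has to stand on its own. Your strategy is the standard elementary one and is sound. Let $\mathcal{S}(K)$ be the set of local-equivalence classes of $K-v$, $K*v-v$ and $K\times vu-v$. It suffices to prove $\mathcal{S}(K*w)\subseteq\mathcal{S}(K)$ for all $K$ and $w$, because $G'-v\in\mathcal{S}(G')$.

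\textbf{The gap.} The step that carries all the content is missing, and it sits exactly where you wrote ``it commutes''. Take $w\notin N_K(v)\cup\{v\}$. Commuting $*w$ past $*v$ handles $K-v$ and $K*v-v$, but not the pivot class. Since $\times vu=*v*u*v$, and $*w$ does not commute with $*u$ when $uw\in E(K)$, naive commutation fails even after deleting $v$. For example, let $K=K_{2,3}$ with $\{v,w\}$ the part of size two. Then $(K*w)\times vu-v$ and $(K\times vu-v)*w$ differ in the edge joining the two neighbours of $v$ other than $u$.

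\textbf{How to close it.} First record that the class of $K\times vu-v$ does not depend on $u\in N_K(v)$, since $K\times vu'-v=(K\times vu-v)\times uu'$ by \cref{lem:cancel pivots}.
\begin{itemize}
\item If some $u\in N_K(v)\setminus N_K(w)$ exists, choose it, and commutation works.
\item If $N_K(v)\subseteq N_K(w)$, you need the identity $(K*w)\times vu=(K\times vu)*w*u$, so that $(K*w)\times vu-v=(K\times vu-v)*w*u$. It follows by chaining:
\begin{itemize}
\item $(K*w)\times vu=K*w*u*v*u$;
\item $K*w*u=(K\times wu)*w$;
\item $wv\in E(K\times wu)$, by \cref{lem:cancel pivots} since $uv\in E(K)$;
\item hence $(K\times wu)*w*v=(K\times wu\times wv)*w$;
\item and $K\times wu\times wv=K\times uv$, again by \cref{lem:cancel pivots}.
\end{itemize}
\end{itemize}

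\textbf{The other cases.} The case you flagged as the main obstacle, $w\in N_K(v)$, needs no bookkeeping of new pivot partners. Since $w$ stays adjacent to $v$ in $K*w$, take $u=w$ to get $(K*w)\times vw=K*w*w*v*w=K*v*w$. Also $K*w*v=(K\times vw)*w$. So $*w$ swaps the $*v$-class and the pivot class and fixes the deletion class.

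For $w=v$, your description has the roles mixed up, although the identities you cite are the right ones. In this case the deletion class and the $*v$-class are exchanged, while the pivot class is fixed by $(K*v)\times vu=K*u*v=(K\times vu)*u$.
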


We note that \cref{lem:cancel pivots} implies that the choice of $u$ does not matter up to pivot-equivalence. That is, if $H$ is a vertex-minor of $G \times vu - v$ for some $u \in N(v)$, then $H$ is a vertex-minor of $G \times vu' - v$ for every $u' \in N(v)$. This helps us prove our key structural lemma below.
We call this our ``reordering lemma," because it allows us to reorganize a sequence of local complementations in a convenient way.

\begin{restatable}{lemma}{minorreorderinglemma}\label{lem:minor-reordering} Fix a set $\hat V$. Let $v_1,\ldots,v_\ell\in \hat V$ be a (possibly repeating) sequence and let $\hat G$ be a graph with $V(\hat G)=\hat V$. Then there is a sequence $x_1,\ldots, x_r$ of singletons and pairs from $\hat V$ such that for any graph $G$ with $\hat V\subseteq V(G)$ and $G[\hat V]=\hat G$, we have
\begin{enumerate}[(I)]
\item[\emph{(I)}] $G*v_1*v_2\cdots *v_\ell-\hat V=G\circ x_1\cdots \circ x_r-\hat V$.
\item[\emph{(II)}] The sets $x_1,\ldots, x_r$ are pairwise disjoint.
\item[\emph{(III)}] If $v\in \hat V$ appears exactly once in $(v_i)_{i=1}^\ell$, then $v\in x_j$ for some $j\in [r]$.
\end{enumerate}
Moreover, any choice of $(x_i)_i$ satisfying (I) and (II) for all $G$ also satisfies (III).
\end{restatable}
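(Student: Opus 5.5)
The plan is to prove (I) and (II) by induction on $\ell$, using a finite set of rewriting rules for the operations $\circ x$. The central observation is that whether a rewriting step is legal depends only on the graph induced on $\hat V$, not on the rest of $G$. Whether a pivot $x=\{u,v\}$ is defined, and which vertices of $\hat V$ are adjacent after any sequence of operations at vertices of $\hat V$, are determined by $\hat G$ and the sequence alone. Indeed, $(G\circ y)[\hat V]=\hat G\circ y$ for $y\subseteq \hat V$. So if the rules are applied to $\hat G$, the resulting sequence $x_1,\dots,x_r$ works simultaneously for every $G$ with $G[\hat V]=\hat G$. The base case $\ell=0$ is the empty sequence.

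\textbf{Inductive step.} Suppose we already have pairwise disjoint $x_1,\dots,x_r$ with $G*v_1\cdots*v_{\ell-1}$ and $G\circ x_1\cdots\circ x_r$ agreeing off $\hat V$. For this step to make sense we strengthen the hypothesis to equality of the full graphs, or at least equality up to operations that do not affect $G-\hat V$. Now append $*v_\ell$.
\begin{enumerate}[(i)]
\item If $v_\ell$ lies in no $x_j$, append $x_{r+1}=\{v_\ell\}$.
\item If $v_\ell\in x_j$, commute $*v_\ell$ leftward past $x_{j+1},\dots,x_r$ and absorb it into $x_j$. The identities used are:
\begin{itemize}
\item $G*v*v=G$;
\item $G\times uv*u=G*u*v$;
\item $G*u*v*u=G*v*u*v$;
\item $G\times vu\times uw=G\times vw$ (\cref{lem:cancel pivots});
\item operations at nonadjacent vertices commute.
\end{itemize}
\end{enumerate}
Because deletion of $\hat V$ happens at the end, operations that only change edges inside $\hat V$, or that amount to harmless pivots whose effect off $\hat V$ can be re-expressed, may be traded for each other. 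This is where the weakened form of (I) (agreement only after $-\hat V$) gives room.

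The main obstacle is case (ii) when $v_\ell$ is adjacent to vertices in later $x_i$'s. Moving $*v_\ell$ left then produces new operations at other vertices, and the rewriting must be shown to terminate with disjoint supports. If the direct rewriting becomes unmanageable, my fallback is a potential-function argument. Among all sequences of singletons and pairs from $\hat V$ that give the same graph after deleting $\hat V$ for every such $G$, choose one of minimum length. If two of its sets intersected, a local rewrite by the identities above would shorten it, so the minimal sequence satisfies (II). A secondary fallback is to describe $G*v_1\cdots*v_\ell-\hat V$ linearly over $\F_2$, as $A_{G-\hat V}$ plus a symmetric correction built from the $\hat V$-to-rest adjacency block, and to read off the disjoint operations from a normal form of the corresponding $\F_2$ matrix.

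\textbf{The ``moreover'' statement and (III).} Suppose $(x_j)$ satisfies (I) and (II) but $v$ lies in no $x_j$, while $v$ appears exactly once among the $v_i$. Choose $G$ to be $\hat G$ plus two new vertices $a,b$ adjacent only to $v$. Since no $x_j$ contains $v$, in $G\circ x_1\cdots\circ x_r$ the vertices $a,b$ remain twins with respect to every operation. I would then argue, via the parity of how many times the pair $ab$ lies in a neighbourhood being complemented, that the edge $ab$ ends in its original state, namely absent. In $G*v_1\cdots*v_\ell$, the unique occurrence of $*v$ toggles $ab$ an odd number of times overall. The other occurrences toggle $ab$ exactly when $a,b$ are common neighbours, which for twins tracks the evolving adjacency of $v$; showing that these cancel is the delicate parity check. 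The two results then differ on $\{a,b\}\subseteq V(G)\setminus\hat V$, contradicting (I). As a backup for this parity step I would use invariance of the cut-rank function under local complementation. Finally, (III) for the constructed sequence follows from the ``moreover'' part.
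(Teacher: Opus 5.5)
\textbf{The existence part (II) is not proved.} Your induction on $\ell$ cannot run with (I) in its stated form. Knowing $G*v_1\cdots*v_{\ell-1}-\hat V=G\circ x_1\cdots\circ x_r-\hat V$ says nothing about the effect of the next $*v_\ell$, because $*v_\ell$ acts through the current neighbourhood of $v_\ell$ outside $\hat V$, and (I) does not control those edges. So you must carry (essentially) whole-graph equality. That means proving that every sequence of complementations inside $\hat V$ equals, as a graph, some disjoint sequence of singletons and pairs. This is at least as strong as the lemma, and the proposal gives no argument for it.

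Pushing $*v_\ell$ to the left is exactly where the rewriting breaks. Past a neighbour $u$ one only has $G*u*v=G\times uv*u$, which reintroduces $u$, and nothing shows the rewriting terminates with disjoint supports (as you note yourself). The minimum-length fallback just asserts that intersecting $x_i,x_j$ can be shortened by a local rewrite. But they need not be consecutive, the operations between them at neighbouring vertices do not commute with them, and exhibiting a length-decreasing rewrite is the whole content of the lemma. The $\F_2$ normal-form idea is not developed.

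The ingredient you are missing is the one the paper uses: the Bouchet--Fon-Der-Flaass theorem (\cref{thm:structure of minors}). It lets you remove one vertex of $\hat V$ at a time by induction on $|V(G)|$, as $G-v$, $G*v-v$ or $G\times vw-v$. The paper uses \cref{lem:cancel pivots} and the three-case bookkeeping of \cref{lem:inductive_reordering} to absorb the pivot partner, and the gadget $\gadget$ then makes the sequence depend only on $\hat G$. Your observation that rewrites by identities whose side conditions involve only $\hat G$ automatically give a $G$-independent sequence would be a fine substitute for the gadget, but only once such a rewriting is actually carried out.

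\textbf{Your witness for (III) does not work.} Two nonadjacent vertices $a,b$ attached only to $v$ need not distinguish the two sides at the end of the sequence, because the cancellation you hope for is false. Let $\hat G$ be a single edge $vw$ and $(v_i)=(w,v,w)$, so $v$ occurs once and $G*w*v*w=G\times vw$.
\begin{itemize}
\item The first $*w$ changes nothing.
\item $*v$ creates $ab$ and makes $a,b$ adjacent to $w$.
\item The second $*w$ deletes $ab$.
\end{itemize}
So $ab$ is absent, exactly as for the empty ($v$-avoiding) sequence. What detects $v$ here is a pair with $N(a)\cap\hat V=\{v\}$ and $N(b)\cap\hat V=\{w\}$, since $G\times vw$ toggles $ab$.

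In general, add $a$ attached only to $v$ and, for each $w\in\hat V$, a vertex $b_w$ attached only to $w$. Then show that $a$ is not isolated in $G*v_1\cdots*v_\ell-\hat V$, whereas any $v$-avoiding sequence keeps $N(a)=\{v\}$ and so isolates $a$. One way to do this uses the linear-algebra form of your secondary fallback:
\begin{itemize}
\item Write $C$ and $B$ for the blocks of the adjacency matrix on $U=V(G)\setminus\hat V$ and on $\hat V\times U$. Off the diagonal, $G*v_1\cdots*v_\ell-\hat V$ has adjacency matrix $C+B^\top MB$.
\item Here $M=\sum_t y_ty_t^\top$ depends only on $\hat G$ and the sequence, where $B^\top y_t$ is the current $U$-neighbourhood of the vertex complemented at step $t$.
\item A $v$-avoiding sequence has zero $v$-row in $M$, so you must show the $v$-row is nonzero when $v$ occurs exactly once.
\end{itemize}
For that last step you can use the stabilizer picture. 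The local Clifford accumulated at $v$ has exactly one factor coming from $*v$, and all others come from complementing at neighbours and fix $Z$. Hence $v$ is effectively measured in the $X$ or $Y$ basis, never $Z$, and this forces $a$ to keep a neighbour.

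For comparison, the paper's Claim also uses the twin pair, but it inspects the two sides immediately after the unique $*v$, by undoing the later complementations in \eqref{eq:gadget}. That rewinding treats (I) as an identity of whole graphs, which is more than (I) states and needs its own justification.
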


\noindent 
We emphasize that the sequence $(x_i)_i$ depends \textit{only} on the subgraph $\hat G=G[\hat V]$, and is independent of the edges between $\hat V$ and $V(G)\setminus \hat V$.
Without this careful tracking of the dependency, this lemma would be implied by known results, (see \cite[Proposition 6]{CPLocal} or \cite[Proposition 7]{CPDeciding}). In \cref{section:proofofreorderinglemma}, we give a short proof by induction using \cref{thm:structure of minors}.

Because of the dependence only on $\hat{G}$, if $G = \bG(n,1/2)$, then by revealing only the edges of $\hat G$ we can rewrite any arbitrary sequence of complementations $*(v_i)_i$ in the desired form $\circ (x_j)_j$, and then study the statistics of the resulting graph.

\subsection{Probability preliminaries}

We use $\bG(n, p)$ to denote the Erd\H{o}s-R\'{e}nyi random graph distribution on vertex set $[n]$, where each edge is present independently with probability $p$. We let $\cG_n$ denote the set of all labeled graphs on vertex set $[n]$. The following simple proposition is crucial for our proof.

\begin{proposition}
\label{prop:complementation preserves random}If $G\sim \bG(n,1/2)$ and $v\in[n]$, then $G*v\sim \bG(n,1/2)$.
\end{proposition}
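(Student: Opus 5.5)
The plan is to show that the map $\phi_v\colon \cG_n\to\cG_n$, $G\mapsto G*v$, is a bijection. Since $\bG(n,1/2)$ is the uniform distribution on $\cG_n$, the pushforward of a uniform measure under a bijection of a finite set is again uniform. This gives $G*v\sim\bG(n,1/2)$ at once.

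\textbf{Step 1: $\phi_v$ is an involution.} Local complementation at $v$ does not change the neighborhood of $v$, because the only edges toggled are pairs $\{a,b\}\subseteq N(v)$, and none of these contain $v$. Hence $N_{G*v}(v)=N_G(v)$. Applying $*v$ a second time therefore toggles exactly the same set of pairs $\binom{N_G(v)}{2}$ again, so $G*v*v=G$. Thus $\phi_v\circ\phi_v=\mathrm{id}$, and $\phi_v$ is a bijection of $\cG_n$.

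\textbf{Step 2: conclude.} For any fixed graph $H\in\cG_n$,
\[
\PP(G*v=H)=\PP\bigl(G=\phi_v^{-1}(H)\bigr)=2^{-\binom n2},
\]
which is exactly the $\bG(n,1/2)$ probability of $H$.

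An alternative, more hands-on check is to condition on $N_G(v)=S$, which is determined by the edges at $v$. The edges inside $S$ are independent fair coins, and they stay independent fair coins after being flipped deterministically. All other edges are untouched, and the conditional law given $S$ is unchanged. Either route works; there is no real obstacle, and the only point needing care is the observation in Step 1 that $N(v)$ is invariant under $*v$.
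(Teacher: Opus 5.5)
Your proof is correct and matches the paper's argument: the paper also notes that $G\mapsto G*v$ is an involution on $\cG_n$, hence preserves the uniform distribution, via $\PP[G*v=G']=\PP[G=G'*v]=1/|\cG_n|$. Your Step 1 just spells out why it is an involution ($N_{G*v}(v)=N_G(v)$), which the paper takes as evident.
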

\begin{proof} Since the map $G\mapsto G*v$ is an involution on $\cG_n$, it preserves the uniform distribution. Indeed, for any graph $G' \in \cG_n$, we have $\mathbb P[G*v=G'] = \mathbb P[G=G'*v] = \frac{1}{|\cG_n|}$.
\end{proof}

The basic probabilistic tool we rely on is Chebyshev's inequality, which we use in the following form.
\begin{lemma}[Chebyshev's inequality]\label{lem:chebyshev} For a random variable $X$ and $\alpha>0$, we have 
\[\PP[|X-\Ex[X]|\geq \alpha]\leq \frac{\var[X]}{\alpha^2}.\]
In particular, setting $\alpha=\Ex[X]$ gives $\PP[X=0]\leq \frac{\var[X]}{\Ex[X]^2}$.
\end{lemma}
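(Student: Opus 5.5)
The plan is to derive Chebyshev's inequality from Markov's inequality applied to the nonnegative random variable $Y=(X-\Ex[X])^2$. We may assume $\var[X]<\infty$, since otherwise the bound is trivial, and in particular $\Ex[X]$ is finite.

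First, recall Markov's inequality in the form we need. For a nonnegative random variable $Y$ and $t>0$, we have $t\,\bbone[Y\geq t]\leq Y$ pointwise. Taking expectations gives $\PP[Y\geq t]\leq \Ex[Y]/t$.

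Next, observe that the events $\{|X-\Ex[X]|\geq \alpha\}$ and $\{(X-\Ex[X])^2\geq \alpha^2\}$ coincide, because $\alpha>0$ and squaring is monotone on nonnegative reals. Applying Markov's inequality with $t=\alpha^2$ then gives
\[
\PP[|X-\Ex[X]|\geq \alpha]\leq \frac{\Ex[(X-\Ex[X])^2]}{\alpha^2}=\frac{\var[X]}{\alpha^2},
\]
which is the first claim.

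For the ``in particular'' statement, take $\alpha=\Ex[X]$, which implicitly requires $\Ex[X]>0$; this is the only case in which it will be used. On the event $\{X=0\}$ we have $|X-\Ex[X]|=\Ex[X]\geq\alpha$. Hence $\{X=0\}\subseteq\{|X-\Ex[X]|\geq \Ex[X]\}$, and the first part gives $\PP[X=0]\leq \var[X]/\Ex[X]^2$.

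There is no real obstacle here. The only point needing care is the implicit hypothesis $\Ex[X]>0$ in the second statement, which holds in all applications, where $X$ will be a nonnegative count with positive mean.
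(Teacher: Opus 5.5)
Your proof is correct. It is the standard derivation: apply Markov's inequality to $(X-\Ex[X])^2$ and note that $\{X=0\}\subseteq\{|X-\Ex[X]|\geq \Ex[X]\}$. The paper quotes this inequality as a known tool without proof, and your caveat that the second statement needs $\Ex[X]>0$ (or $\Ex[X]\neq 0$, taking $\alpha=|\Ex[X]|$) holds in every application in the paper.
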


\subsection{Proof overview}\label{sec:overview}
Before the complete proof in \cref{section:proofs}, we outline the argument for \cref{thm:vm_universal}.  Let $G\sim \bG(n+k,1/2)$, where $k=\Theta(\sqrt n)$. Fix a vertex set $U\in {V(G)\choose k}$ and a graph $H$ on $U$. Denote the vertices of $G$ by $V=\{v_1,\ldots, v_n\}\disjcup U$. For each subset $J\subseteq [n]$, where $J=\{i_1<i_2<\ldots<i_{|J|}\}$ we define 
\begin{align}\label{eq:GJ defn}
G_J\coloneqq G*v_{i_1}*v_{i_2}*\cdots *v_{i_{|J|}},
\end{align}
so we can track $2^n$ distinct procedures for complementing on $G$. For example, $G_\emptyset=G$ and $G_{[n]}=G*v_1*\cdots *v_n$. Now, define the variable $X_J=\bbone_{G_J[U]=H}$ as the indicator for the event $\{G_J[U]=H\}$. Let $X=\sum_{J\subseteq [n]}X_J$. By \cref{prop:complementation preserves random}, for each $J$ we have $G_J\sim \bG(n,1/2)$, so each $X_J$ is 1 with the same probability. Thus $\Ex[X]=2^n2^{-{k\choose 2}}\to \infty$, and we want to show that $X>0$ (and thus $H$ is a vertex-minor of $G$) with \textit{very} high probability using Chebyshev's inequality. Eventually, we apply the union bound over all choices of $U$ and $H$ to show that $G$ is $k$-vertex-minor universal.

The central challenge for this argument is to show that the various $G_J,G_{J'}$ are not too correlated, and hence to control $\var[X]$. For example, comparing $G_\emptyset=G$ and $G_{[n]}$, we find that each time we complement at a vertex $v_i$, a random subset of the edges in $U$ are flipped. Hence, we expect that after sufficiently many flips, the distributions of $G[U]$ and $G_{[n]}[U]$ will be almost entirely uncorrelated. 

To make this precise, we define two random walks on the space of graphs with vertex set $U$ (see \cref{section:randomwalks}). The first random walk chooses a uniformly random subset of $U$ and complements the edges with both ends in that subset. This corresponds to complementing at a vertex outside of $U$ that has a random neighborhood. The second random walk chooses two uniformly random subsets $S,T\subseteq U$ and toggles adjacency between pairs of vertices which lie in two different sets among $S \setminus T, T \setminus S$, and $S \cap T$. This corresponds to pivoting on an edge outside $U$ where each end has a random neighborhood in $U$.

We view these random walks as linear transformations over the space of all graphs, and we compute their eigenvectors and eigenvalues exactly (see \cref{lem:eigenvalues_of_walk_explicit}). We use this spectral calculation to conclude that both random walks quickly approach the uniform distribution (\cref{lem:random_walk_mixing}).

We then use these random walks to study $G_J[U] \symdif G_{J'}[U]$ when $J$ and $J'$ differ. Specifically, we argue that if $|J \symdif J'|$ is large, then $G_J[U] \symdif G_{J'}[U]$ is close to uniformly random, and thus the dependence between $X_J$ and $X_{J'}$ is very small (see \cref{lem:JJ' are mixed}). Suppose $J = \{i_1 < i_2 < \dots < i_{|J|}\}$ and $J' = \{j_1 < j_2 < \dots < j_{|J'|}\}$. Because local complementation is an involution, we have
\[G_{J'} = G_J * v_{i_{|J|}} * \dots * v_{i_1} * v_{j_1} * \dots * v_{j_{|J'|}},\]
meaning that $G_{J'}[U]$ is formed from $G_J[U]$ by complementing on vertices outside of $U$. This sounds like our random walk, but as stated, the same vertex could be complemented on multiple times, and thus the steps in the corresponding random walk may be correlated. To combat this, we apply our reordering \cref{lem:minor-reordering} to obtain a sequence $(x_i)_{i=1}^\ell$ where
\[G_{J'}[U] = G_J \circ  (x_i)_{i = 1}^\ell[U]\]
such that the $x_i$ are pairwise disjoint. Then we may consider $G_J[U] \triangle (G_J \circ  (x_i)_{i=1}^t)[U]$ as a random walk starting at $t=0$ with the empty graph and ending at $t = \ell$ with $G_J[U] \symdif G_{J'}[U]$. Property (III) of \cref{lem:minor-reordering} implies that $|\bigcup_{i=1}^\ell x_i| \geq |J \symdif J'|$, so if $|J \symdif J'|$ is sufficiently large, this random walk mixed for many steps, implying that $G_{J}[U] \symdif G_{J'}[U]$ is approximately uniformly random. Since typical sets $J,J'$ differ in $n/2$ many indices, this gives the desired correlation bounds, implying that the variance of $X$ is small and hence the probability that $H$ is a vertex-minor of $G$ is large (\cref{lem:vm_high_probability}).

We note some subtlety in the sets of edges which must stay independent in this approach. In order to apply \cref{lem:minor-reordering}, we first reveal the edges of $G \setminus U$. Then we reveal the edges between $x_i$ and $U$ one at a time so that at each step, each vertex in $x_i$ has a uniformly random neighborhood in $U$. We use \cref{lem:complementing_independence} to argue that conditional on all previous steps, the unexposed edges remain uniformly random throughout the process. This guarantees that each step of the random walk is independent of the previous steps, and so each step brings us closer to the uniform distribution.

\section{Proof of vertex-minor universality of the random graph}\label{section:proofs}

\subsection{Random walks on the space of graphs}\label{section:randomwalks}

We first define and consider two random walks $\rm{Com}$ and $\rm{Piv}$ on the space of graphs with vertex set $U=[k]$. We begin with a graph $\Gamma$ drawn from some starting distribution $\mu_0$ on all graphs with vertex set $U$. Then at each time $t$, the $\rm{Com}$ walk simulates locally complementing (and deleting) at vertex $v$ outside of $U$ with a uniformly random neighborhood. Similarly, the $\rm{Piv}$ walk simulates pivoting on an edge outside $U$ with uniformly random neighborhoods. Formally, we define the random walks as follows.

\begin{definition}
Suppose $\mu$ is a probability distribution on $\cG_k$, the set of labeled graphs with vertex set $[k]$. Identify $\mu$ with a vector in $[0,1]^{\cG_k}$, and we write $\mu(H)$ to be the probability of the graph $H \in \cG_k$. First, let $\rm{Com}(\mu)$ be the distribution of the graph $\Gamma\symdif K_S$, where $\Gamma\sim \mu$, and $S\subseteq [k]$ is a uniformly random subset independent of $\Gamma$. Next, for $S,T\subseteq [k]$ we let $K^\triangle_{S,T}$ be the complete tripartite graph with tripartition $\{S\setminus T,T\setminus S,S\cap T\}$. Then let $\rm{Piv}(\mu)$ be the distribution of $\Gamma\symdif K_{S,T}^\triangle$ where $\Gamma\sim \mu$ and $S,T\subseteq [k]$ are uniformly random and independent of $\Gamma$. Finally, let $\mu_{\rm{unif}}$ be the uniform distribution on $\cG_k$.
\end{definition}

It is easy to see that the uniform distribution $\mu_{\rm{unif}}$ is stationary under both the $\rm{Com}$ and $\rm{Piv}$ operators, and we will be interested in how fast the distribution $\mu_0$ approaches $\mu_{\rm{unif}}$ under these operators. It is helpful to think of these as steps in a random walk on a Cayley graph of the additive group $\bF_2^{E(K_k)}$. For example, if we construct the Cayley graph with multiset of generators $\{\bbone_{E(K_S)}:S\subseteq [k]\}$, then one step in this walk gives the convolution $\mu\mapsto \rm{Com}(\mu)$. A nice survey of random walks on groups is given by Diaconis \cite{Dia03}. As symmetric random walks, both $\rm{Com},\rm{Piv}:\R^{\cG_n}\to \R^{\cG_n}$ are symmetric linear transformations over $\R$. Because they are walks on a Cayley graph, we can explicitly compute their eigenvectors.

\begin{lemma}\label{lem:eigenvalues_of_walk_explicit}
    The eigenvectors of $\rm{Com}$ and $\rm{Piv}$ are given by $\{\chi_G \mid G \in \cG_k\}$ where $\chi_G \in \mathbb{R}^{\cG_k}$ is defined as $\chi_G(H) =(-1)^{|E(G) \cap E(H)|}$. For each $G \in \cG_k$, let $\lambda^{\rm{Com}}_G, \lambda^{\rm{Piv}}_G$ denote the eigenvalues of $\chi_G$ under $\rm{Com}, \rm{Piv}$ respectively. Then we have
    \begin{align*}
        \lambda^{\rm{Com}}_G &= \Ex_{S}\left[(-1)^{|E(G) \cap E(K_S)|}\right]\\
        \lambda^{\rm{Piv}}_G &= \Ex_{S,T}\left[(-1)^{|E(G) \cap E(K^\triangle_{S,T})|}\right]
    \end{align*}
    where the expectation is over $S,T \subseteq [k]$ drawn uniformly at random.
\end{lemma}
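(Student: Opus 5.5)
We regard $\cG_k$ as the additive group $\bF_2^{E(K_k)}$, with symmetric difference of edge sets as the group operation. The functions $\chi_G$ are then exactly the characters of this group. Both $\rm{Com}$ and $\rm{Piv}$ are convolution operators by a probability measure on the group: $\rm{Com}$ uses the law $\nu_{\rm{Com}}$ of $K_S$, and $\rm{Piv}$ uses the law $\nu_{\rm{Piv}}$ of $K^\triangle_{S,T}$. Characters diagonalize convolution, and we verify this directly.

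For a general step measure $\nu$ we write $(\mathrm{W}\mu)(H)=\sum_{F}\nu(F)\,\mu(H\symdif F)$. This formula holds because $\Gamma\symdif F=H$ if and only if $\Gamma=H\symdif F$. We apply $\mathrm{W}$ to the vector $\chi_G$:
\[
(\mathrm{W}\chi_G)(H)=\sum_F \nu(F)(-1)^{|E(G)\cap (E(H)\symdif E(F))|}.
\]
The key identity is that $|E(G)\cap(A\symdif B)|\equiv |E(G)\cap A|+|E(G)\cap B| \pmod 2$, since $\cap E(G)$ distributes over symmetric difference and cardinality is additive mod 2 under $\symdif$. The sign therefore factors as $\chi_G(H)\,(-1)^{|E(G)\cap E(F)|}$. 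This gives $\mathrm{W}\chi_G=\lambda_G\chi_G$ with $\lambda_G=\sum_F\nu(F)(-1)^{|E(G)\cap E(F)|}=\Ex_F[(-1)^{|E(G)\cap E(F)|}]$. Substituting $F=K_S$ or $F=K^\triangle_{S,T}$ with $S,T$ uniform yields the two stated formulas. (Distinct $S$ giving the same $K_S$, for example $|S|\le1$, is harmless, because the expectation is taken over $S$.)

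It remains to show that these $2^{\binom k2}$ vectors form a complete eigenbasis. They are pairwise orthogonal: $\sum_H\chi_G(H)\chi_{G'}(H)=\sum_H(-1)^{|E(G\symdif G')\cap E(H)|}$, which is $0$ unless $G=G'$. To see this, pick an edge $e\in E(G\symdif G')$ and pair each $H$ with $H\symdif\{e\}$; the two terms have opposite signs. So we have $|\cG_k|$ nonzero orthogonal vectors, hence a basis, and every eigenvector of the two operators is accounted for in this diagonalization.

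The only step that needs care is the parity identity and keeping the convolution direction straight. Since every element of the group is its own inverse, the direction does not matter, and the proof is essentially routine.
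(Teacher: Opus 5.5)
Your proposal is correct and follows essentially the same route as the paper, which directly computes $\mathrm{Com}(\chi_G)(H)=2^{-k}\sum_{S}\chi_G(H\symdif K_S)$ and factors out $\chi_G(H)$ using the same parity identity, declaring the $\mathrm{Piv}$ case similar. You package this as convolution by an arbitrary step measure and also verify orthogonality, hence completeness, of the characters; the paper only invokes that fact later, in the proof of \cref{lem:random_walk_mixing}.
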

\begin{proof}
    The statement for $\rm{Com}$ follows from a simple calculation:
    \begin{align*}
        \rm{Com}(\chi_G)(H)&=2^{-k}\sum_{S\subseteq [k]}\chi_G(H\symdif K_S)\\
        &=2^{-k}\sum_{S\subseteq [k]}(-1)^{|E(G)\cap E(H)|}(-1)^{|E(G)\cap E(K_S)|}\notag\\
        &=\chi_G(H)\cdot 2^{-k}\sum_{S\subseteq [k]}(-1)^{|E(G)\cap E(K_S)|}\\
        &= \chi_G(H) \cdot \Ex_{S}\left[(-1)^{|E(G) \cap E(K_S)|}\right].
    \end{align*}
    The proof for $\rm{Piv}$ is similar.
\end{proof}

We can now bound the eigenvalues as follows.

\begin{lemma}\label{lem:eigenvalue_rank}
Let $G \in \cG_k$ have adjacency matrix $A \in \mathbb{F}_2^{k \times k}$. Then
\[
(\lambda^{\rm{Com}}_G)^2 \leq  
\lambda^{\rm{Piv}}_G = 2^{-\rm{rank}_{\mathbb{F}_2}(A)}.
\]
\end{lemma}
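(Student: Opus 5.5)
Both eigenvalues are averages of characters of quadratic or bilinear forms over $\F_2$, so the plan is to rewrite each as such a form and evaluate the character sums directly.

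\textbf{Pivot eigenvalue.} Identify $S,T$ with indicator vectors $s,t\in\F_2^k$.

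An edge $\{i,j\}$ lies in $K^\triangle_{S,T}$ exactly when $i,j$ lie in different parts of $\{S\setminus T,\,T\setminus S,\,S\cap T\}$ and both lie in $S\cup T$. I claim this condition equals $s_it_j+s_jt_i \pmod 2$. To verify it, tabulate the four labels $(s_i,t_i)\in\{00,10,01,11\}$.
\begin{itemize}
\item If either endpoint has label $00$, both sides are $0$.
\item For labels $10$ and $01$ we get $0\cdot 1+1\cdot 0$ summed appropriately to $1$; the pairs $(10,11)$ and $(01,11)$ likewise give $1$.
\item Two equal labels give $2s_it_i\equiv 0$.
\end{itemize}

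Hence
\[|E(G)\cap E(K^\triangle_{S,T})| \equiv \sum_{i<j}A_{ij}(s_it_j+s_jt_i) = s^{\top}At \pmod 2,\]
where the last equality uses that $A$ is symmetric with zero diagonal. Averaging over $t$ first gives $\Ex_t[(-1)^{s^\top A t}]=\bbone[As=0]$. Therefore
\[\lambda^{\rm{Piv}}_G=\PP_s[As=0]=2^{-\rank A},\]
since the kernel has size $2^{k-\rank A}$.

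\textbf{Complementation eigenvalue.} Similarly,
\[|E(G)\cap E(K_S)|\equiv q(s)\coloneqq\sum_{i<j}A_{ij}s_is_j \pmod 2,\]
so $\lambda^{\rm{Com}}_G=\Ex_s[(-1)^{q(s)}]$. The step needing care is squaring this. Introduce an independent uniform $s'$ and substitute $u=s+s'$, which is again uniform and independent of $s$:
\[(\lambda^{\rm{Com}}_G)^2=\Ex_{s,s'}\big[(-1)^{q(s)+q(s')}\big]=\Ex_{s,u}\big[(-1)^{q(s)+q(s+u)}\big].\]
Expanding gives $q(s+u)=q(s)+q(u)+s^\top A u$, because the cross terms of $q$ form the symmetric bilinear form $s^\top Au$. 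Hence
\[(\lambda^{\rm{Com}}_G)^2=\Ex_u\big[(-1)^{q(u)}\,\Ex_s[(-1)^{s^\top Au}]\big]=\Ex_u\big[(-1)^{q(u)}\bbone[Au=0]\big].\]
Bounding each term by its absolute value gives $(\lambda^{\rm{Com}}_G)^2\le \PP_u[Au=0]=2^{-\rank A}=\lambda^{\rm{Piv}}_G$.

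The only real obstacle is the bookkeeping in identifying the tripartite edge indicator with the bilinear form $s^\top A t$, which the case check above handles. The remainder is the standard Gauss-sum squaring trick.
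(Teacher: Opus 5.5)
Your proof is correct and is essentially the paper's argument. The identification $|E(G)\cap E(K^\triangle_{S,T})|\equiv s^\top At$, the evaluation of the inner character sum as $\bbone[As=0]$, and the squaring trick via $u=s+s'$ with $q(s+u)=q(s)+q(u)+s^\top Au$ are the paper's own steps, with its $R=S\symdif T$ playing the role of your $u$. One trivial slip: in your case check for labels $(10,01)$ the sum is $s_it_j+s_jt_i=1\cdot 1+0\cdot 0=1$, not $0\cdot 1+1\cdot 0$, though the value $1$ you claim is right.
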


\begin{proof}
    We first compute $\lambda^{\rm{Piv}}_G$. Recall that for $S,T\subseteq [k]$, if $x_S$ and $x_T^{}$ are the indicator vectors of $S$ and $T$, then $x_S^\top Ax_T^{}=|\{(u,v)\mid u\in S,v\in T,uv\in E(G)\}|$ over $\mathbb{R}$. For example, $x_S^\top Ax_S^{}= 2|E(G) \cap E(K_S)|$. We observe that $|E(G) \cap E(K_{S,T}^\triangle)|\equiv x_S^\top Ax_T\pmod 2$. This is because every edge from $S$ to $T$ is counted, with edges in $S\cap T$ counted twice. Using this and \cref{lem:eigenvalues_of_walk_explicit},
    \begin{equation*}
        \lambda_G^{\rm{Piv}}=\Ex\left[(-1)^{e(G\cap K_{S,T}^\triangle)}\right]=2^{-2k}\sum_{S,T\subseteq [k]}(-1)^{x_T^\top Ax_S^{}}=2^{-2k}\sum_{u\in \bF_2^k}\bigg(\sum_{v\in\bF_2^k}(-1)^{v^\top Au}\bigg).
    \end{equation*}
    Consider the inner sum. If $u\in \rm{Null}_{\bF_2}(A)$, then (over $\bF_2$) we have $Au=0$, so the inner sum is $2^k$. Otherwise, $Au$ is a nonzero vector, so exactly half of vectors $v\in \bF_2^k$ have $v^\top Au=0$, and hence the inner sum is $0$. Thus,
    \begin{equation*}
        \lambda^\rm{Piv}_G= 2^{-2k}\Big(2^k|\rm{Null}(A)|\Big)=2^{-k}\cdot 2^{k-\rm{rank}(A)}=2^{-\rm{rank}(A)}.
    \end{equation*}
    Next, we use this to bound $\lambda^{\rm{Com}}_G$. For subsets $S,T\subseteq [k]$, let $R=S\symdif T$. Observe the following two identities:
    \begin{align*}
        |E(G) \cap E(K_S)|+ |E(G) \cap E(K_T)| &\equiv |E(G) \cap E(K_{S\symdif T})| + |E(G) \cap E(K_{S,T}^\triangle)| \pmod 2\\
        &\equiv |E(G) \cap E(K_R)| + |E(G) \cap E(K_{S,R}^\triangle)| \pmod 2.
    \end{align*}
    The first can be checked combinatorially, or using linear algebra (over $\mathbb R$):
    $$\frac{x_S^\top Ax_S}{2} + \frac{x_T^\top Ax_T}{2} = \frac{(x_S+x_T)^\top A(x_S+x_T)}{2} - x_T^\top A x_S.$$
    Evaluating each side mod 2 yields the first identity. For the second, note that by definition $K_{S,T}^\triangle=K_{S,R}^\triangle$, or using algebra $x_T^\top A x_S^{}\equiv (x_T+x_S)^\top Ax_S\pmod 2$. As a result, we have the following change of variables (where again we use \cref{lem:eigenvalues_of_walk_explicit}):
    \begin{align*}
        (\lambda_G^{\rm{Com}})^2=\Ex\big[(-1)^{|E(G) \cap E(K_S)|}\big]^2&=2^{-2k}\sum_{S\subseteq [k]}(-1)^{|E(G) \cap E(K_S)|}\sum_{T\subseteq [k]}(-1)^{|E(G) \cap E(K_T)|}\\
        &=2^{-2k}\sum_{S,T\subseteq [k]}(-1)^{|E(G) \cap E(K_{S\symdif T})|}(-1)^{|E(G) \cap E(K_{S,T}^\triangle)|}\\
        &=2^{-2k}\sum_{R\subseteq [k]}(-1)^{|E(G) \cap E(K_R)|}\sum_{S\subseteq [k]}(-1)^{e(G\cap K_{S,R}^\triangle)}\\
        &\leq 2^{-2k}\sum_{R\subseteq [k]}\bigg|\sum_{S\subseteq [k]}(-1)^{x_S^\top Ax_R}\bigg|
        =\lambda^{\rm{Piv}}_G.
    \end{align*}
In the last line, we used the fact again that the inner sum is either $2^k$ or $0$. Thus, we conclude that $(\lambda^{\rm{Com}}_G)^2\leq \lambda^{\rm{Piv}}_G =2^{-\rank(A)}$, as desired.
\end{proof}

\begin{remark}
    We note that $A$ is an alternating matrix over $\mathbb{F}_2$, and so the rank of $A$ is even. Thus in particular, if $G = I_k$, $\lambda^{\rm{Com}}_G = \lambda^{\rm{Piv}}_G = 1$, and for every other $G \in \cG_k$, $\lambda^{\rm{Com}}_G \leq 1/2$ and $\lambda^{\rm{Piv}}_G \leq 1/4$. 
\end{remark}

We note that $\mu_{\rm{unif}}$ is a multiple of $\chi_{I_k}$, and so the crude bound of $1/2$ is enough to get exponential convergence to the uniform distribution. To get even sharper bounds, we count the number of graphs whose adjacency matrix has a given rank. This is explicitly enumerated by MacWilliams~\cite{MacWilliams1969Orthogonal}, but we only require the following bound.

\begin{theorem}[{\cite[Theorem 3]{MacWilliams1969Orthogonal}}]\label{thm:graphrank_counting}
    Let $\cG_{k,r}$ denote the set of labeled graphs whose adjacency matrix (a symmetric matrix in $\mathbb{F}_2^{k \times k}$ with 0 diagonal) has rank $r$. Then if $r \geq 1$ is even,
    $$|\cG_{k,r}| \leq 2^{rk - 2}$$
    and if $r$ is odd, $|\cG_{k,r}| = 0$.
\end{theorem}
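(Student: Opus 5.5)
I would prove \cref{thm:graphrank_counting} by identifying graphs on $[k]$ with alternating bilinear forms on $V=\mathbb{F}_2^k$. The adjacency matrix $A$ is symmetric with zero diagonal, so $B(x,y)=x^\top A y$ is bilinear and satisfies $B(x,x)=0$ for all $x$. Conversely, every alternating form on $V$ comes from exactly one such matrix. Under this identification, $\rank(A)=k-\dim R$, where $R=\mathrm{rad}(B)=\rm{Null}(A)$. Moreover, $B$ descends to a nondegenerate alternating form $\bar B$ on $V/R$.

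\textbf{Odd rank.} I would first show that a nondegenerate alternating form on an $r$-dimensional space forces $r$ to be even. This is the standard symplectic-basis induction. Pick $u\neq 0$; nondegeneracy gives $w$ with $B(u,w)=1$, and $u,w$ are independent because $B(u,u)=0$. The form restricted to $\mathrm{span}(u,w)$ is nondegenerate, so $V=\mathrm{span}(u,w)\oplus \mathrm{span}(u,w)^{\perp}$, and the restriction to the complement is again nondegenerate. Induction then gives that $r$ is even, so $|\cG_{k,r}|=0$ for odd $r$.

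\textbf{Even rank: bijection.} For even $r\ge 2$, the correspondence $B\mapsto (R,\bar B)$ is a bijection onto pairs consisting of a subspace $R$ of dimension $k-r$ and a nondegenerate alternating form on $V/R$. The inverse pulls $\bar B$ back along the quotient map. Hence
\[
|\cG_{k,r}| = \binom{k}{r}_{2}\cdot N_r ,
\]
where $\binom{k}{r}_2$ is the Gaussian binomial coefficient and $N_r$ is the number of nondegenerate alternating $r\times r$ matrices over $\mathbb{F}_2$.

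\textbf{Even rank: bounding the two factors.} For $N_r$, the trivial bound $N_r\le 2^{\binom r2}$ (all alternating $r\times r$ matrices) suffices. For the Gaussian binomial,
\[
\binom{k}{r}_2=\prod_{i=0}^{r-1}\frac{2^{k-i}-1}{2^{r-i}-1}\le 2^{r(k-r)}\prod_{j=1}^{r}\bigl(1-2^{-j}\bigr)^{-1}\le 3.5\cdot 2^{r(k-r)}.
\]
Combining the two gives
\[
|\cG_{k,r}|\le 3.5\cdot 2^{rk-r^2/2-r/2}.
\]

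\textbf{Checking the constant.} This step is where the stated $-2$ in the exponent must be verified, and it is the only place I expect any care to be needed.
\begin{itemize}
\item For $r\ge 4$, we have $r^2/2+r/2\ge 10$, so the bound is at most $3.5\cdot 2^{rk-10}\le 2^{rk-2}$ with plenty of room.
\item For $r=2$, I would compute exactly: $N_2=1$ and $\binom{k}{2}_2=\frac{(2^k-1)(2^{k-1}-1)}{3}\le 2^{2k-1}/3<2^{2k-2}$.
\end{itemize}
If one prefers not to track the product constant, the general case can instead be handled by the sharper exact count
\[
N_r=2^{r(r-2)/4}\prod_{i=1}^{r/2}\bigl(2^{2i-1}-1\bigr),
\]
obtained by choosing a symplectic basis step by step. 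That route only makes the bound easier.
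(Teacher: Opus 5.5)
Your proof is correct. It takes a genuinely different route from the paper, because the paper gives no argument for this statement: it simply quotes MacWilliams's exact enumeration of alternating matrices over $\mathbb{F}_q$ by rank and reads off the crude bound $2^{rk-2}$.

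\textbf{What your route does.} Your argument is self-contained. Passing to the radical $R=\mathrm{Null}(A)$ and the induced nondegenerate form on $V/R$ gives the factorization $|\cG_{k,r}|=\binom{k}{r}_2 N_r$. This factorization is exactly the structure underlying MacWilliams's closed form: substituting the exact $N_r=2^{r(r-2)/4}\prod_{i=1}^{r/2}(2^{2i-1}-1)$ recovers it. You never need the exact $N_r$, however. The trivial bound $N_r\le 2^{\binom r2}$ together with $\prod_{j\ge 1}(1-2^{-j})^{-1}<3.5$ already yields $|\cG_{k,r}|\le 3.5\cdot 2^{rk-r^2/2-r/2}$.

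\textbf{What each approach buys.} The citation gives exact counts at no cost. Your approach gives an elementary proof and shows that the bound is far from tight for $r\ge 4$. It also shows that the ``$-2$'' in the exponent only matters at rank $2$, which is also the dominant term in \cref{lem:random_walk_mixing}. At $r=2$ your exact computation $\binom{k}{2}_2<2^{2k-1}/3$ is genuinely needed, since the general estimate $3.5\cdot 2^{2k-3}$ would overshoot $2^{2k-2}$.

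\textbf{A small fix.} When you bound $\binom{k}{r}_2$ factor by factor, first restrict to $r\le k$. For $r>k$ the count is $0$, but some factors $2^{k-i}-1$ are negative, so multiplying the per-factor inequalities is not literally valid there.
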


We are now ready to show that the two random walks (or any combination of the two) converge exponentially fast to the uniform distribution.

\begin{lemma}\label{lem:random_walk_mixing}
    Let $\mu_0 = \delta_{I_k}$ be the point mass distribution on the independent graph $I_k \in \cG_k$. Then for $t = 1, \dots, \ell$, define either $\mu_t = \rm{Com}(\mu_{t-1})$ or $\mu_t = \rm{Piv}(\mu_{t-1})$. Suppose that the $\rm{Com}$ transformation is applied $m_1$ times and the $\rm{Piv}$ transformation is applied $m_2$ times, and let $m=m_1+2m_2$. Then if $m>2k$, we have
    $$|\mu_\ell(H) - \mu_{\rm{unif}}(H)| \leq 2^{2k - \binom{k}{2} - m}$$
    for every graph $H \in \cG_k$.
\end{lemma}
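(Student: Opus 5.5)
Expand $\mu_0=\delta_{I_k}$ in the character basis and use the spectral bounds from \cref{lem:eigenvalue_rank}. The characters $\{\chi_G\}_{G\in\cG_k}$ are orthogonal with $\langle \chi_G,\chi_G\rangle=|\cG_k|=2^{\binom k2}$, and $\chi_G(I_k)=1$ for every $G$. Hence
\[
\delta_{I_k}=2^{-\binom k2}\sum_{G\in\cG_k}\chi_G .
\]
By \cref{lem:eigenvalues_of_walk_explicit}, both $\rm{Com}$ and $\rm{Piv}$ act diagonally on this basis. Since these operators commute, the order in which they are applied is irrelevant, and
\[
\mu_\ell=2^{-\binom k2}\sum_{G}(\lambda^{\rm{Com}}_G)^{m_1}(\lambda^{\rm{Piv}}_G)^{m_2}\chi_G .
\]
The term $G=I_k$ has both eigenvalues equal to $1$ and $\chi_{I_k}\equiv 1$, so it contributes exactly $2^{-\binom k2}=\mu_{\rm{unif}}(H)$.

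For every $H$, using $|\chi_G(H)|=1$, this gives
\[
|\mu_\ell(H)-\mu_{\rm{unif}}(H)|\le 2^{-\binom k2}\sum_{G\ne I_k}|\lambda^{\rm{Com}}_G|^{m_1}(\lambda^{\rm{Piv}}_G)^{m_2}.
\]
By \cref{lem:eigenvalue_rank}, $|\lambda^{\rm{Com}}_G|\le 2^{-\rank(A)/2}$ and $\lambda^{\rm{Piv}}_G=2^{-\rank(A)}$. Therefore each summand is at most $2^{-r(m_1+2m_2)/2}=2^{-rm/2}$, where $r=\rank(A)$.

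Group the sum by rank and apply \cref{thm:graphrank_counting}. Only even $r\ge 2$ occur, so the sum is at most
\[
\sum_{r\ge 2\text{ even}}2^{rk-2}\,2^{-rm/2}=\tfrac14\sum_{j\ge1}2^{-j(m-2k)},
\]
writing $r=2j$. When $m>2k$ the ratio $q=2^{-(m-2k)}$ is at most $1/2$, so the geometric series is at most $2q$. The total is then at most $\tfrac12 2^{2k-m}\le 2^{2k-m}$, and multiplying by $2^{-\binom k2}$ yields the claimed bound.

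The only delicate point is making sure the Com eigenvalues enter through absolute values. \cref{lem:eigenvalue_rank} bounds only $(\lambda^{\rm{Com}}_G)^2$, which gives $|\lambda^{\rm{Com}}_G|\le 2^{-r/2}$. This is why a Com step is worth only $1$ in $m$ while a Piv step is worth $2$. Everything else is a geometric series.
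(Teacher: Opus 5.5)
Your proposal is correct and follows the paper's proof essentially step for step. You expand $\delta_{I_k}=2^{-\binom k2}\sum_G\chi_G$ and use the diagonal action to write $\mu_\ell$, with the $G=I_k$ term giving $\mu_{\mathrm{unif}}$; you then bound the remainder via $|\lambda^{\mathrm{Com}}_G|\le 2^{-\operatorname{rank}(A)/2}$, $\lambda^{\mathrm{Piv}}_G=2^{-\operatorname{rank}(A)}$ and $|\mathcal{G}_{k,r}|\le 2^{rk-2}$, arriving at a geometric series in $2^{-(m-2k)}$, exactly as the paper does.
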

\begin{proof}
    Note that $\{\chi_G \mid G \in \cG_k\}$ is an orthogonal basis of $\mathbb{R}^{\cG_k}$. Because $\mu_0=\delta_{I_k}$ has the same inner product with each eigenvector, we have exactly $\mu_0=2^{-{k\choose 2}}\sum_{G\in \cG_k}\chi_G$. By induction on $t$, we compute
    $$\mu_\ell = 2^{-\binom{k}{2}} \sum_{G \in \cG_k} (\lambda_G^{\rm{Com}})^{m_1} (\lambda_G^{\rm{Piv}})^{m_2} \chi_G.$$
    We note that $2^{-{k\choose 2}}\chi_{I_k}=\mu_{\rm{unif}}$ and $\lambda^{\rm{Com}}_{I_k} = \lambda^{\rm{Piv}}_{I_k} = 1$, so
    $$\mu_\ell - \mu_{\rm{unif}} = 2^{-\binom{k}{2}} \sum_{G \neq I_k} (\lambda_G^{\rm{Com}})^{m_1} (\lambda_G^{\rm{Piv}})^{m_2} \chi_G.$$
    We now decompose the sum into the families $\cG_{k,2r}$ for all $r \geq 0$, noting that $\cG_{k,0} = \{I_k\}$. Using the eigenvalue bounds in \Cref{lem:eigenvalue_rank} and the counting bounds of \Cref{thm:graphrank_counting}, we have the following for any $H \in \cG_k$:
    \begin{align*}
        |\mu_\ell(H) - \mu_{\rm{unif}}(H)| &\leq 2^{-\binom{k}{2}} \sum_{G \neq I_k} |\lambda_G^{\rm{Com}}|^{m_1}\cdot |\lambda_G^{\rm{Piv}}|^{m_2}\cdot |\chi_G(H)|\\
        &\leq 2^{-{k\choose 2}}\sum_{r=1}^{\floor{k/2}} \sum_{G\in \cG_{k,2r}}2^{-rm_1}2^{-2rm_2}\\
        &\leq 2^{-{k\choose 2}}\sum_{r=1}^\infty2^{2rk-2}2^{-rm}\\
        &\leq 2^{-{k\choose 2}+2k-m},
    \end{align*}
    where we use that $m> 2k$ so the sum converges.
\end{proof}

We use \Cref{lem:random_walk_mixing} to show that, in a graph $G \sim \bG(n,1/2)$ with $U \subseteq V(G)$, by repeatedly complementing/pivoting at distinct vertices outside $U$, the distribution of $G\circ(x_t)_t[U]$ becomes quickly uncorrelated from the initial state $G[U]$. However, to apply \Cref{lem:random_walk_mixing}, we need that the complement/pivot steps are actually independent; this relies on the initial graph being uniformly random. In fact, we need that the steps of the random walk are independent even when we choose the sequence $(x_t)_t$ based on $G - U$. Towards that end, we prove the following technical lemma.

\begin{lemma}\label{lem:complementing_independence}
    Let $W\subseteq \hat V \subseteq [n]$ and $U = [n] \setminus \hat V$, and fix a graph $\hat{G}$ with vertex set $\hat V$. Let $v_1, \dots, v_t\in W$ be a (possibly repeating) sequence of vertices depending on $\hat{G}$. Let $G$ be a random graph with vertex set $[n]$ drawn from $\bG(n,1/2)$ conditioned on $G[\hat V] = \hat{G}$. Partition $\binom{[n]}{2}$ into $E_1 = \binom{\hat V}{2} \cup [W, U]$ and $E_2 = \binom{U}{2} \cup [\hat V \setminus W, U]$. Let $G' = G * (v_i)_{i=1}^t$. Then $E(G') \cap E_2$ is uniformly random and independent of $E(G) \cap E_1$.
\end{lemma}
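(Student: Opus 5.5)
We will show that, once $\hat G$ is fixed, the map sending $E(G)$ to $E(G')$ acts on the $E_2$-coordinates as ``fixed translation depending only on the $E_1$-coordinates.'' Then independence and uniformity are immediate. The sequence $(v_i)$ depends only on $\hat G$, so conditional on $G[\hat V]=\hat G$ it is deterministic. The random edges of $G$ are then $[W,U]\cup\binom{U}{2}\cup[\hat V\setminus W,U]$, and these are independent uniform bits.

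The plan is to prove the following claim by induction on $t$. Write $G_s = G*v_1*\cdots*v_s$. Then:
\begin{itemize}
\item $E(G_s)\cap E_1$ is a function of $E(G)\cap E_1$ alone;
\item $E(G_s)\cap E_2 = (E(G)\cap E_2)\symdif F_s$ for some edge set $F_s\subseteq E_2$ that is a function of $E(G)\cap E_1$ alone.
\end{itemize}

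For the inductive step, consider a local complementation at $v=v_{s+1}\in W$. It toggles each pair $ab$ with $a,b\in N_{G_s}(v)$, so we check the three kinds of pairs in turn.
\begin{itemize}
\item The neighborhood of $v$ consists of the pairs $va$ with $a\in \hat V$ or $a\in U$. All of these lie in $E_1$, since $v\in W$ and $[W,U]\subseteq E_1$. Hence $N_{G_s}(v)$ is determined by $E(G_s)\cap E_1$, and therefore by $E(G)\cap E_1$.
\item A toggled pair in $\binom{\hat V}{2}$ or in $[W,U]$ lies in $E_1$. It changes $E_1$ by an amount determined by $N_{G_s}(v)$, so the $E_1$-part of $G_{s+1}$ remains a function of $E(G)\cap E_1$.
\item A toggled pair in $\binom{U}{2}$ or in $[\hat V\setminus W,U]$ lies in $E_2$. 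The set of such toggled pairs depends only on $N_{G_s}(v)$, so it is added to $F_s$ to form $F_{s+1}$. The point is that the $E_2$-coordinates of $G_s$ never influence which pairs get toggled.
\end{itemize}

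To conclude, condition on any value of $E(G)\cap E_1$. Then $F_t$ is fixed, and $E(G)\cap E_2$ is still uniform on $2^{E_2}$, because the bits of $E_2$ are independent of those of $E_1$. Translating a uniform vector by a fixed $F_t$ gives again a uniform vector. So conditional on every value of $E(G)\cap E_1$, the set $E(G')\cap E_2$ is uniform, which is exactly the claimed independence and uniformity.

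The main obstacle is just the bookkeeping: we must verify that every pair incident to $v$ lies in $E_1$. This relies on $v\in W$, and it is precisely why the lemma restricts the sequence to $W$ and why $[W,U]$ was placed in $E_1$.
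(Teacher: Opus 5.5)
Your proposal is correct and follows essentially the same route as the paper: induction on $t$, where each complementation at $v_t\in W$ acts on the $E_2$-coordinates as a toggle by a set determined by $E(G)\cap E_1$. The paper phrases this as an involution $H\mapsto H\triangle K_T$ preserving the uniform measure. Your strengthened inductive hypothesis, that $E(G_s)\cap E_1$ is itself a function of $E(G)\cap E_1$, makes explicit a step the paper uses only implicitly when it asserts that $T=N_{G''}(v_t)$ is determined by $E(G)\cap E_1$.
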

\begin{proof}
    Fix a subset $A$ of $E_1$. It suffices to show that conditioned the event that $E(G) \cap E_1 = A$, the distribution of $E(G') \cap E_2$ is uniformly random. We proceed by induction on $t$. If $t=0$ then $G'=G$, so by definition of $\bG(n,1/2)$, $E(G') \cap E_2$ is uniformly random.

Suppose $t\geq 1$ and let $G''=G*(v_i)_{i=1}^{t-1}$; by the induction hypothesis, $E(G'')\cap E_2$ is uniformly random conditioned on $E(G) \cap E_1$. Observe that $G'=G''*v_t$ is obtained by the involutive map $H\mapsto H\symdif K_{T}$, where $T=N_{G''}(v_t)$ is determined exactly by $E(G) \cap E_1$. This induces an involutive map $E(G'')\cap E_2\mapsto E(G')\cap E_2$, and therefore preserves the uniform measure.
\end{proof}
\begin{remark}
We note that because the above proposition holds for every graph $\hat{G}$, it also holds if we let $\hat{G}$ be uniformly random, and thus it holds when $G \sim \bG(n,1/2)$ and $v_1, \dots, v_t$ is a random sequence depending on $G[\hat V]$.
\end{remark}

Observe that \cref{lem:complementing_independence} implies that conditioning on $G - U$ and some previous mixing steps $x_1,\ldots,x_{t-1}\subseteq W$, as long as $x_t$ is disjoint from $W$ and $U$, we have that both $G[U]$ and the edges from $x_t$ to $U$ are uniformly random. This implies that each step of our random walk is independent of the previous steps. With this, we conclude the final random walk result.

\begin{lemma}\label{lem:complementing_mixing}
Let $G\sim \bG(n + k,1/2)$ with partition $V(G)=\hat V\disjcup U$, $|U|=k$. Let $\hat G$ be a graph on vertex set $\hat V$ and let $(x_i)_{i=1}^\ell$ be a sequence of disjoint singletons and pairs from $\hat V$ such that $\hat G\circ (x_i)_{i=1}^\ell$ is well-defined and $|\bigcup_i x_i| > 2k$. Then,
\[
\Big|\PP\Big[G\circ (x_i)_{i=1}^\ell[U]=G[U]\ \Big|\ G[\hat V]=\hat G\Big]-2^{-{k\choose 2}}\Big|\leq 2^{2k-\binom{k}{2} - |\bigcup_i x_i|}.
\]
\end{lemma}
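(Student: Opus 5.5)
We follow $D_t\coloneqq G[U]\symdif (G\circ(x_i)_{i=1}^t)[U]$ for $t=0,1,\dots,\ell$, conditioned on $G[\hat V]=\hat G$. The event in the statement is exactly $\{D_\ell=I_k\}$, and $D_0=I_k$. The aim is to show that, conditionally, $D_t$ evolves as the walk of \cref{lem:random_walk_mixing} started from $\mu_0=\delta_{I_k}$: one $\rm{Com}$ step for each singleton $x_t$ and one $\rm{Piv}$ step for each pair $x_t$. The number of steps is then $m=m_1+2m_2=|\bigcup_i x_i|>2k$, and \cref{lem:random_walk_mixing} at $H=I_k$ yields the bound $2^{2k-\binom k2-|\bigcup_i x_i|}$ immediately.

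To set up the walk, let $G_t=G\circ(x_i)_{i=1}^t$. The operation $\circ x_t$ is a sequence of local complementations at vertices of $\hat V$, so on $U$ it acts by symmetric difference. If $x_t=\{v\}$, then $G_t[U]=G_{t-1}[U]\symdif K_S$ with $S=N_{G_{t-1}}(v)\cap U$. If $x_t=\{u,v\}$ (an edge of $G_{t-1}$, whose existence depends only on $\hat G$ since pivots inside $\hat V$ only see $G[\hat V]$), a direct computation of $G*u*v*u$ shows that on $U$ it toggles exactly $K^\triangle_{S,T}$ with $S=N_{G_{t-1}}(u)\cap U$ and $T=N_{G_{t-1}}(v)\cap U$. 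The standard formula for pivoting toggles pairs between different classes among $N(u)\setminus N(v)$, $N(v)\setminus N(u)$ and $N(u)\cap N(v)$, restricted to $U$. Hence $D_t=D_{t-1}\symdif K_S$ or $D_{t-1}\symdif K^\triangle_{S,T}$.

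The remaining task is independence: conditioned on $\hat G$ and all of $D_0,\dots,D_{t-1}$, the sets $S$ (or $S,T$) must be uniform and independent. Set $W=\bigcup_{i<t}x_i$. The steps $x_1,\dots,x_{t-1}$ consist of complementations at vertices of $W$, and the sequence depends only on $\hat G$. Apply \cref{lem:complementing_independence} to this $W$. It gives that $E(G_{t-1})\cap E_2$ is uniform and independent of $E(G)\cap E_1$. Now $E_2$ contains $[x_t,U]$, because $x_t$ is disjoint from $W$. On the other hand, $D_0,\dots,D_{t-1}$ are determined by $E(G)\cap E_1$: the neighborhoods in $U$ of the vertices of $W$ at each earlier time are functions of $\hat G$ and of $G$'s edges between $W$ and $U$, since complementing only at $W\cup\{\text{earlier}\}$ vertices alters a $W$-vertex's $U$-neighborhood via other $W$-vertices' neighborhoods. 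Therefore the $U$-neighborhoods of $x_t$ in $G_{t-1}$ are uniform and independent of the past, so each step is a genuine independent $\rm{Com}$ or $\rm{Piv}$ step.

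The main obstacle is this independence bookkeeping. One must check that the earlier increments are measurable with respect to $E(G)\cap E_1$, and that $G[U]$ itself need not be tracked because we only follow the difference $D_t$. I would verify by induction that $N_{G_s}(w)\cap U$ for $w\in W$ is determined by $\hat G$ and $[W,U]\cap E(G)$. Once that holds, the walk identification and \cref{lem:random_walk_mixing} finish the proof.
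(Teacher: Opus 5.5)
Your proposal is correct and is essentially the paper's proof. You track $(G_t\symdif G)[U]$ starting from $\delta_{I_k}$ and identify each singleton or pair step as a $\rm{Com}$ or $\rm{Piv}$ step. You get the needed independence from \cref{lem:complementing_independence} with $W=\bigcup_{i<t}x_i$, using, as the paper does, that the past increments are functions of $\hat G$ and the edges in $[W,U]$. You then finish with \cref{lem:random_walk_mixing} at $H=I_k$ with $m=|\bigcup_i x_i|$.
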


\begin{proof}
For times $t=0,1,\ldots, \ell$, define $G_t \coloneqq G\circ (x_i)_{i=1}^t$, and define the probability distribution $\mu_t\in [0,1]^{\cG_k}$ as the distribution of $H_t:=(G_t\symdif G)[U]$, conditioned on $G[\hat V]=\hat G$. For example, $\mu_0$ is the point mass distribution $\delta_{I_k}$.

At time $t$, $H_{t-1}$ is a function of $G[\hat V]=\hat G$ and $N_G(w)$ for $w\in \bigcup_{i=1}^{t-1}x_i$. If $x_t=\{v\}$, then $H_t=H_{t-1}\symdif K_S$ where $S=N_{G_{t-1}}(v)\cap U$. Similarly, if $x_t=\{v,v'\}$, then $H_t=H_{t-1}\symdif K^{\triangle}_{S,T}$, where $S=N_{G_{t-1}}(v)\cap U$ and $T=N_{G_{t-1}}(v')\cap U$. By \cref{lem:complementing_independence} (applied with $W=\bigcup_{i<t}x_i$), we have that $S$ (resp. $T$) is a uniformly random subset of $U$, independent of $G[\hat V]$ and $N_G(u)$ for $u\in \bigcup_{i<t}x_i$ and thus $S$ (resp. $T$) is independent of $H_{t-1}$. Then by definition, if $|x_t|=1$ then we have exactly $\mu_t=\rm{Com}(\mu_{t-1})$, and if $|x_t|=2$ then $\mu_t=\rm{Piv}(\mu_{t-1})$. We then apply \Cref{lem:random_walk_mixing} (noting that the $x_i$ are disjoint so $m = |\bigcup_i x_i|$) to conclude
\begin{align*}
    \Big|\PP\Big[(G_\ell\triangle G)[U]=I_k\ \Big|\ G[\hat V]=\hat G\Big]-2^{-{k\choose 2}}\Big| = |\mu_\ell(I_k) - \mu_{\rm{unif}}(I_k)|\ 
    &\leq\ 2^{2k-\binom{k}{2} - |\bigcup_i x_i|},
\end{align*}
as desired.
\end{proof}

\subsection{Proof of \texorpdfstring{\cref{thm:vm_universal}}{Theorem 1.1}}

We now use the reordering \cref{lem:minor-reordering} to show that when $J,J'$ are sufficiently different, the graphs $G_J$ and $G_{J'}$ are sufficiently uncorrelated by applying the random walk described in \cref{section:randomwalks}.

\begin{lemma}\label{lem:JJ' are mixed} Let $G\sim \bG(n+k,1/2)$, $U\subseteq V(G)$ with $|U|=k$, and let $J,J'\subseteq [n]$ with $|J \triangle J'| > 2k$. Defining $G_J,G_{J'}$ as in Equation \eqref{eq:GJ defn}, we have
\[
\big|\PP[G_J[U]=G_{J'}[U]]-2^{-{k\choose 2}}\big|\leq 2^{2k - \binom{k}{2} -|J\symdif J'|}
\]
\end{lemma}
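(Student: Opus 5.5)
We reduce the comparison of $G_J$ and $G_{J'}$ to a single mixing walk started from $G_J$, and then invoke \cref{lem:complementing_mixing}. Let $\hat V=V(G)\setminus U=\{v_1,\ldots,v_n\}$. Because local complementation is an involution,
\[
G_{J'}=G_J*v_{i_{|J|}}*\cdots*v_{i_1}*v_{j_1}*\cdots*v_{j_{|J'|}}.
\]
Write $(w_s)_{s}$ for this concatenated sequence of vertices of $\hat V$. An index in $J\symdif J'$ appears exactly once in $(w_s)_s$, and an index in $J\cap J'$ appears exactly twice. Set $\tilde G\coloneqq G_J$. By \cref{prop:complementation preserves random}, applied $|J|$ times, $\tilde G\sim\bG(n+k,1/2)$.

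We condition on $\tilde G[\hat V]=\hat G$ for an arbitrary graph $\hat G$ on $\hat V$. Apply \cref{lem:minor-reordering} with this $\hat G$ and the sequence $(w_s)_s$. This gives a sequence $x_1,\ldots,x_r$ of pairwise disjoint singletons and pairs from $\hat V$, depending only on $\hat G$, such that
\[
\tilde G*(w_s)_s-\hat V=\tilde G\circ(x_i)_{i=1}^r-\hat V .
\]
Restricting to $U$, this says $G_{J'}[U]=(\tilde G\circ(x_i)_i)[U]$. By property (III), every vertex $v_i$ with $i\in J\symdif J'$ lies in some $x_j$, so $|\bigcup_i x_i|\ge |J\symdif J'|>2k$. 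The sequence $(x_i)_i$ is well-defined on $\hat G$: the pivots in it are performed on edges, and the reordering lemma produces a sequence for which this holds. We now apply \cref{lem:complementing_mixing} to $\tilde G$ (which is uniform) with this $\hat G$ and $(x_i)_i$. It gives
\[
\Big|\PP\big[G_{J'}[U]=G_J[U]\,\big|\,\tilde G[\hat V]=\hat G\big]-2^{-\binom k2}\Big|\le 2^{2k-\binom k2-|\bigcup_i x_i|}\le 2^{2k-\binom k2-|J\symdif J'|}.
\]
Averaging over $\hat G$ with the triangle inequality then gives the claimed bound.

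The main point to check is the legitimacy of conditioning on $\tilde G[\hat V]$ rather than on $G[\hat V]$. The reordering depends on $G_J[\hat V]$, which is a function of the original $G$, including its edges into $U$. Replacing $G$ by $\tilde G=G_J$, which is itself uniform, removes this issue: conditioning on $\tilde G[\hat V]$ is exactly the setting of \cref{lem:complementing_mixing}, and $G_J[U]=\tilde G[U]$. The event $\{G_J[U]=G_{J'}[U]\}$ is expressed entirely in terms of $\tilde G$, so no information about the original $G$ is needed.

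One technical detail remains. Property (III) as stated covers vertices appearing exactly once. Vertices in $J\cap J'$ may or may not appear in the $x_j$, but that only increases $|\bigcup_i x_i|$, which helps.
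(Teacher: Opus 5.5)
Your proof is correct and follows the paper's argument: write $G_{J'}$ as $G_J$ followed by the complementations $v_{i_{|J|}},\ldots,v_{i_1},v_{j_1},\ldots,v_{j_{|J'|}}$, apply the reordering lemma after conditioning on the induced graph on $\hat V$, use property (III) to get $|\bigcup_i x_i|\ge|J\symdif J'|$, invoke \cref{lem:complementing_mixing} for the uniform graph $G_J$, and average over $\hat V$. One small correction to your commentary: $G_J[\hat V]$ does not depend on edges into $U$, because $(G*v)[\hat V]=G[\hat V]*v$ for $v\in\hat V$, so conditioning on $G[\hat V]$ (as the paper does) is equivalent to conditioning on $G_J[\hat V]$ (as you do), and your argument is unaffected.
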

\begin{proof}
Let $J=\{i_1<\ldots<i_\ell\}$ and $J'=\{j_1<\ldots<j_{\ell'}\}$. Observe that
\begin{equation}\label{eq:GJ'fromGJ}
G_{J'}=G_J*v_{i_{\ell}}*\cdots *v_{i_2}*v_{i_1}*v_{j_1}*v_{j_2}*\cdots *v_{j_{\ell'}}.
\end{equation}
Fix $\hat G\in \cG_n$, and condition on the event $G[\hat V]=\hat G$, which depends only on the edges inside $\hat V$. By the reordering \cref{lem:minor-reordering}, there is a sequence $(x_i)_{i=1}^r$ (depending on $\hat G$) with the three properties from that lemma such that $G_{J'}[U]=G_J\circ (x_i)_{i=1}^r[U]$. By property (III), we note that $|\bigcup_ix_i|\geq|J\symdif J'|$, since a vertex $v_i$ appears exactly once in Equation \eqref{eq:GJ'fromGJ} if $i\in J\symdif J'$. Let $\hat G_J=\hat G*(v_{i_t})_{t=1}^{\ell}$, so $G[\hat V]=\hat G$ if and only if $G_J[\hat V]=\hat G_J$. Then by \cref{lem:complementing_mixing},
\begin{align*}
\Big|\PP\Big[G_{J'}[U]=G_J[U]\ \Big|\ G[\hat V]=\hat G\Big]-2^{-{k\choose 2}}\Big|&=
\Big|\PP\Big[G_J\circ (x_i)_{i=1}^r[U]=G_J[U]\ \Big|\ G_J[\hat V]=\hat G_J\Big]-2^{-{k\choose 2}}\Big|\\
&\leq 2^{2k - \binom{k}{2} - |\bigcup_{i\leq r}x_i|}\\
&\leq 2^{2k - \binom{k}{2} -|J\symdif J'|},
\end{align*}
as desired. We apply the law of total probability over all $\hat G\in \cG_n$ to obtain the unconditioned bound.
\end{proof}

Before we complete the second moment method calculation, we need one more fact about the independence implicit in our construction.

\begin{proposition}\label{prop:symdiffindep}Let $G\sim \bG(n+k,1/2)$, $U\subseteq V(G)$ with $|U|=k$, and let $J,J'\subseteq [n]$. Defining $G_J,G_{J'}$ as in Equation \eqref{eq:GJ defn}, the graphs $G_J[U]$ and $(G_J\symdif G_{J'})[U]$ are independent.
\end{proposition}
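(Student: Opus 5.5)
We condition on everything outside $\binom{U}{2}$ and show that, given that information, $G_J[U]$ is uniform while $(G_J\symdif G_{J'})[U]$ is already fixed. Independence of the two then follows.

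Write $\hat V=V(G)\setminus U$. Reveal $G[\hat V]$ together with all edges in $[\hat V,U]$, and let $\mathcal F$ denote this information. The edges of $G[U]$ are uniform and independent of $\mathcal F$.

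\textbf{Step 1: $(G_J\symdif G_{J'})[U]$ is $\mathcal F$-measurable.} We claim that for any sequence of local complementations at vertices of $\hat V$, the graph $(G*v_1*\cdots*v_t\symdif G)[U]$ is determined by $\mathcal F$. Argue by induction on $t$. Complementing at $v\in\hat V$ replaces the current graph $G'$ by $G'\symdif K_{N_{G'}(v)}$. Since $v\notin U$, the neighbourhood $N_{G'}(v)$ consists of edges at $v$, all lying in $\binom{\hat V}{2}\cup[\hat V,U]$. By induction, these edges of $G'$ are functions of $\mathcal F$: each step toggles edges according to a neighbourhood of a vertex in $\hat V$, and the toggled pairs inside $\binom{\hat V}{2}\cup[\hat V,U]$ depend only on $\mathcal F$-measurable data. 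More precisely, the set of edges of $G'$ outside $\binom U2$ is a function of $\mathcal F$, because $K_{N(v)}$ restricted to those pairs depends only on $N(v)$. Thus each step XORs a fixed, $\mathcal F$-measurable graph $K_{N(v)\cap U}$ onto the $U$-part. Consequently $G_J[U]=G[U]\symdif A_J$ and $G_{J'}[U]=G[U]\symdif A_{J'}$, where $A_J,A_{J'}$ are $\mathcal F$-measurable graphs on $U$. Hence $(G_J\symdif G_{J'})[U]=A_J\symdif A_{J'}$ is $\mathcal F$-measurable.

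\textbf{Step 2: $G_J[U]$ is uniform given $\mathcal F$.} Conditional on $\mathcal F$, we have $G_J[U]=G[U]\symdif A_J$ with $A_J$ fixed and $G[U]$ uniform. Translation by a fixed element preserves the uniform measure, so $G_J[U]$ is uniform on $\cG_k$. This is essentially \cref{lem:complementing_independence} with $W=\hat V$.

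\textbf{Step 3: conclude independence.} For any graphs $H$ and $D$ on $U$,
\[
\PP\big[G_J[U]=H,\ (A_J\symdif A_{J'})=D\big]=\Ex\big[\bbone_{A_J\symdif A_{J'}=D}\,\PP[G_J[U]=H\mid\mathcal F]\big]=2^{-\binom k2}\,\PP[A_J\symdif A_{J'}=D].
\]
This factorization is exactly the claimed independence.

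The only delicate point is the induction in Step 1: the edges inside $U$ never influence later neighbourhoods restricted to $\hat V\cup U$ for pivots at vertices in $\hat V$, since $N(v)$ for $v\in\hat V$ involves no pairs in $\binom U2$. This is immediate once stated carefully.
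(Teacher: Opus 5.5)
Your proof is correct and is essentially the paper's argument. The paper samples $G_J$ uniformly (via \cref{prop:complementation preserves random}) and notes that $G_J\symdif G_{J'}$ depends only on the edges of $G_J$ outside $\binom{U}{2}$, because each local complementation at some $v_i\notin U$ only reads edges at $v_i$. You make the same observation with $G$ as the base graph and condition on $\mathcal F$, so you additionally check by translation invariance that $G_J[U]=G[U]\symdif A_J$ is uniform given $\mathcal F$ (your closing remark should say ``local complementations'' rather than ``pivots'').
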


\begin{proof} We use a different but equivalent sampling procedure to obtain $G_J$ and $G_{J'}$. First, sample $G_J\sim \bG(n,1/2)$ uniformly at random, which is the correct distribution by \cref{prop:complementation preserves random}. In particular, the sets $E(G_J[U])$ and $E(G_J)\setminus E(G_J[U])$ are independent. Next, compute $G_{J'}$ directly using Equation \eqref{eq:GJ'fromGJ}. Observe that the set $E(G_J\symdif G_{J'})$ is a function only of $E(G_J)\setminus E(G_J[U])$ since this procedure for obtaining $G_{J'}$ depends on the edges incident to the $v_i$'s. Thus the graphs $G_J[U]$ and $(G_J\symdif G_{J'})[U]$ are independent, as desired.
\end{proof}

We now perform the second moment calculation.

\begin{lemma}\label{lem:vm_high_probability} Let $G\sim \bG(n+k,1/2)$. Fix a vertex set $U\subseteq V(G)$ with $|U|=k$ and a graph $H$ on $U$. Then $H$ is a vertex-minor of $G$ with probability at least $1-2^{2k}(3/4)^n - 2^{-n + \binom{k}{2} + 2k\log_2n}$.
\end{lemma}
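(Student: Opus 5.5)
We apply the second moment method to $X=\sum_{J\subseteq[n]}X_J$, where $X_J=\bbone_{G_J[U]=H}$ and $G_J$ is defined by \eqref{eq:GJ defn} with $V(G)=\{v_1,\dots,v_n\}\disjcup U$. Each $G_J$ is a vertex-minor of $G$ on $U$ after deleting $\hat V$, so $X>0$ implies that $H$ is a vertex-minor of $G$. By \cref{prop:complementation preserves random}, $\Ex[X]=2^{n-\binom k2}$. By \cref{lem:chebyshev}, it suffices to show that $\var[X]/\Ex[X]^2$ is at most the two error terms in the statement.

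We write $\var[X]=\sum_{J,J'}\big(\PP[X_JX_{J'}=1]-2^{-2\binom k2}\big)$. By \cref{prop:symdiffindep}, $G_J[U]$ is independent of $(G_J\symdif G_{J'})[U]$, and $G_J[U]$ is uniform. Hence
\[
\PP[X_JX_{J'}=1]=\PP[G_J[U]=H]\,\PP[(G_J\symdif G_{J'})[U]=I_k]=2^{-\binom k2}\,\PP[G_J[U]=G_{J'}[U]].
\]
Each covariance term is therefore $2^{-\binom k2}\big(\PP[G_J[U]=G_{J'}[U]]-2^{-\binom k2}\big)$. We split the pairs according to $d=|J\symdif J'|$.

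\textbf{Pairs with $d>2k$.} By \cref{lem:JJ' are mixed}, each such term is at most $2^{-\binom k2}\cdot 2^{2k-\binom k2-d}$. Summing over $J$ and over $J'$ at distance $d$ gives
\[
2^{n}\,2^{2k-2\binom k2}\sum_{d}\binom nd 2^{-d}\le 2^{n}\,2^{2k-2\binom k2}(3/2)^n .
\]
Dividing by $\Ex[X]^2=2^{2n-2\binom k2}$ yields $2^{2k}(3/4)^n$, which is the first error term.

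\textbf{Pairs with $d\le 2k$.} Here we use the trivial bound: each term is at most $2^{-\binom k2}$. The number of such pairs is $2^n\sum_{d\le 2k}\binom nd\le 2^n n^{2k}$ (for $n\ge2$, absorbing the small constant). The contribution is at most $2^{n-\binom k2}n^{2k}$. Dividing by $\Ex[X]^2$ gives $2^{-n+\binom k2+2k\log_2 n}$, which is the second error term.

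Chebyshev's inequality then gives $\PP[X=0]\le 2^{2k}(3/4)^n+2^{-n+\binom k2+2k\log_2 n}$. The substantive work is already contained in \cref{lem:JJ' are mixed} and \cref{prop:symdiffindep}. The only delicate point is the factorisation of $\PP[X_JX_{J'}=1]$, which needs exactly the independence in \cref{prop:symdiffindep}; the rest is bookkeeping of binomial sums.
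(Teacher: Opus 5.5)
Your proposal is correct and follows the paper's proof essentially line for line. The paper uses the same three ingredients: it factorises $\Ex[X_JX_{J'}]$ via \cref{prop:symdiffindep}, splits the pairs at $|J\symdif J'|=2k$ (applying \cref{lem:JJ' are mixed} above the threshold and the trivial bound below it), and finishes with Chebyshev. One small point: the estimate $\sum_{d\le 2k}\binom nd\le n^{2k}$ holds exactly for $n\ge 2$, which the paper justifies via a surjection from $[n]^{2k}$, so there is no constant to absorb.
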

\begin{proof}
For all $J\subseteq [n]$, construct $G_J$ as in \eqref{eq:GJ defn}. Let $X_J$ be the indicator that $G_J[U] = H$, and let $X = \sum_{J \subseteq [n]} X_J$. Recall that $\Ex[X]=2^{n-{k\choose 2}}$ by linearity of expectation and \cref{prop:complementation preserves random}. We compute
\begin{align*}
\var[X]&=\sum_{J,J'\subseteq [n]}\Ex[X_JX_{J'}]-\Ex[X_J]\Ex[X_{J'}]\\
&=\sum_{J,J'\subseteq [n]}\PP[G_J[U]=H\wedge G_J[U]=G_{J'}[U]]-\PP[G_J[U]=H]\PP[G_{J'}[U]=H]\\
&=\sum_{J,J'\subseteq [n]} \PP[G_J[U]=H] \PP[G_J[U]=G_{J'}[U]]-\PP[G_J[U]=H]\PP[G_{J'}[U]=H],
\end{align*}
where we use the independence from \cref{prop:symdiffindep}. Recall that $G_J[U],G_{J'}[U]$ are both uniformly random, so by \cref{lem:JJ' are mixed} we have
\begin{align*}
\var[X]&=\sum_{J\subseteq [n]}2^{-{k\choose 2}} \sum_{J'\subseteq [n]}\big(\PP[G_J[U]=G_{J'}[U]]-2^{-{k\choose 2}}\big)\\
&= \sum_{J\subseteq [n]}2^{-{k\choose 2}} \sum_{J' \triangle J\subseteq [n]}\big(\PP[G_J[U]=G_{J'}[U]]-2^{-{k\choose 2}}\big)\\
&\leq \sum_{J\subseteq [n]}2^{-{k\choose 2}} \left(\sum_{\substack{J' \symdif J\subseteq [n]\\ |J' \symdif J| > 2k}}2^{2k - \binom{k}{2} - |J' \symdif J|} + \sum_{\substack{J' \symdif J\subseteq [n]\\ |J' \symdif J| \leq 2k}}1\right)\\
&= \sum_{J \subseteq [n]} 2^{-\binom{k}{2}} \left(\sum_{i=2k+1}^n \binom{n}{i}2^{2k-\binom{k}{2} - i} + \sum_{i=0}^{2k} \binom{n}{i}\right)\\
& \leq \sum_{J \subseteq [n]} 2^{-\binom{k}{2}} \left(2^{2k - \binom{k}{2}}\sum_{i=0}^n \binom{n}{i}2^{-i} + n^{2k}\right)
\\&\leq \sum_{J \subseteq [n]} 2^{-\binom{k}{2}} \left(2^{2k-\binom{k}{2}}(3/2)^n + n^{2k}\right)\\
&= 2^{n - \binom{k}{2}}\left(2^{2k-\binom{k}{2}}(3/2)^n + n^{2k}\right).
\end{align*}
(To obtain the bound $\sum_{i=0}^{2k} \binom{n}{i}\leq  n^{2k}$, consider for example the surjection $f:[n]^{2k} \rightarrow \binom{n}{\leq 2k}$ given by $f(x_1, \dots, x_{2k}) = \{x_1, \dots, x_{2k}\}$, except $f(2k, 2k-1, \ldots, 1)=\emptyset$.)
In conclusion, by Chebyshev's inequality (\cref{lem:chebyshev}) we have
\begin{align*}
    \PP[X = 0] &\leq \frac{\var[X]}{\Ex[X]^2}
    \leq \frac{2^{n - \binom{k}{2}}\left(2^{2k-\binom{k}{2}}(3/2)^n + n^{2k}\right)}{2^{2n - 2\binom{k}{2}}}
    = 2^{2k}(3/4)^n + 2^{-n+\binom{k}{2}}n^{2k}
\end{align*}
as desired.
\end{proof}

We are now ready to prove our main result via the union bound over all choices of $U \subseteq V(G)$ and graphs $H$ with vertex set $U$. We first restate the theorem for convenience.

\vmunivthm*

\begin{proof}
Let $G\sim\bG(n,p)$, where $n = (1 + c) \frac{1}{2\log_2(4/3)}k^2$. Here we allow $c=c(k)$ to be increasing with $k$, but it is lower bounded by some positive constant. 
By \cref{lem:vm_high_probability}, the probability that a given graph $H$ on $k$ given vertices of $V(G)$ is not a vertex-minor of $G$ is at most $2^{2k}(3/4)^{n-k} + 2^{-n+k+\binom{k}{2}}(n - k)^{2k}=2^{2k}(3/4)^{(1+o(1))n}$. 

By the union bound, we have
\begin{align}\label{eq:vm-final-bound}
    \PP[G\text{ is not $k$-vertex-minor universal}] &\leq {n\choose k}2^{{k\choose 2}}\cdot \left(2^{2k}(3/4)^{(1+o(1))n}\right)\\
    &\leq \exp_2\left(k\log_2 n + \frac{k^2}{2} + 2k - (1 + o(1))\log_2(4/3)n\right)\notag\\
    &= \exp_2\left(O(k\log(ck)) + \frac{k^2}{2} - (1 + o(1))\frac{1 + c}{2}k^2\right)\notag\\
    &= \exp_2\left(O(k\log(ck))-\left(1 + o(1)\right)ck^2/2\right).\notag
\end{align}
Thus, whenever $n = (1+c)\frac{1}{2\log_2(4/3)}k^2$, we have
$\PP[G\text{ is $k$-vm-universal}]\geq 1-2^{-(1+o(1))c k^2/2},$
as desired. 
\end{proof}

\subsection{Proof of reordering \texorpdfstring{\cref{lem:minor-reordering}}{Lemma 2.3}}\label{section:proofofreorderinglemma}

First, we restate the vertex-minor reordering lemma:

\minorreorderinglemma*

\noindent First, we prove a slightly stronger lemma by induction which gives a sequence that depends on the host graph $G$.

\begin{lemma}\label{lem:inductive_reordering}
For any graph $\gn$ with $V(G) = \hat V \disjcup U$ and $v_1,\ldots, v_\ell\in \hat V$, and for any special vertex $v \in \hat V$, there exists a sequence $(y_i)_{i=1}^r$ satisfying (I) and (II) for $\gn$ such that one of the following holds:
    \begin{enumerate}
        \item\label{itm:v not present} we have $v \not\in \bigcup_i y_i$,
        \item\label{itm:v first} we have $v \in y_1$, or
        \item\label{itm:v second} we have $y_2 = \{v\}$ and $y_1 = \{w\}$ for some $w \in \hat V$ with $vw \in E(\gn)$.
    \end{enumerate}
\end{lemma}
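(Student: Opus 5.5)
The plan is to argue by induction on $|\hat V|$, peeling off the special vertex $v$ with \cref{thm:structure of minors}. The base case $\hat V=\varnothing$ is trivial: take the empty sequence. Throughout I will use one elementary observation. If $w\neq v$, then local complementation at $w$ commutes with deleting $v$, that is, $(G-v)*w=(G*v')\ldots$ more precisely $(G-v)*w=(G*w)-v$. Consequently, any sequence of operations performed on $G-v$, $G*v-v$ or $G\times vu-v$ at vertices other than $v$ can be lifted to $G$ with the deletion of $v$ postponed to the end.

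\textbf{Refined form of \cref{thm:structure of minors} (main obstacle).} Set $H=G*v_1*\cdots*v_\ell-\hat V$, a graph on $U$. I need an \emph{exact} version of the theorem, with every complementation inside $\hat V$. Concretely, I want
\[
H=G'*w_1*\cdots*w_s-(\hat V\setminus\{v\}),\qquad w_j\in\hat V\setminus\{v\},
\]
where $G'\in\{G-v,\;G*v-v,\;G\times vu-v\}$ for some $u\in N_G(v)$. The statement as quoted only gives ``vertex-minor'', which a priori allows complementations at vertices of $U$, and those would change $H$.

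I would prove this refinement by induction on $\ell$, reading the sequence from the left. If $v$ does not occur, the first option holds with $w_j=v_j$. Otherwise I would push occurrences of $v$ to the front using the identities $G*a*b*a=G\times ab$ for $ab\in E(G)$, the involution $G*a*a=G$, and \cref{lem:cancel pivots}. Together with the remark after \cref{thm:structure of minors}, these let me merge repeated pivots through $v$ into a single pivot $G\times vu$. This bookkeeping is where I expect most of the work, in particular checking that the accompanying edge conditions hold at each step.

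\textbf{Induction step.} Apply the induction hypothesis to $G'$ with vertex set $(\hat V\setminus\{v\})\disjcup U$ and the sequence $(w_j)$, and obtain a disjoint sequence $(y'_i)$. In the two easy cases this finishes the proof.
\begin{itemize}
\item If $G'=G-v$, take $y=y'$. Then $v\notin\bigcup_i y_i$, which is alternative \ref{itm:v not present}.
\item If $G'=G*v-v$, take $y=(\{v\},y'_1,y'_2,\ldots)$, which is alternative \ref{itm:v first}.
\end{itemize}
In both cases (I) follows from the postponement observation and (II) is clear.

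\textbf{The pivot case $G'=G\times vu-v$.} Here I would choose $u$ as the special vertex in the inductive call, so that $(y'_i)$ falls into one of the three alternatives relative to $u$.
\begin{itemize}
\item If $u\notin\bigcup_i y'_i$, set $y=(\{v,u\},y')$. This satisfies (II) and alternative \ref{itm:v first}.
\item If $y'_1=\{u,u'\}$, then by \cref{lem:cancel pivots} $G\times vu\times uu'=G\times vu'$, with $vu'\in E(G)$. Set $y=(\{v,u'\},y'_2,\ldots)$, again alternative \ref{itm:v first}.
\item If $y'_1=\{u\}$, then $G\times vu*u=G*u*v*u*u=G*u*v$. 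Set $y=(\{u\},\{v\},y'_2,\ldots)$, which is alternative \ref{itm:v second} with $w=u$ and $uv\in E(G)$.
\item If $y'_1=\{w\}$ and $y'_2=\{u\}$ with $uw\in E(G\times vu)$, rewrite $G\times vu*w*u$ using the same identities. The aim is to land in one of the forms $(\{v,\cdot\},\ldots)$, $(\{v\},\ldots)$ or $(\{\cdot\},\{v\},\ldots)$ after deleting $v$. This subcase is the one I would check most carefully. If it resists, I would strengthen the inductive statement with further allowed prefix shapes.
\end{itemize}
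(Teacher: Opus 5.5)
You follow the paper's route. That means the trichotomy at $v$ from \cref{thm:structure of minors}, the pivot partner as the special vertex of the inductive call, and the same four subcases; your first three subcases are handled exactly as in the paper. The genuine gap is the fourth subcase ($y'_1=\{w\}$, $y'_2=\{u\}$, $uw\in E(G\times vu)$). You leave it open and offer to repair it by strengthening the induction. No strengthening is needed, because alternative 3 is in the statement precisely to close this case. Since $uw\in E(G\times vu)$, \cref{lem:cancel pivots} gives $vw\in E(G)$ and $G\times vu\times uw=G\times vw$. Hence
\[
G\times vu*w*u=(G\times vu*w*u*w)*w=(G\times vu\times uw)*w=G\times vw*w=G*w*v*w*w=G*w*v .
\]
So $y=(\{w\},\{v\},y'_3,y'_4,\ldots)$ satisfies (I) and (II), and alternative 3 holds with $vw\in E(G)$. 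The vertex $u$ simply drops out, which is harmless because (III) is not part of this lemma.

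Your worry about exactness is well founded. To apply the induction hypothesis you need $H$ to arise from $G-v$, $G*v-v$ or $G\times vu-v$ by complementations at vertices of $\hat V\setminus\{v\}$ only. In the third case you also need $u\in\hat V$; otherwise $\{v,u\}$ is not an admissible entry and $u$ cannot serve as the special vertex. The paper applies \cref{thm:structure of minors} and then the induction hypothesis without comment, so it uses this exact form implicitly. Your text, however, only announces a proof of it.

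It can be supplied by induction on $\ell$. Peel off $a=v_1$ and check that each of $G*a-v$, $G*a*v-v$, $G*a\times vu-v$ is one of $G-v$, $G*v-v$, $G\times vu'-v$ (with $u'\in N_G(v)\cap\hat V$), followed by complementations at $\hat V\setminus\{v\}$. The nontrivial identities needed are:
\begin{itemize}
\item $G*v\times vu-v=(G\times vu-v)*u$;
\item for $a\in N_G(v)$: $G*a*v-v=(G\times va-v)*a$ and $G*a\times va-v=(G*v-v)*a$;
\item for $a\notin N_G(v)\cup\{v\}$: $G*a*v=G*v*a$, while $G*a\times vu-v$ equals $(G\times vu-v)*a$ if $au\notin E(G)$ and $(G\times vu-v)*u*a$ if $au\in E(G)$;
\item $G\times vu'-v=(G\times vu-v)\times uu'$, which lets you change the partner.
\end{itemize}
Every partner produced this way lies in $\hat V$.
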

\begin{proof}
Let $H = \gn * v_1 * \cdots * v_\ell[U]$, so $H$ is a vertex-minor of $\gn$. We proceed by induction on $n:=|V(G)|$; the base case $n=1$ is trivial. Let $v \in \hat V$ be arbitrary.

By \cref{thm:structure of minors}, $H$ is a vertex-minor of $\gn - v$, $\gn*v - v$, or $\gn \times vw - v$ for some $w$ adjacent to $v$. In the first case, we apply the induction hypothesis to $\gn - v$, and clearly the resulting sequence $(y_i)_i$ satisfies item~\ref{itm:v not present}. In the second case, we apply the induction hypothesis to $\gn * v - v$ to obtain the sequence $(y_i)_{i \geq 2}$. Then clearly setting $y_1 = \{v\}$ satisfies item~\ref{itm:v first}.

In the final case, we apply the induction hypothesis to $\gn \times vw - v$ for some $w$ adjacent to $v$ such that $w$ is the special vertex to obtain the sequence $(y_i)_{i \geq 2}$. We then split into cases based on the location of $w$. If $w \not\in \bigcup_{i \geq 2} y_i$, then setting $y_1 = \{v,w\}$ gives the desired sequence $(y_i)_{i\geq 1}$ satisfying item~\ref{itm:v first}. If $y_2 = \{w\}$, then note $\gn \times vw * w = \gn *w * v$, and so replacing $y_2$ with $\{v\}$ and setting $y_1 = \{w\}$ gives the desired sequence $(y_i)_{i\geq 1}$ satisfying item~\ref{itm:v second}. If $y_2 = \{w, u\}$, then by \cref{lem:cancel pivots}, $\gn \times vw \times wu = \gn \times vu$, and so replacing $y_2$ with $\{v, u\}$ yields the desired sequence $(y_i)_{i \geq 2}$ satisfying item~\ref{itm:v first}. Finally, if $y_2 = \{u\}, y_3 = \{w\}$ with $uw\in E(G\times vw)$, then we use the fact that $\gn \times vw * u * w = \gn \times vw \times uw * u = \gn \times vu * u = \gn * u * v$, and so by replacing $y_2, y_3$ with $\{u\}, \{v\}$ respectively, we get the desired sequence $(y_i)_{i\geq 2}$ satisfying item~\ref{itm:v second}.
\end{proof}

\begin{proof}[Proof of \cref{lem:minor-reordering}]
Let $\hat V$ be a set, and fix a sequence $v_1,\ldots, v_\ell\in \hat V$ and a graph $\hat G$ on vertex set $\hat V$. To show that we can pick $(x_i)_i$ to depend only on $\hat G$, we apply \cref{lem:inductive_reordering} to a universal ``gadget" graph defined as follows. Let $\gadget$ have vertex set $\hat V\cup [3\cdot 2^{3|\hat V|}]$ by setting $\gadget[\hat V]=\hat G$, placing $2^{3|\hat V|}$ disjoint copies of the 3 vertex path $P_3$, and adding edges from $\hat V$ to the copies of $P_3$ such that all of the $2^{3|\hat V|}$ possible sets of edges occur between $\hat V$ and a copy of $P_3$.
This has the property that for any graph $G$ on vertex set $\hat V\cup \{u_1,u_2\}$ with $G[\hat V]=\hat G$, there exist $j_1,j_2$ such that $\gadget[\hat V\cup\{j_1,j_2\}]$ is isomorphic to $G$ with the isomorphism $\hat V\mapsto \hat V$ and $u_i\mapsto j_i$. Such an isomorphism can be found by taking the copy of $P_3$ with the correct adjacency to $\hat V$ and letting $j_1, j_2$ be either adjacent or nonadjacent vertices in the $P_3$.

\begin{numclaim}{\ref{lem:minor-reordering}.1}For all $k\geq 2$, if the sequence $(x_i)_{i=1}^r$ satisfies properties (I) and (II) for the gadget $\gadget$, then it satisfies property (III).
\end{numclaim}
\begin{subproof}
Suppose $v\in \hat V$ appears exactly once in $\{v_i\}_i$ in position $i^*$, and suppose $(x_i)_i$ does not contain $v$. Expanding out $\circ (x_i)_i=*(w_i)_{i=1}^m$ as a sequence of complementations for $w_i\in \hat V\setminus\{v\}$, for $\gadget'=\gadget*v_1*\cdots *v_{i^*-1}$, we have
\begin{equation}\label{eq:gadget}\gadget'*v=\gadget'*v_{i^*-1}*\cdots *v_1*w_1*\cdots *w_m*v_\ell*\cdots *v_{i^*+1}.\end{equation}
We show this is impossible. Let $\RHS$ denote the right-hand side of \eqref{eq:gadget}. Choose vertices $j_1,j_2>n$ such that $j_1j_2\notin E(\gadget)$ and $N_{\gadget}(j_1)\cap \hat V=N_{\gadget}(j_2)\cap \hat V=\{v\}$; these vertices exist by definition of $\gadget$. Then we note $j_1j_2 \not\in E(\gadget')$ and $N_{\gadget'}(j_1) = N_{\gadget}(j_1)$, $N_{\gadget'}(j_2) = N_{\gadget}(j_2)$.
Then $j_1j_2\in E(\gadget'*v)$ but  $j_1j_2\notin E(\RHS)$, because no complementation step can add an edge incident to $j_1,j_2$. This contradicts \eqref{eq:gadget}, so $(x_i)_i$ contains $v$.
\end{subproof}

We now show that if a sequence satisfies \cref{lem:inductive_reordering} for $\gadget$, then it satisfies \cref{lem:minor-reordering}. Let $(x_i)_{i=1}^r$ be the sequence given by \cref{lem:inductive_reordering} for the gadget $\gadget$. Let $G$ be an arbitrary graph with $\hat V\subseteq V(G)$ and $G[\hat V]=\hat G$. We need to show that the sequence satisfies condition (I) for $G$. For this it suffices to show that for all $u_1,u_2\in V(G)\setminus \hat V$, we have $u_1u_2\in E(G*(v_i)_{i=1}^\ell)$ if and only if $u_1u_2\in E(G\circ (x_i)_{i=1}^j)$. Indeed, letting $G'=G[\hat V\cup \{u_1,u_2\}]$, there exist $j_1,j_2\in V(\gadget)$ such that $\gadget[\hat V\cup \{j_1,j_2\}]$ is isomorphic to $G$ with isomorphism $\hat V\mapsto \hat V$ and $u_i\mapsto j_i$. In particular, this means $G'*(v_i)_i-\hat V=G'\circ(x_i)_i-\hat V$, so we conclude that $u_1u_2\in E(G'*(v_i)_i)$ if and only if $u_1u_2\in E(G'*(v_i)_i)$, as desired. Thus we conclude $G*(v_i)_i-\hat V=G\circ(x_i)_i-\hat V$, so all three properties hold.
\end{proof}

\section{Universality for pivot-minors and binary matroids}\label{section:Universality for pivot-minors and binary matroids}

Although pivot-minors are not as relevant for quantum information theory, they are much-studied both in a purely combinatorial setting as well as in the context of matroids. 
For each binary matroid, we can associate a bipartite graph called the \textit{fundamental graph}. It turns out that pivot-minors on this bipartite graph correspond exactly to minors of this binary matroid; see Section~\ref{section:pivot-minors and matroids} for more details. 

Clearly a pivot-minor of $G$ is also a vertex-minor, so pivot-minors on general graphs serve as a common generalization of both vertex-minors and minors in binary matroids. For example, Geelen and Oum~\cite{geelen2009circle} gave a forbidden pivot-minor characterization of circle graphs (which are analogogus to planar graphs for vertex-minors~\cite{de1981local}), which implies both Bouchet's forbidden vertex-minor characterization of circle graphs \cite{bouchet1994circle} and Kuratowski's Theorem. As another example, Campbell et. al.~\cite{Cam+} proved the Erd\H{o}s-P\'{o}sa property for circle graphs as vertex-minors, and were able to extend their techniques to prove the Erd\H{o}s-P\'{o}sa property for planar multigraphs in binary matroids. In a similar way, our \cref{thm:ER_pm_universal} generalizes \cref{thm:vm_universal} to pivot-minors, but with a worse constant factor.

\cref{thm:ER_pm_universal} does not imply an analogous result for binary matroids because the uniformly random graph is almost never bipartite. However, we are able to use similar methods to prove a bipartite version of pivot-minor universality, which we now describe.

Observe that an equivalent way to view the pivot operation is that for $vu \in E(G)$, $G \times vu$ is the graph obtained from $G$ by taking the symmetric difference of $G$ with the complete tripartite graph with parts $N(v) \cap N(u),\  N(v) \setminus (N(u) \cup \{u\}),\  N(u) \setminus (N(v) \cup \{v\})$, and then swapping the labels of $u$ and $v$.
An important consequence of this description is the following.
\begin{observation}
    If $G$ is bipartite and $vu \in E(G)$, then $G \times vu$ is bipartite.
\end{observation}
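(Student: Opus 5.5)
The plan is to prove the observation directly from the description of the pivot given just above it. Write $G \times vu$ as $G \triangle K$, where $K$ is the complete tripartite graph on the parts $A = N(v)\cap N(u)$, $B = N(v)\setminus(N(u)\cup\{u\})$ and $C = N(u)\setminus(N(v)\cup\{v\})$, followed by swapping the labels of $u$ and $v$. Relabeling two vertices preserves bipartiteness, so it suffices to show that $G\triangle K$ is bipartite. We will in fact show that it is bipartite with the same bipartition as $G$.

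Fix a bipartition $(L,R)$ of $G$ with $v\in L$, and hence $u\in R$ since $vu\in E(G)$. The first step is to locate the three parts. Every neighbor of $v$ lies in $R$ and every neighbor of $u$ lies in $L$. So $A \subseteq R\cap L = \varnothing$, which means $A$ is empty. We also get $B \subseteq N(v)\subseteq R$ and $C\subseteq N(u)\subseteq L$.

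The second step is to read off $K$. Since $A$ is empty, $K$ is simply the complete bipartite graph between $B$ and $C$. Every edge of $K$ therefore joins a vertex of $R$ to a vertex of $L$. Taking the symmetric difference of $G$ with a set of edges all of which cross the partition $(L,R)$ yields a graph in which every edge still crosses $(L,R)$. Hence $G\triangle K$ is bipartite with bipartition $(L,R)$, and after the label swap $G\times vu$ is bipartite, with $u$ and $v$ exchanging sides.

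I do not expect a real obstacle here. The only point needing care is confirming that the tripartite description of $G\times vu$ stated above (which excludes $u$ and $v$ themselves from the parts) is correct. If desired, this can be double-checked by expanding $G*u*v*u$ directly. The bipartiteness argument itself is immediate once one notices that $A=\varnothing$.
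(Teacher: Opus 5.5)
Your proof is correct and follows the same route as the paper, which states the observation as an immediate consequence of the tripartite description of $G\times vu$. You make explicit the step the paper leaves implicit: in a bipartite graph $N(u)\cap N(v)=\varnothing$, so every toggled edge runs between $N(v)\setminus\{u\}\subseteq R$ and $N(u)\setminus\{v\}\subseteq L$, and the label swap then moves $u$ and $v$ to opposite sides, matching the paper's definition of pivoting on ordered bipartite graphs.
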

To work with the strongest possible notion of bipartite pivot-minor universality, which is necessary for the equivalence to labeled matroid-minor universality, we work in the setting of ordered bipartite graphs.
\begin{definition}
    An \emph{ordered bipartite graph} is a triple $G = (L,R,E)$ where $(L \cup R, E)$ is a bipartite graph with bipartition $L,R$. We define $L(G),R(G),E(G),$ and $V(G) = L(G) \cup R(G)$ in the natural way. We define the pivot operation on ordered bipartite graphs as follows. For $uv\in E(G)$, we define $G\times uv\coloneqq (L\symdif \{u,v\},R\symdif\{u,v\}, E'),$ where $E'$ is the edge set of the unordered bipartite graph $(L\cup R, E)\times uv$. For any ordered bipartite graph $H = (U_1, U_2, E_H)$ with $U_1 \cup U_2 \subseteq V(G)$, we say that $H$ is a (labeled) pivot-minor of $G$ (in the ordered sense) if $H$ can be obtained exactly by a sequence of vertex deletions and pivots. 
\end{definition}
Note that this is strictly stronger than only requiring that graph $(U_1\cup U_2,E_H)$ is a (labeled) pivot-minor of $(L\cup R,E)$, since we require the partitions of $H$ and $G$ to be aligned.

Moreover, recall that because we preserve vertex labels, graphs may be isomorphic without one being a pivot-minor of the other.

\begin{definition}
An ordered bipartite graph $G$ is \emph{bipartite $k$-pivot-minor universal} if any ordered bipartite graph $H$ on any $k$ vertices of $V(G)$ is a pivot-minor of $G$.
\end{definition}

Let $\bG(\ell,r,1/2)$ be the uniform distribution on ordered bipartite graphs on parts $L,R$ with $|L|=\ell$ and $|R|=r$, where each edge is included independently with probability $1/2$. More generally, for disjoint (possibly random) sets $L,R$, we define $\bG(L,R,1/2)$ to be a uniformly random ordered bipartite graph with parts $L,R$. 

\begin{theorem}
\label{thm:bip_pm_universal} Let $C=\frac{1}{2\log_2(8/7)}+\frac 14\approx 2.85$. 
For all $c>0$, if  $\ell\geq r\geq (1+c)Ck^2$ and $\ell \leq 2^{o(\sqrt r)}$, then $G\sim \bG(\ell, r,1/2)$ is bipartite $k$-pivot-minor universal with probability $1-2^{-(1+o(1))ck^2/4}$.
\end{theorem}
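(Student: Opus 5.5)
The plan is to rerun the second-moment argument behind \cref{thm:vm_universal}, now in the bipartite setting and using only pivots.

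\textbf{Setup.} Fix a $k$-set $U\subseteq V(G)$ and a target ordered bipartite graph $H=(U_1,U_2,E_H)$. Let $a=|U_1|$ and $b=|U_2|$, so $a+b=k$ and $H$ is determined by an $a\times b$ biadjacency matrix. There are at most $\binom{\ell+r}{k}2^k2^{k^2/4}$ choices of $(U,H)$. The factor $2^{k^2/4}$ is the source of the $+\frac14$ in $C$. The hypothesis $\ell\le 2^{o(\sqrt r)}$ makes $\binom{\ell+r}{k}\le 2^{k\log_2(\ell+r)}=2^{o(k^2)}$, so this count is harmless.

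\textbf{Step 1: fix the sides of $U$.} The vertices of $U$ whose side in $G$ disagrees with the side demanded by $H$ must each be pivoted once with a private outside neighbour. With high probability we can choose a set $Y$ of $O(k)$ fresh vertices to do this greedily. Because the rest of the random graph stays uniform after conditioning on the revealed edges (the bipartite analogue of \cref{lem:complementing_independence}), this step costs only $O(k)$ vertices. It reduces to the case where the sides of $U$ already match.

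\textbf{Step 2: the family of procedures.} Reveal the edges of $G-U$ only. In $G-U-Y$, find a matching $a_1b_1,\dots,a_mb_m$ with $a_i\in L$, $b_i\in R$. Since $\ell\ge r$ and the graph is a uniform random bipartite graph, we can take $m=(1-o(1))r$ with overwhelming probability. For $J\subseteq[m]$ let $G_J$ be $G$ pivoted successively along $a_ib_i$ for $i\in J$. Each pivot is an involution on ordered bipartite graphs with these parts, so $G_J[U]$ is uniform. Setting $X=\sum_J\bbone_{G_J[U]=H}$ gives $\Ex[X]=2^{m-ab}$.

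\textbf{Step 3: the bipartite random walk.} A pivot on an outside edge $xy$, with $x\in L$ and $y\in R$ having uniformly random neighbourhoods $S\subseteq U\cap R$ and $T\subseteq U\cap L$, toggles exactly the complete bipartite graph $K_{S,T}$. By the same Fourier computation as \cref{lem:eigenvalues_of_walk_explicit} and \cref{lem:eigenvalue_rank}, the characters are indexed by $a\times b$ matrices $B$ with eigenvalue $\Ex[(-1)^{x_S^\top Bx_T}]=2^{-\rank B}$. The number of such $B$ of rank $s$ is at most $2^{s(a+b)}=2^{sk}$. Arguing as in \cref{lem:random_walk_mixing}, after $t$ independent pivot steps with $t>k$ the distribution is within $2^{O(k)-ab-t}$ of uniform.

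\textbf{Step 4: comparing $G_J$ and $G_{J'}$.} This needs a reordering lemma for pivot sequences, giving disjoint pairs determined by $G-U$ and covering every vertex that occurs exactly once. I would prove it like \cref{lem:minor-reordering}, using \cref{lem:cancel pivots} and the pivot-minor analogue of \cref{thm:structure of minors} (where only options 1 and 3 occur). In the bipartite case pivots are the only available operations and the reordered sequence consists of pairs. Property (III) then gives at least $|J\symdif J'|$ independent Piv-steps. Combined with the analogue of \cref{lem:complementing_independence} and of \cref{prop:symdiffindep}, this yields
\[
\big|\PP[G_J[U]=G_{J'}[U]]-2^{-ab}\big|\le 2^{O(k)-ab-|J\symdif J'|}.
\]

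\textbf{Step 5: Chebyshev and the union bound.} Summing as in \cref{lem:vm_high_probability}, the failure probability for a fixed $(U,H)$ is at most $2^{O(k)}\rho^{\,m}+2^{-m+ab}m^{O(k)}$ for some base $\rho<1$ coming from a binomial sum. A union bound then forces $m\log_2(1/\rho)\gtrsim(1+c)\frac{k^2}{2}\cdot\frac12+\dots$. The precise constant $\frac{1}{2\log_2(8/7)}$ depends on how many walk steps one pivot really contributes. I expect it to come from some loss in the reordering, since the matching pairs need not stay edges, or from only a constant fraction of pairs being usable. I would first try to realise $\rho=7/8$ by an honest count of usable pairs; if my accounting gives a better base, that only helps.

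\textbf{Main obstacle.} The main obstacle is Step 4. The pivot reordering lemma must produce sequences depending only on $G-U$ while keeping the ordered bipartition consistent and guaranteeing enough disjoint pairs. Proving property (III) for pivots needs a new gadget built from bipartite copies of paths, in place of the $P_3$ gadget. Keeping each walk step independent after reordering also needs care, because a pivot consumes two outside vertices whose neighbourhoods must both still be unrevealed.
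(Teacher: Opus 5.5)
There is a genuine gap, and it sits in Step 2. Your outline follows the paper's architecture: alignment, a $2^m$-family of pivot procedures, the $\mathrm{bPiv}$ walk with eigenvalues $2^{-\rank}$, a pivot reordering lemma, Chebyshev, and a union bound. But Step 2 does not actually define the family, and this is exactly the point that produces the constant $\frac{1}{2\log_2(8/7)}$. In a bipartite graph, pivoting on $a_1b_1$ toggles every pair of $K_{N(b_1)\setminus\{a_1\},\,N(a_1)\setminus\{b_1\}}$ and swaps the sides of $a_1,b_1$. So the matching edge $a_2b_2$ is destroyed whenever $a_2\sim b_1$ and $b_2\sim a_1$. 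Along a typical $J$, roughly half of the later pairs are non-edges by the time their turn comes. Hence ``pivot along $a_ib_i$ for $i\in J$'' is undefined for almost all $J$. Requiring the pairs to survive all earlier pivots (e.g.\ an induced matching) leaves only $O(\log(\ell r))$ pairs. Consequently your Step 4 claim, that property (III) yields at least $|J\symdif J'|$ independent walk steps, has nothing to rest on. You notice the problem in Step 5 but leave it open, and guessing $\rho=7/8$ is not a derivation.

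\textbf{The fix: attempted pivots.} Define $G\apiv vw$ to be $G\times vw$ if $vw\in E(G)$ and $G$ otherwise. This is an involution. So with arbitrary fixed pairs $v_i\in\hat V_1\subseteq L$, $w_i\in\hat V_2\subseteq R$ (no matching needed), $G_J:=G\apiv v_{i_1}w_{i_1}\apiv\cdots\apiv v_{i_s}w_{i_s}$ is defined for every $J\subseteq[m]$, and $G_J[U]$ is uniform.

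\textbf{Comparing $G_J$ and $G_{J'}$.} Write $G_{J'}$ as $G_J$ followed by the attempts of $J$ in reverse, then those of $J'$. Only the successful attempts, a subsequence determined by $G[\hat V]$, are real pivots, and the reordering lemma is applied to that subsequence. Each $j\in J\symdif J'$ is attempted exactly once and succeeds with probability $1/2$, independently over $j$. So the reordered sequence covers at least $Z\sim\bin(|J\symdif J'|,1/2)$ disjoint pairs, and the conditional mixing error is at most $2^{-Z}$. Averaging over $G[\hat V]$ gives $\Ex[2^{-Z}]=(3/4)^{|J\symdif J'|}$.

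\textbf{Where the constants come from.} Summing $\binom{m}{i}(3/4)^i$ gives $(7/4)^m$, and Chebyshev gives failure probability $2^{|U_1||U_2|}(7/8)^m$. Against the union bound's $2^{k^2/4}$, with $|U_1||U_2|\le k^2/4$, this forces $m=(1+c)k^2/(2\log_2(8/7))$. The extra $\frac14$ in $C$ is not the count $2^{k^2/4}$ itself. It is the alignment budget $m^*=(1+c)k^2/4-k$ of \cref{lem:align_partition}, where every failed trial vertex is deleted.

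\textbf{Smaller points.} Your claim that alignment costs only $O(k)$ vertices would need an argument that failed trials can be reused, since their non-adjacency to the target vertex has been revealed. But $C$ already includes this budget, so you can simply use the paper's procedure. Once the binomial averaging is in place, your sharper rank-counting bound for $\mathrm{bPiv}$ could even improve the constant, provided the event $Z\le k$ is handled separately.
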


This bipartite result allows us to move to matroid minor universality.

\begin{definition}
A rank-$r$ matroid $M$ on ground set $[n]$ is \emph{$k$-minor-universal} if for any subset $U\subseteq[n]$ with $|U|=k$ and any matroid $N$ with ground set $U$, $N$ is a (labeled) minor of $M$.
\end{definition}

We now restate \cref{thm:matroid_universal} with all the relevant notions defined.

\matroidUniversal*

\cref{thm:matroid_universal} is a consequence of the following stronger statement. The reduction follows from the fact that the rank of a uniformly random $n$-element binary matroid is $(1+o(1))n/2$ with very high probability (see \cref{prop:matroid_rank_concentration}).

\begin{theorem}\label{thm:stronger_matroid_universal}
Let $C=\frac{1}{2\log_2(8/7)}+\frac 14$ and $c>0$. Let $M$ be a uniformly random rank-$r$ binary matroid on ground set $[n]$. If $r,n-r\geq (1+c)Ck^2$ and $\max(r,n-r)\leq 2^{o(\sqrt{\min(r,n-r)})}$, then $M$ is $k$-minor universal with probability at least $1-2^{-(1+o(1))ck^2/4}$.
\end{theorem}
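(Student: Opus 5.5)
The plan is to deduce \cref{thm:stronger_matroid_universal} from \cref{thm:bip_pm_universal} through the standard dictionary between binary matroids and fundamental graphs. A rank-$r$ binary matroid $M$ on $[n]$ with a basis $B$ is represented by a matrix $[I_r \mid A]$ with $A\in\F_2^{B\times([n]\setminus B)}$. Its fundamental graph is the ordered bipartite graph $F(M,B)=(B,[n]\setminus B,E_A)$, where $E_A$ is the biadjacency given by $A$. The key facts are these. Pivoting on an edge $uv$ of $F(M,B)$ gives $F(M,B\symdif\{u,v\})$, i.e.\ a basis exchange. Deleting a vertex $w\in[n]\setminus B$ corresponds to $M\setminus w$. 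Deleting a vertex $w\in B$ corresponds to $M/w$, where a non-coloop can always be first pivoted out of $B$.

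Consequently, a matroid $N$ on $U$ with basis $B_N$ is a labeled minor of $M$ exactly when the ordered bipartite graph $F(N,B_N)=(B_N,U\setminus B_N,\cdot)$ is a pivot-minor of $F(M,B)$ in the ordered sense. Here one uses that every minor can be written as $M/C\setminus D$ with $C$ independent and $D$ coindependent, and then pivots so that $C\subseteq B$ and $D\cap B=\emptyset$. The ordered notion is exactly what keeps the sides aligned, so that contraction and deletion are not confused. Loops and coloops of $N$ appear as isolated vertices on the appropriate side, and they are also covered. I would state this equivalence as a lemma, citing Oum/Bouchet, and check the loop/coloop edge cases.

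Next, I would establish the distributional input. Fix the basis $B=[r]$ by symmetry. Take a uniformly random rank-$r$ binary matroid on $[n]$ and condition on $[r]$ being a basis. The representation $A$ is then uniform on $\F_2^{r\times(n-r)}$, because every $r\times(n-r)$ matrix gives a distinct matroid in which $[r]$ is a basis. So $F(M,[r])\sim\bG(r,n-r,1/2)$ with the sides labeled. The small subtlety is that a uniform rank-$r$ matroid need not have $[r]$ as a basis. To handle it, I would sample $M$ by choosing a basis in a way that respects the uniform measure, or bound the distortion. The cleanest route is this: $k$-minor universality is a property of $M$ alone, and $\PP[\text{fail}]$ under the uniform measure equals the average over matroids. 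Every matroid has some basis $B$, and the number of pairs $(M,B)$ differs across matroids. So I would instead argue that for each $B$ the conditional law given ``$B$ is a basis'' is uniform on bipartite graphs. Then $\PP[M\text{ fails}]\le\sum_B\PP[B\text{ basis},\ F(M,B)\text{ fails}]$, and the factor $\binom nr$ is too large to absorb. Better is to use that the conditional law of $M$ given that $[r]$ is a basis equals the law obtained from a uniform $A$. Then compare this with the uniform matroid, noting that the number of bases ranges within a controlled factor, or use a random relabeling of $[n]$: because universality is invariant under permutations of the ground set, picking a uniformly random basis $B$ of $M$ and relabeling makes the law symmetric. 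I expect this measure comparison to be the main obstacle and would treat it carefully.

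Finally, apply \cref{thm:bip_pm_universal} with parts of sizes $\max(r,n-r)\ge\min(r,n-r)\ge(1+c)Ck^2$, using the hypothesis $\max\le 2^{o(\sqrt{\min})}$. This gives bipartite $k$-pivot-minor universality with probability $1-2^{-(1+o(1))ck^2/4}$. Via the equivalence lemma, every matroid $N$ on every $k$-set $U$ is a minor: choose any basis $B_N$ of $N$, and the ordered graph $F(N,B_N)$ appears as an ordered pivot-minor. If the larger part is the cobasis side, the theorem is applied with $L,R$ swapped, which is symmetric by duality.
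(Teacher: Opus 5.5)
Your matroid/fundamental-graph dictionary is correct and agrees with the paper, and so is the final application of \cref{thm:bip_pm_universal} (swapping sides if necessary). Your observation that conditioning a uniform $M\in\cM_{r,n}$ on ``$[r]$ is a basis'' makes $F(M,[r])$ uniform is also correct. The gap is the step you flag and leave open: transferring the failure bound from uniformly random fundamental graphs to the uniform matroid.

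Your observation together with symmetry is exactly \cref{prop:graph_vs_matroid_dist}. If $R$ is a uniform $r$-set and $G\sim\bG(L,R,1/2)$, then $\PP[M(G)=M]=b(M)/\bigl(\binom nr 2^{r(n-r)}\bigr)$. So the law of $M(G)$ is the uniform law tilted by the number of bases $b(M)$. Hence \cref{thm:bip_pm_universal} controls $\Ex[b(M)\bbone_{M\in Q}]/\Ex[b(M)]$ for the failure set $Q$, not $\PP[M\in Q]$. None of your suggested remedies removes this tilt:
\begin{itemize}
\item There is no deterministic control of $b(M)$. The rank-$r$ matroid with $r$ coloops and $n-r$ loops has a single basis, while $\Ex[b(M)]\approx p_r\binom nr$.
\item Choosing a uniformly random basis $B$ of $M$ and relabeling does not help. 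The pair $(M,B)$ then has probability $1/(|\cM_{r,n}|\,b(M))$, so given $B$ the fundamental graph is tilted by $1/b(M)$. Symmetrizing the position of $B$ does not change this.
\item The bound $\PP[M\in Q]\le\PP[M\in Q\mid [r]\text{ basis}]+\PP[[r]\text{ not a basis}]$ is useless, because $\PP[[r]\text{ basis}]\approx p_r\approx 0.29$.
\end{itemize}

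What is missing is a lower-tail bound for $b(M)$ under the uniform measure. The paper proves $\var[b(M)]/\Ex[b(M)]^2\le 4\bigl(\tfrac{n+r}{2n}\bigr)^r$ (\cref{lem:matroid_basis_count_concentration}) in four steps. It realizes the uniform matroid as $M(A)$ for uniform $A\in\F_2^{r\times n}$ conditioned on full rank, which can only decrease the normalized variance. It writes $b$ as a sum over $r$-sets of columns. It computes the conditional probability $p_{r-j}$ of a basis given overlap $j$ with a fixed basis. Finally, it dominates the hypergeometric overlap by $\bin(r,r/n)$ in convex order. With $\hat b=\Ex[b(M)]$, Chebyshev then gives
\[
\PP[M\in Q]\le\PP[b(M)<\hat b/2]+\sum_{\substack{M'\in Q\\ b(M')\ge\hat b/2}}\frac{1}{|\cM_{r,n}|}\le 16e^{-r(n-r)/(2n)}+2\,\PP[M(G)\in Q].
\]
The second inequality uses $1/|\cM_{r,n}|\le 2b(M')/(|\cM_{r,n}|\hat b)=2\,\PP[M(G)=M']$ on the range $b(M')\ge\hat b/2$. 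Since $r(n-r)/(2n)\ge\min(r,n-r)/4\ge(1+c)Ck^2/4$, the first term is negligible compared with $2^{-(1+o(1))ck^2/4}$. Only then does \cref{thm:bip_pm_universal} finish the argument.
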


We contrast this with a result of Altschuler and Yang~\cite{altschuler2017inclusion}, which implies that for a fixed binary matroid $N$ with rank $r'$ on $k$ elements, if $M$ is a binary matroid represented by a uniformly random $r\times n$ binary matrix with $r\geq r'$ and $n=n(r)$, then as $(n(r) - r) \rightarrow \infty$, $M$ contains $N$ with high probability. In particular, their quantitative results \cite[Theorems 6,8]{altschuler2017inclusion} together with \cref{prop:matrix_vs_matroid_dist} imply that if $M$ is a uniformly random rank-$r$ binary matroid on ground set $[n]$ with $r\geq r'$, and $n\geq k$, then
\begin{align*}
\PP[N\text{ is isomorphic to a minor of }M]&\geq 1-\exp\left(-\Omega\Big(2^{-r'(k-r')}\frac{n-r}{k}\Big)\right)\\
&\geq 1-\exp\left(-\Omega(2^{-k^2/4}(n-r)/k)\right).
\end{align*}
In particular, as $n\to\infty$ this converges to $1$, and more specifically $M$ contains every matroid on $k$ elements as an \textit{unlabeled} minor with high probability as long as $(n-r)\geq 2^{k^2/4+\Omega(\log k)}$. Our result operates in the regime $r,(n-r)\geq \Omega(k^2)$ and obtains each \textit{labeled} matroid $N$ as long as $r,(n-r)$ are not too different.

In the deterministic setting, the growth rate theorem of Geelen, Kung and Whittle~\cite{geelen2009growth} implies that for every fixed $k$, there exists a (fast-growing) constant $C_k$ such that every simple binary matroid $M$ of rank $r$ with $n\geq C_kr^2$ elements contains every $k$-element matroid as an unlabeled minor; a matroid is \textit{simple} if there are no loops or parallel elements. We note that if $r \leq \log_2(C_k)$, the statement is vacuous as no such simple $M$ exists, so this theorem implicitly forces $r$ to be large as a function of $k$. Cooper, Frieze, and Pegden~\cite{cooper2019minors} improved this to a linear bound $n\geq C_k'r$ when $M$ is the graphic matroid of a uniform hypergraph with $n$ i.i.d. random edges. However, with only astronomical bounds on $C_k$ and $C_k'$, these results are incomparable with ours. Other discussion of various random matroid models can be found in \cite{evolutionrandomrepmatroids,GaoNelson2026,KellyOxley1984,Kordecki_1996,Nelson2018}, and the references therein.

In \cref{section:pivot-minors and matroids} we provide more detailed background about matroids before proving the reduction from \cref{thm:stronger_matroid_universal} to \cref{thm:bip_pm_universal}. Then, in \cref{section:bipartite-pivot-minor-universality} we prove \cref{thm:bip_pm_universal} using similar techniques to the proof of \cref{thm:vm_universal}.

We remark that the proof of \cref{thm:ER_pm_universal} is similar to the proof of \cref{thm:vm_universal} but using some ideas from the proof of \cref{thm:bip_pm_universal}. Since no additional ideas are involved, we omit the proof of \cref{thm:ER_pm_universal} in this manuscript.

\subsection{Reduction from binary matroids to bipartite pivot-minors}\label{section:pivot-minors and matroids}

We now discuss the connection between pivot-minors in ordered bipartite graphs and minors of binary matroids, reducing \cref{thm:stronger_matroid_universal} to \cref{thm:bip_pm_universal}. More details on this relationship can be found in \cite{Oum05}. We assume familiarity with basic matroid theory, see Oxley's book~\cite{oxley2006matroid} for reference. All matroid minors will be labeled (not equivalent up to isomorphism). Let $G = (L,R,E)$ be an ordered bipartite graph. Let $A=\Adj(G) \in \{0,1\}^{R \times L}$ be the bipartite adjacency matrix between parts $L$ and $R$. That is, with $A_{u,v} = 1$ if and only if $u \in R$ is adjacent to $v \in L$. To each ordered bipartite graph $G = (L,R,E)$ we associate the binary matroid $M(G) = M(L,R,E)$ on ground set $L \disjcup R$ represented by the matrix $[I_R\ A] \in \{0,1\}^{R \times V(G)}$ where $I_R$ is the $R \times R$ identity matrix. We then say $G$ is the \textit{fundamental graph} of $M(G)$ with respect to the basis $R$. 

We note that any rank $|R|$ binary matroid can be represented by a matrix of the form $[I_R\ A]$. Equivalently, we can define a fundamental graph of a binary matroid $M$ as follows. Fix a basis $R$ of $M$, and let $L$ be the remainder of the ground set. Then the fundamental graph of $M$ with respect to $R$ is the ordered bipartite graph $(L,R,E)$ where $v \in L$ is adjacent to exactly the elements of $R$ which appear in its fundamental circuit. We now review the relation between pivot-minors of the fundamental graph and the minors of $M$.

\begin{proposition}[{\cite[Proposition 3.3]{Oum05}}]\label{prop:fundamental_graph_equivalence}
    Let $G = (L,R,E)$ be an ordered bipartite graph and let $M(G) = M(L,R,E)$. Then
    \begin{enumerate}
        \item $M(R,L,E) = M(L,R,E)^*$
        \item if $uv \in E(G)$, $M(G \times uv) = M(G)$
        \item $M(G - v) = \begin{cases}
            M \setminus v & v \in L\\
            M / v & v \in R
        \end{cases}$
    \end{enumerate}
\end{proposition}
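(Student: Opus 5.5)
The three items are linear-algebraic facts about the representation $[I_R\ A]$, and I will treat them in the order (3), (1), (2).

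Item (3) is the easiest. If $v\in L$, deleting $v$ from $G$ removes the column of $A$ indexed by $v$. The new matrix $[I_R\ A']$ therefore represents $M\setminus v$ directly. If $v\in R$, contracting $v$ in a matroid represented by $[I_R\ A]$, where $v$ indexes an identity column, is done by deleting the row of $v$ and the column of $v$. The column of $v$ is the unit vector $e_v$, so no row operations are needed. The result is $[I_{R\setminus v}\ A'']$, where $A''$ is $A$ with row $v$ removed, and this is exactly the representation of $M(G-v)$.

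For item (1), I will use the standard fact that if $M$ is represented by $[I_R\ A]$, with columns indexed by $R\sqcup L$, then $M^*$ is represented over $\F_2$ by $[A^\top\ I_L]$. The reason is that the row spaces of $[I_R\ A]$ and $[A^\top\ I_L]$ are orthogonal complements: the product $[I_R\ A][A^\top\ I_L]^\top = A + A = 0$ over $\F_2$, and the dimensions $|R|+|L|$ add up correctly. Reading $[A^\top\ I_L]$ with $L$ as the basis and $A^\top$ as the bipartite adjacency matrix, this is precisely the representation of $M(R,L,E)$, since reversing the roles of $L$ and $R$ transposes the adjacency matrix.

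Item (2) is the main computation. Take $u\in R$ and $v\in L$ with $A_{u,v}=1$. In $[I_R\ A]$, pivot on the entry $(u,v)$: for every row $w\neq u$ with $A_{w,v}=1$, add row $u$ to row $w$. Now column $v$ is the unit vector $e_u$ and column $u$ is no longer a unit vector. Relabel the row $u$ as $v$, so that the new basis is $R\symdif\{u,v\}$. Row operations do not change the matroid, so the new matrix is of the form $[I_{R'}\ A']$ and represents the same matroid.

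It remains to check that $A'$ is the bipartite adjacency matrix of $G\times uv$. Fix $x\in L\setminus v$ and $w\in R\setminus u$. The entry $A'_{w,x}$ equals $A_{w,x}+A_{w,v}A_{u,x}$. This is toggled exactly when $w\in N(v)$ and $x\in N(u)$. Since $w,x$ lie on opposite sides of the bipartition, that is exactly the toggling rule for the complete tripartite description of the pivot on a bipartite graph. The remaining entries are: row $v$ of $A'$, which is row $u$ of $A$, together with column $u$ of $A'$, which is the old column $v$. Both of these match the label swap in the definition of the pivot on ordered bipartite graphs.

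The main obstacle is this bookkeeping of the label swap, checking that the rows and columns indexed by $u$ and $v$ land in the correct positions.
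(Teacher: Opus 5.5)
Your proof is correct. The paper does not prove this proposition itself. It imports the statement from Oum's Proposition~3.3 and records only the tripartite description of the pivot, so there is no in-paper argument to compare against. Your three computations are the standard linear-algebraic verification behind that citation: deleting a column (respectively the row and column of an identity column) for item~3, orthogonality of the row spaces of $[I_R\ A]$ and $[A^\top\ I_L]$ for item~1, and Gaussian elimination on the entry $(u,v)$ followed by relabeling row $u$ as $v$ for item~2. Your bookkeeping of the label swap also checks out, namely $N_{G\times uv}(v)=(N_G(u)\setminus\{v\})\cup\{u\}$ and $N_{G\times uv}(u)=(N_G(v)\setminus\{u\})\cup\{v\}$.
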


That is, flipping the order of the bipartition takes the dual of the matroid, pivoting on an edge exchanges the two elements in the chosen basis but does not change the matroid, and deleting a vertex either contracts or deletes the element in the matroid. This means, for instance, to contract a non-loop element in $L$, you can pivot on an incident edge to move it to $R$, and then delete the element. This implies the following.

\begin{corollary}[{\cite[Corollary 3.6]{Oum05}}]\label{cor:pm_equiv_to_matroid_minor}
    Let $M, N$ be binary matroids with fundamental graphs $G,H$ respectively. If $N$ is a minor of $M$, then $H$ is a pivot-minor of $G$.

    Conversely, let $G = (L,R,E)$ and $H=(L',R',E')$ be ordered bipartite graphs. If $H$ is a pivot-minor of $G$ (in the ordered sense), then $M(L',R',E')$ is a minor of $M(L,R,E)$.
\end{corollary}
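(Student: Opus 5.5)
The plan is to derive both directions from \cref{prop:fundamental_graph_equivalence}, together with the standard fact that a binary matroid is uniquely representable over $\F_2$. That fact gives the following observation. Since a binary matroid $M$ and a basis $B$ determine the matrix $[I_B\ A]$ (each column of $A$ is the indicator of a fundamental circuit), the ordered bipartite graph $(E(M)\setminus B, B, E)$ with $M(\cdot)=M$ is unique. We call it \emph{the} fundamental graph of $M$ with respect to $B$.

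\textbf{The converse direction.} Suppose $H=(L',R',E')$ is obtained from $G=(L,R,E)$ by a sequence of pivots and vertex deletions in the ordered sense. We induct on the length of the sequence. By item 2 of \cref{prop:fundamental_graph_equivalence}, a pivot $G\times uv$ does not change $M(G)$; note that the ordered pivot swaps $u,v$ between the parts, which is exactly the basis exchange. By item 3, deleting $v$ replaces $M(G)$ by $M(G)\setminus v$ or $M(G)/v$. Hence every intermediate ordered graph has as its matroid a minor of $M(L,R,E)$, and in particular so does $H$.

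\textbf{The forward direction.} Let $G$ be the fundamental graph of $M$ with respect to a basis $R$, and let $H$ be the fundamental graph of $N$ with respect to a basis $R'$. Write $N=M/C\setminus D$.
\begin{enumerate}
\item We first normalize $C$ and $D$ so that $C$ is independent and $D$ is coindependent in $M$. This uses the usual fact that any dependent element being contracted becomes a loop, and contracting a loop equals deleting it; dually for coloops in $D$.
\item With this normalization, $r(N)=r(M)-|C|$. Since $R'$ is a basis of $N=M/C\setminus D$, the set $B:=R'\cup C$ is a basis of $M$.
\item Next we move from the basis $R$ to the basis $B$ by pivots. By basis exchange, $B$ is reachable from $R$ through a sequence of single exchanges $R_i\to R_{i+1}=R_i\symdif\{u,v\}$, with $u\in R_i$ lying in the fundamental circuit of $v$. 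That is, $uv$ is an edge of the fundamental graph $G_i$ with respect to $R_i$.
\item By item 2 and the uniqueness observation, $G_i\times uv$ is an ordered bipartite graph with the same matroid and right side $R_{i+1}$, so $G_i\times uv=G_{i+1}$. Thus $G$ is pivot equivalent to the fundamental graph $G_B$ of $M$ with respect to $B$.
\item Finally we delete vertices. In $G_B$ we delete the vertices of $C\subseteq B$, which are contractions by item 3, and the vertices of $D\subseteq E(M)\setminus B$, which are deletions. The result is an ordered bipartite graph with right side $R'$ whose matroid is $M/C\setminus D=N$. By uniqueness it equals $H$.
\end{enumerate}

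\textbf{Main obstacle.} The delicate points are the normalization making $R'\cup C$ a basis and the uniqueness of fundamental graphs. The latter is what lets us identify the pivoted graphs with $G_{i+1}$, and the final graph with $H$ exactly in the labeled, ordered sense. I would verify the rank count $r(M/C\setminus D)=r(M)-|C|$ carefully under the independence and coindependence assumptions; the rest is bookkeeping.
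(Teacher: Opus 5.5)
Your proof is correct and follows the route the paper indicates. Both directions come from \cref{prop:fundamental_graph_equivalence}: the converse by induction on pivots and deletions, the forward direction by pivoting to a suitable basis and then deleting. The paper's remark works element by element (pivot a non-loop element of $L$ into $R$, then delete it), whereas you normalize $N=M/C\setminus D$ once, pivot to the basis $R'\cup C$, and use uniqueness of the fundamental graph with respect to a fixed basis to land exactly on $H$, a point the paper's sketch leaves implicit.
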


This implies that a matroid $M$ is $k$-minor universal if and only if its fundamental graph $G$ is bipartite $k$-pivot-minor universal. This is almost sufficient to reduce \cref{thm:stronger_matroid_universal} to \cref{thm:bip_pm_universal}, but with one technical issue: the uniform distribution on the set $\cM_{r,n}$ of rank-$r$ binary matroids on ground set $[n]$ (henceforth $\mu_{r,n}$) is not equivalent to the distribution $\bG(\ell,r,1/2)$ on their fundamental graphs. The remainder of this section will show that these two distributions are actually very close.

For a matroid $M$, let $b(M)$ be the number of bases of $M$. The following proposition relates the distributions $\bG(\ell,r,1/2)$ and $\mu_{r,n}$ by the proportionality $b(M)$.

\begin{proposition}\label{prop:graph_vs_matroid_dist} Let $R$ be a uniformly random $r$-element subset of $[n]$ and $L=[n]\setminus R$, and let $G\sim \bG(L,R,1/2)$. Then for any matroid $M\in\cM_{r,n}$, we have $\PP[M(G)=M]=b(M)/z$, with normalizing constant $z=\sum_{M'\in \cM_{r,n}}b(M') = {n\choose r}2^{r(n-r)}$.
\end{proposition}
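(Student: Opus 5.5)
We will prove the statement by counting pairs. The random pair $(R,G)$ is uniform over all pairs $(R',A)$, where $R'$ is an $r$-subset of $[n]$ and $A\in\{0,1\}^{R'\times([n]\setminus R')}$ is a bipartite adjacency matrix. There are ${n\choose r}2^{r(n-r)}$ such pairs, and each has probability $1/z$ with $z={n\choose r}2^{r(n-r)}$. It therefore suffices to show that for each $M\in\cM_{r,n}$, exactly $b(M)$ of these pairs satisfy $M(L',R',E)=M$. Summing over $M$ then also gives the identity $\sum_{M'}b(M')=z$, since every pair yields some rank-$r$ matroid: the matrix $[I_{R'}\ A]$ has rank $r$.

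The central step is a bijection between the set $\{(R',A): M(G)=M\}$ and the set of bases of $M$, given by $(R',A)\mapsto R'$.

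\emph{The map lands in bases.} If $M(G)=M$, then $R'$ is a basis of $M$, because the columns indexed by $R'$ in $[I_{R'}\ A]$ form an identity matrix.

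\emph{Surjectivity.} Given a basis $B$ of $M$, fix any binary representation $N$ of $M$ with $r$ rows; this exists since $M$ is binary of rank $r$. The submatrix $N_B$ is invertible, so $N_B^{-1}N=[I_B\ A_B]$ is in standard form, still represents $M$, and is therefore realized by the fundamental graph with respect to $B$, as described before \cref{prop:fundamental_graph_equivalence}.

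\emph{Injectivity.} Suppose $R'=B$ is fixed. Then $A$ is uniquely determined by $M$: the entry $A_{u,v}$ equals $1$ exactly when $u$ lies in the fundamental circuit of $v$ with respect to $B$. Fundamental circuits are determined by the matroid alone, so any two standard-form representations over $\F_2$ with the same basis coincide. Alternatively, one can argue linearly: if $[I\ A]$ and $[I\ A']$ represent the same binary matroid, then they have the same row space. This uses the fact that binary matroids are uniquely representable up to row operations, and any row operation fixing the identity block is trivial.

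Putting these together, the number of pairs with $M(G)=M$ is $b(M)$, which proves the proposition. I expect the main obstacle to be the injectivity step. The cleanest route is the fundamental-circuit description: the column of $v\in L$ in $[I\ A]$ expresses $v$ as the $\F_2$-sum of basis elements, and the support of that sum together with $v$ is the unique circuit in $B\cup\{v\}$. This makes the column support, and hence the column, matroid-determined without invoking uniqueness of representations in general.
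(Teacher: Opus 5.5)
Your proposal is correct and follows the same argument as the paper: each $M\in\cM_{r,n}$ has exactly one fundamental graph for each of its $b(M)$ bases, and each such (basis, fundamental graph) pair is drawn with probability ${n\choose r}^{-1}2^{-r(n-r)}$. You supply three details that the paper's two-line proof leaves implicit: existence of the fundamental graph via the standard form $N_B^{-1}N$, its uniqueness via fundamental circuits, and the identity $\sum_{M'}b(M')=z$, which follows because every pair yields a rank-$r$ matroid.
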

\begin{proof}
Observe that each matroid has $b(M)$ labeled fundamental graphs, one for each basis $B\subseteq [n]$. Then $G$ is the unique fundamental graph of $M$ with respect to $B$ precisely if $R=B$ and $E(G)$ is chosen correctly, with probability ${n\choose r}^{-1}2^{-r(n-r)}$. Thus, $\PP[M(G)=M]=b(M)/{n\choose r}2^{r(n-r)}$.
\end{proof}

To show that the distributions $\bG(\ell,r,1/2)$ and $\mu_{r,n}$ are not too far apart, it now suffices to show that $b(M)$ is concentrated in the latter distribution. First, we show that $\mu_{r,n}$ is close to the ``random representable matroid" model of Kelly and Oxley \cite{KellyOxley1984}, and also studied in \cite{evolutionrandomrepmatroids,altschuler2017inclusion}. For $A\in \bF_2^{r\times [n]}$, let $M(A)$ be the labeled matroid represented by $A$. Note that $M(A)\in \cM_{r,n}$ if and only if $A$ is full rank. Finally, it is easy to show \cite[Prop.\ 6.4.1]{oxley2006matroid} that each $M\in \cM_{r,n}$ is uniquely representable up to row operations, and thus has exactly $|\mathrm{GL}_r(2)|$ representation matrices $A\in \F_2^{r\times [n]}$, where by $\mathrm{GL}_r(2)$ denotes the general linear group of $r \times r$ invertible matrices over $\F_2$. This yields the following: 

\begin{proposition}\label{prop:matrix_vs_matroid_dist}
If $A\in \bF_2^{r\times n}$ is a uniformly random binary matrix, then $\mu_{r,n}$ is precisely the conditional distribution of $M(A)$, conditioned on the event that $A$ is full rank.
\end{proposition}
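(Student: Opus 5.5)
The plan is a direct counting argument: for every $M\in\cM_{r,n}$ we compare the probability that $M(A)=M$ with the conditional probability given that $A$ is full rank. We carry it out in three steps.

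\textbf{Step 1: the map $A\mapsto M(A)$ on full-rank matrices.} We first note that $M(A)$ has rank $r$ precisely when $\operatorname{rank}(A)=r$. Hence the full-rank matrices $A\in\bF_2^{r\times n}$ are exactly those with $M(A)\in\cM_{r,n}$. Every matroid in $\cM_{r,n}$ is binary of rank $r$, so it has at least one full-rank $r\times n$ representation, and the map $A\mapsto M(A)$ from full-rank matrices to $\cM_{r,n}$ is surjective.

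\textbf{Step 2: every fiber has size $|\mathrm{GL}_r(2)|$.} We invoke \cite[Prop.\ 6.4.1]{oxley2006matroid}: two full-rank $r\times n$ binary matrices represent the same labeled matroid if and only if they differ by left multiplication by some $P\in\mathrm{GL}_r(2)$. Over $\bF_2$ the only nonzero scalar is $1$, so column scalings are trivial, which is why binary representations are unique up to row operations. It remains to see that $\mathrm{GL}_r(2)$ acts freely on full-rank matrices. If $PA=A$ with $A$ of rank $r$, then $(P-I)A=0$. Since the rows of $A$ span $\bF_2^n$ and are independent, right multiplication by $A$ is injective, so $P=I$. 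Therefore each fiber $\{A : M(A)=M\}$ is a single free orbit, of size exactly $|\mathrm{GL}_r(2)|$.

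\textbf{Step 3: compare the two distributions.} Let $F$ be the number of full-rank matrices in $\bF_2^{r\times n}$. By Step 1 and Step 2, $F=|\cM_{r,n}|\cdot|\mathrm{GL}_r(2)|$. For a uniformly random $A$, the conditional probability that $M(A)=M$ given that $A$ is full rank is therefore
\[
\frac{|\mathrm{GL}_r(2)|}{F}=\frac{1}{|\cM_{r,n}|}=\mu_{r,n}(M).
\]
This value does not depend on $M$, which is exactly the claimed statement.

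The only step needing any care is the uniqueness of representations up to row operations in Step 2, and that is supplied by the cited result. The rest is bookkeeping.
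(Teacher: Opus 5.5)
Your proof is correct and follows the paper's argument. The paper also derives the proposition from unique representability of binary matroids up to row operations (the cited Oxley result): each $M\in\cM_{r,n}$ has exactly $|\mathrm{GL}_r(2)|$ full-rank representing matrices, so conditioning on full rank gives the uniform distribution. You additionally check that the $\mathrm{GL}_r(2)$-action on full-rank matrices is free, which the paper leaves implicit.
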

This relationship is notably false for general $\F_q$ representable matroids. It is easy to check {\cite[Lemma 3.1]{KellyOxley1984}} that $A\in \bF_2^{r\times n}$ is full rank with probability $\prod_{i=1}^r(1-2^{r-n-i})\geq 1-2^{-(n-r)}$, and therefore the uniform distribution is close to the random representable matroid in total variation distance. Next, we concentrate $b(M)$ for a uniformly random or random representable matroid.

\begin{lemma}\label{lem:matroid_basis_count_concentration} 
Let $M\sim \mu_{r,n}$ be a uniformly random rank-$r$ binary matroid. Then
\[
\frac{\var[b(M)]}{\Ex[b(M)]^2}
\leq 4\Big(\frac{n+r}{2n}\Big)^r.
\]
\end{lemma}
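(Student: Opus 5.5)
The plan is to pass to the random representable model via \cref{prop:matrix_vs_matroid_dist}, compute the first and second moments of the basis count there, and bound the resulting correlation sum with a hypergeometric moment generating function.

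\textbf{Reduction to a uniform matrix.} Let $A\in\bF_2^{r\times n}$ be uniform and $F$ the event that $A$ is full rank. For $M(A)$, write $b=\sum_{B\in\binom{[n]}{r}}\bbone[A_B\text{ invertible}]$, where $A_B$ is the $r\times r$ submatrix on the columns in $B$. By \cref{prop:matrix_vs_matroid_dist}, $\mu_{r,n}$ is the law of $M(A)$ conditioned on $F$. Since $b=0$ off $F$, we have $\Ex[b\mid F]=\Ex[b]/\PP[F]$ and $\Ex[b^2\mid F]=\Ex[b^2]/\PP[F]$. Therefore
\[
\frac{\var[b(M)]}{\Ex[b(M)]^2}=\PP[F]\frac{\Ex[b^2]}{\Ex[b]^2}-1\le \frac{\Ex[b^2]}{\Ex[b]^2}-1 ,
\]
and it suffices to work with the unconditioned uniform matrix $A$.

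\textbf{First and second moments.} Let $q_m=\prod_{i=1}^m(1-2^{-i})$ and $p=q_r\geq \prod_{i\ge1}(1-2^{-i})>1/4$. Then $\Ex[b]=\binom nr p$.

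For $B,B'$ with $|B\cap B'|=j$, condition on $A_B$ being invertible, which has probability $p$. Given this, the columns indexed by $B'\setminus B$ are still uniform and independent. The columns indexed by $B\cap B'$ span a fixed $j$-dimensional space, and $A_{B'}$ is invertible exactly when the $r-j$ new uniform columns extend it to a basis. This happens with probability $\prod_{i=j}^{r-1}(1-2^{i-r})=q_{r-j}$. Hence
\[
\PP[A_B,A_{B'}\text{ both invertible}]=p\,q_{r-j}.
\]
Let $J$ be hypergeometric, namely the overlap of two independent uniform $r$-subsets of $[n]$. Then
\[
\frac{\Ex[b^2]}{\Ex[b]^2}-1=\Ex\Big[\frac{q_{r-J}}{q_r}-1\Big].
\]

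\textbf{Pointwise bound on the summand.} I claim $\frac{q_{r-j}}{q_r}-1\le 4\cdot 2^{-(r-j)}$. Write $q_{r-j}/q_r=1/\prod_{i=r-j+1}^{r}(1-2^{-i})$. The product satisfies $\prod(1-x_i)\ge 1-\sum x_i\ge 1-2^{-(r-j)}$, and it is also at least $p$. Combining these two lower bounds gives
\[
\frac1{\prod}-1=\frac{1-\prod}{\prod}\le \frac{2^{-(r-j)}}{p}\le 4\cdot2^{-(r-j)} .
\]

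\textbf{Hypergeometric moment generating function (the main step).} It remains to show
\[
\Ex[2^{J}]\le \Big(1+\tfrac rn\Big)^r .
\]
I would get this from Hoeffding's comparison: for convex functions, sampling without replacement is dominated by sampling with replacement. This gives $\Ex[2^J]\le\Ex[2^{\bin(r,r/n)}]=(1+r/n)^r$. Then
\[
\Ex\big[2^{-(r-J)}\big]=2^{-r}\Ex[2^J]\le\Big(\frac{n+r}{2n}\Big)^r,
\]
and combining with the pointwise bound yields the stated inequality $4\big(\frac{n+r}{2n}\big)^r$.

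The only delicate points are this convex-ordering comparison and the exact conditional probability $q_{r-j}$. Both are standard, and if needed Hoeffding's comparison could be replaced by a direct term-by-term comparison of the hypergeometric and binomial probabilities.
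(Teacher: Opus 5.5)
Your proposal is correct and follows essentially the same route as the paper. Both arguments reduce to the uniform matrix via the conditioning inequality, compute the overlap probability $q_{r-j}$, bound the relative excess by $4\cdot 2^{-(r-j)}$ using $q_r>1/4$, and finish with Hoeffding's convex-order comparison of the hypergeometric overlap with $\bin(r,r/n)$. The only difference is cosmetic: you drop the $-2^{-r}$ term (the paper's $-1$ in $\Ex[2^Y-1]$) earlier than the paper does, which does not change the final bound.
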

\noindent Note, this means that in particular that $\var[b(M)]/\Ex[b(M)]^2=O(e^{-r(n-r)/2n})$.
\begin{proof}
Let $A\in \bF_2^{r\times n}$ be a uniformly random binary matrix and $X$ be the number of bases of $\bF_2^r$ formed by columns of $A$. Then by \cref{prop:matrix_vs_matroid_dist}, we have that the distribution of $b(M)$ is equal to the distribution of $X$ conditioned on $X > 0$. Setting $p=\PP[X>0]$, this implies that $\Ex[b(M)^2]=\Ex[X^2]/p$ and $\Ex[b(M)]=\Ex[X]/p$, so then
\[\frac{\var[b(M)]}{\Ex[b(M)]^2}
=\frac{\Ex[X^2]/p-(\Ex[X]/p)^2}{(\Ex[X]/p)^2}\leq 
\frac{\Ex[X^2]-\Ex[X]^2}{\Ex[X]^2}
= \frac{\var[X]}{\Ex[X]^2},
\]
and thus it suffices to bound $X$. We write $X=\sum_{S\in {[n]\choose r}}X_S$, where $X_S$ is the indicator that columns $S$ form a basis of $\bF_2^r$. Let 
$p_r\coloneqq |\rm{GL}_r(2)|/2^{r^2}=\prod_{i=1}^r(1-2^{-i})$
be the probability that a random $r\times r$ binary matrix is invertible, so $p_r>\prod_{i=1}^\infty(1-2^{-i})>1/4$ for all $r$. By linearity of expectation, $\Ex[X]=p_r{n\choose r}$. Then
\begin{align*}
\var[X]&= \sum_{S,T}\Ex[X_SX_T]-\Ex[X_S]\Ex[X_T]
= {n\choose r}p_r\sum_{T}(\PP[X_T\mid X_{[r]}]-p_r).
\end{align*}
If $|T\cap [r]|=k$, then by checking the independence of columns $T\setminus[r]$ one at a time, we have $\PP[X_T\mid X_{[r]}]=\prod_{i=k}^{r-1}(1-2^{i-r})=p_{r-k}$. Observe that
\[p_{r-k}-p_r=p_{r-k}\Big(1-\prod_{i=r-k+1}^r(1-2^{-k})\Big)\leq 2^{-r+k}-2^{-r},\]
and thus we compute
\begin{align*}
\frac{\var[X]}{\Ex[X]^2}
= \frac{1}{\Ex[X]}\sum_{k=0}^r\sum_{\substack{T\\
T\cap [r]=k}}(p_{r-k}-p_r)
&\leq \frac 1{p_r}\sum_{k=0}^r\frac{{r\choose k}{n-r\choose r-k}}{{n\choose r}}(2^{-r+k}-2^{-r})\\
&=p_r^{-1}2^{-r}\Ex_{Y\sim \rm{Hypergeom}(n,r,r)}[2^Y-1].
\end{align*}
The hypergeometric distribution $\rm{Hypergeom}(n,r,r)$ (i.e.\ sampling without replacement) is dominated in convex order by the binomial $\rm{Bin}(r,r/n)$ (implied by Heoffding \cite[Theorem 4]{hoeffding}). Since $x\mapsto 2^x$ is convex, we conclude
\[
\frac{\var[X]}{\Ex[X]^2}\leq 4\cdot 2^{-r}\Ex_{Y\sim \rm{Bin}(r,r/n)}[2^Y-1]=4\cdot 2^{-r}((1+r/n)^r-1)\leq 4 \Big(\frac{n+r}{2n}\Big)^r,
\]
using the binomial moment generating function. 
\end{proof}
\begin{remark}Note that when $n\geq r^2$, this final equation gives the bound $6\cdot 2^{-r}\frac{r^2}{n}$, which is mildly stronger than then Lemma as stated.
\end{remark}

Our final lemma leverages this concentration to finish reducing \cref{thm:stronger_matroid_universal} to \cref{thm:bip_pm_universal}. In particular, if $\bG(\ell,r,1/2)$ is $k$-pivot-minor universal with exponentially high probability and $r,(n-r)\geq k^2$, then a matroid $M_{\rm{unif}}\sim\mu_{r,n}$ is also $k$-minor universal with essentially the same certainty.

\begin{lemma} Let $M_{\rm{unif}}\sim\mu_{r,n}$ be uniformly random and $G\sim G(\ell, r,1/2)$ with $\ell=n-r$. Then for all $k$,
\[
\PP[M_{\rm{unif}}\emph{ is not $k$-minor universal}]\leq 16e^{-\min(r,n-r))/2}+2\PP[G\emph{ is not bipartite $k$-p.m.\ universal}].
\]
\end{lemma}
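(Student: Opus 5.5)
We plan to compare the two distributions through the reweighting by $b(M)$ in \cref{prop:graph_vs_matroid_dist}. Set $z={n\choose r}2^{r(n-r)}$ and $\beta=\Ex_{M\sim\mu_{r,n}}[b(M)]=z/|\cM_{r,n}|$. Let $R$ be a uniformly random $r$-subset of $[n]$ with $L=[n]\setminus R$, and let $G'\sim\bG(L,R,1/2)$. By \cref{prop:graph_vs_matroid_dist}, $M(G')$ has law $\nu(M)=b(M)/z=\mu_{r,n}(M)\cdot b(M)/\beta$.

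Let $\mathcal B\subseteq\cM_{r,n}$ be the set of matroids that are not $k$-minor universal. By \cref{cor:pm_equiv_to_matroid_minor}, $M(G')\in\mathcal B$ exactly when $G'$ is not bipartite $k$-pivot-minor universal. Universality is invariant under relabeling vertices by a bijection that maps the set of all $k$-subsets to itself, so this event has the same probability for every fixed $R$. Hence $\nu(\mathcal B)=\PP[G\text{ not bipartite }k\text{-p.m.\ universal}]$.

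Next we split $\mathcal B$ according to whether $b(M)\geq\beta/2$:
\[
\mu_{r,n}(\mathcal B)\leq \mu_{r,n}\big(b(M)<\beta/2\big)+\sum_{M\in\mathcal B,\ b(M)\geq \beta/2}\mu_{r,n}(M).
\]
For the second term, each $M$ with $b(M)\geq\beta/2$ satisfies $\mu_{r,n}(M)=\nu(M)\beta/b(M)\leq 2\nu(M)$. So the second term is at most $2\nu(\mathcal B)$, which gives the factor $2$ in the statement.

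For the first term we use Chebyshev (\cref{lem:chebyshev}) together with \cref{lem:matroid_basis_count_concentration}:
\[
\mu_{r,n}\big(b(M)<\beta/2\big)\leq \frac{4\var[b(M)]}{\beta^2}\leq 16\Big(\frac{n+r}{2n}\Big)^r.
\]
Writing $\frac{n+r}{2n}=1-\frac{n-r}{2n}$ and using $1-x\leq e^{-x}$, this is at most $16e^{-r(n-r)/(2n)}$. Since $\max(r,n-r)\leq n$, we have $r(n-r)/n\geq\min(r,n-r)$, so the bound is at most $16e^{-\min(r,n-r)/2}$. This gives exactly the constant and exponent in the statement.

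The main point needing care is the claim that $\nu(\mathcal B)$ equals the probability for the fixed-size model $G\sim\bG(\ell,r,1/2)$, given that $R$ in \cref{prop:graph_vs_matroid_dist} is random. This is handled by conditioning on $R$ and using the symmetry of universality under relabeling. We also need \cref{cor:pm_equiv_to_matroid_minor} in both directions, as a statement about labeled ordered pivot-minors: every $k$-element matroid $N$ on $U$ has a fundamental graph (choose any basis), and that graph is an ordered bipartite graph on $U$. Given these points, the remaining steps are direct.
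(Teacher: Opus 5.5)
Your proof is the paper's proof step for step: reweight by $b(M)$ via \cref{prop:graph_vs_matroid_dist}, split at $b(M)\geq\beta/2$ to get the factor $2$, and bound the remaining mass with Chebyshev and \cref{lem:matroid_basis_count_concentration}. Everything is sound up to the very last simplification, which is false. You claim that $\max(r,n-r)\leq n$ implies $r(n-r)/n\geq\min(r,n-r)$. But $r(n-r)/n=\min(r,n-r)\cdot\max(r,n-r)/n$, so $\max(r,n-r)\leq n$ gives the \emph{reverse} inequality $r(n-r)/n\leq\min(r,n-r)$, and it is strict whenever $0<r<n$. Concretely, at $r=n/2$ you have $e^{-r(n-r)/(2n)}=e^{-n/8}$, which is larger than the claimed $e^{-\min(r,n-r)/2}=e^{-n/4}$. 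The valid deduction uses $\max(r,n-r)\geq n/2$, which gives $r(n-r)/n\geq\min(r,n-r)/2$. So your argument proves the bound only with $16e^{-\min(r,n-r)/4}$.

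This cannot be patched within the same route. At $r=n/2$, even the unsimplified Chebyshev bound $16\big(\frac{n+r}{2n}\big)^r=16(3/4)^{n/2}\approx 16e^{-0.144n}$ already exceeds $16e^{-n/4}$. Reaching the stated exponent would therefore need a lower-tail estimate for $b(M)$ substantially stronger than the second-moment bound of \cref{lem:matroid_basis_count_concentration}. The paper's own proof contains the identical slip: it asserts $16e^{-r(n-r)/2n}\leq 16e^{-\min(r,n-r)/2}$. The natural repair is to weaken the lemma to $16e^{-\min(r,n-r)/4}$. This costs nothing downstream: in \cref{thm:stronger_matroid_universal} one has $\min(r,n-r)\geq(1+c)Ck^2$, so $16e^{-\min(r,n-r)/4}$ is still far smaller than $2^{-ck^2/4}$.
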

\begin{proof}Let $Q\subseteq \cM_{r,m}$ be the set of binary matroids which are not $k$-minor universal. Let $R\subseteq [n]$ be a uniformly random $r$-element subset and $L=[n]\setminus R$, and let $G'\sim \bG(L,R,1/2)$. By \cref{cor:pm_equiv_to_matroid_minor}, we have
\[
\PP[G\text{ is not bipartite $k$-p.m.\ universal}] = \PP[M(G)\in Q]= \PP[M(G')\in Q].
\]
where the last equality follows because the property of pivot-minor universality is closed under relabeling vertices. By \Cref{prop:graph_vs_matroid_dist}, for all $M'\in \cM_{r,n}$, we have $\PP[M(G')=M']= b(M')/z$, where 
\[z=\sum_{M'\in \cM_{r,n}}b(M')=|\cM_{r,n}|\cdot  \Ex[b(M_{\rm{unif}})]=|\cM_{r,n}|\cdot  \hat b,
\]
where $\hat b:=\Ex[b(M_{\rm{unif}})]$. We can partition the left hand side into two cases
\begin{equation}\label{eq:bipartite-to-matroid-reduction}
\PP[M_{\rm{unif}}\in Q]\leq \PP[b(M_{\rm{unif}})\leq \hat b/2]+\PP[b(M_{\rm{unif}})\geq \hat b/2\text{ and } M_{\rm{unif}}\in Q].
\end{equation}
For the first term, we apply Chebyshev's inequality (\cref{lem:chebyshev}) and \Cref{lem:matroid_basis_count_concentration}:
\[\PP[M_{\rm{unif}}\leq \hat b/2]\leq \frac{\var[b(M_{\rm{unif}})]}{(\hat b/2)^2}\leq 16\left(\frac{n+r}{2r}\right)^r\leq 16e^{-r(n-r)/2n}\leq 16e^{-\min(r,n-r)/2}.
\]
For the second term, we have
\begin{align*}
\PP[b(M_{\rm{unif}})\geq \hat b/2\cap M_{\rm{unif}}\in Q]
=\sum_{\substack{M'\in Q\\b(M')\geq \hat b/2}}\frac{1}{|\cM_{r,n}|}
\leq \sum_{\substack{M'\in Q}}\frac{2b(M')}{|\cM_{r,n}|\cdot \hat b}
&=2\sum_{\substack{M'\in Q}}\PP[M_G=M']\\
&=2\PP[M_G\in Q].
\end{align*}
Combining the two bounds in \Cref{eq:bipartite-to-matroid-reduction}, we obtain the desired conclusion.
\end{proof}

Finally, we reduce \cref{thm:matroid_universal} to \cref{thm:stronger_matroid_universal} by proving that a uniformly random binary matroid on $[n]$ has rank $r\in [\frac n2- \sqrt n,\frac n2+\sqrt n]$ with probability $1-O(2^{-n})$.

\begin{proposition}\label{prop:matroid_rank_concentration} There exists $c\in \R$ such that if $M$ is a uniformly random binary matroid on $n$ elements and $\delta>0$, then $\PP[|\mathrm{rank}(M)-n/2|\geq \delta]\leq c2^{-\delta^2}$.
\end{proposition}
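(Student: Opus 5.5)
We count binary matroids on $[n]$ by rank. The uniform measure on all binary matroids gives $\PP[\mathrm{rank}(M)=r]=|\cM_{r,n}|/\sum_{s}|\cM_{s,n}|$, so it suffices to estimate $|\cM_{r,n}|$ precisely enough.

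By the discussion before \cref{prop:matrix_vs_matroid_dist}, each $M\in\cM_{r,n}$ has exactly $|\mathrm{GL}_r(2)|$ full-rank representing matrices in $\F_2^{r\times n}$. The number of full-rank $r\times n$ matrices is $2^{rn}\prod_{i=1}^r(1-2^{r-n-i})$, and $|\mathrm{GL}_r(2)|=2^{r^2}p_r$. Hence
\[
|\cM_{r,n}|=2^{r(n-r)}\,\frac{\prod_{i=1}^r(1-2^{r-n-i})}{p_r}.
\]
(This is the Gaussian binomial coefficient $\binom{n}{r}_2$.) Both the numerator product and $p_r$ lie in $[1/4,1]$ for all $0\le r\le n$, since each is a partial product of $\prod_{i\ge1}(1-2^{-i})>1/4$. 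Therefore
\[
\tfrac14\,2^{r(n-r)}\le |\cM_{r,n}|\le 4\cdot 2^{r(n-r)}.
\]

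Write $r(n-r)=n^2/4-(r-n/2)^2$. Then
\[
\PP\bigl[|\mathrm{rank}(M)-n/2|\ge\delta\bigr]\le \frac{4\sum_{|r-n/2|\ge\delta}2^{-(r-n/2)^2}}{\tfrac14\sum_r 2^{-(r-n/2)^2}}.
\]
The denominator is at least $\tfrac14\cdot 2^{-1/4}$, by taking $r$ nearest to $n/2$.

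For the numerator, set $j=|r-n/2|$, so $j$ ranges over $\delta\le j$ in steps of $1$ starting from some $j_0\ge\delta$. Using $j^2\ge \delta^2+(j-\delta)\delta$ when $\delta\ge 1$, or simply $j^2\ge\delta^2+(j-j_0)$ since $j\ge j_0\ge\delta$, the tail is bounded by a geometric series: $\sum_{j}2^{-j^2}\le 2\cdot 2^{-\delta^2}\sum_{t\ge0}2^{-t}$. This gives at most a constant times $2^{-\delta^2}$, with the factor $2$ accounting for the two sides. Altogether $\PP[|\mathrm{rank}(M)-n/2|\ge\delta]\le c\,2^{-\delta^2}$ for an absolute constant $c$. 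The case $\delta<1$ is covered by taking $c\ge 2$.

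The only step needing care is the exact count of $|\cM_{r,n}|$. Uniqueness of representation up to row operations is cited from Oxley, and the uniform $[1/4,1]$ bounds on the products hold even when $r=n$, where the product becomes $p_n$ itself. Everything else is a routine Gaussian tail estimate.
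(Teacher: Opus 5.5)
Your proof is correct and follows the same route as the paper. Both show $|\cM_{r,n}|=\binom{n}{r}_2=\Theta(2^{r(n-r)})$ uniformly in $r$, lower bound the normalizing sum by the term with $r$ nearest $n/2$, and bound each tail by a geometric series of order $2^{-\delta^2}$. The only difference is cosmetic: you get the Gaussian binomial count from the $|\mathrm{GL}_r(2)|$-to-one map from full-rank matrices, with explicit constants $1/4$ and $4$, whereas the paper uses the bijection with $r$-dimensional subspaces of $\F_2^n$.
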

\begin{proof}
It is well known that binary matroids on $[n]$ are in bijection with subspaces of $\F_2^n$, so $|\cM_{r,n}|=\big[\smat{
n\vspace{2pt}\\r}\big]_2=\Theta(2^{r(n-r)})$, where the constants are uniform in $r$. Then
\[
\PP[\rank(M)\leq n/2-\delta]=\Theta\Big(\frac{\sum_{r=0}^{n/2-\delta}2^{r(n-r)}}{\sum_{r=0}^n2^{r(n-r)}}\Big)\leq \Theta\Big(\frac{2^{(n/2-\delta)(n/2+\delta)}}{2^{\floor{n/2}\ceil{n/2}}}\Big)\leq \Theta\Big(2^{-\delta^2}\Big),
\]
where we lower bound the denominator by the middle term and upper bound the numerator by a geometric series. The other tail is symmetrical, giving the desired bound with some universal constant $c$.
\end{proof}

\subsection{Bipartite pivot-minor universality}\label{section:bipartite-pivot-minor-universality}

In this section, we modify the proof of vertex-minor universality to prove \cref{thm:bip_pm_universal}. Most of the lemmas can be adapted without much change, but there are some additional ideas necessary when we restrict to the pivot operation.
The first obstacle is that we cannot pre-fix a set of pivot operations, because it is not always possible to pivot at a pair $\{u,v\}$; this depends on the underlying graph. Thus, we define the ``attempted pivot" operation:
\[
G\apiv vw=\begin{cases}
G\times uv&\text{if }uv\in E(G)\\
G&\text{if }uv\notin E(G).
\end{cases}
\]

Let $G^*=(L^*,R^*,E^*)\sim \bG(\ell,r,1/2)$. Our goal is to show that for any subset $U$ of size $k$ and any ordered biparitite graph $H=(U_1,U_2,E_H)$ on vertex set $U$, we have that $H$ is a pivot-minor of $G^*$ in the ordered sense. We now note a second additional obstacle compared to the vertex-minor proof: the partitions of $G^*$ and $H$ do not necessarily agree. Our procedure thus follows two steps. Let $\hat V_1\subseteq L^*, \hat V_2\subseteq R^*, V^*_1\subseteq L^*,$ and $V^*_2\subseteq R^*$ be large disjoint subsets of $V(G)\setminus U$ such that $|\hat V_1| = |\hat V_2| = m$ and $|V^*_1| = |V^*_2| = m^*$ for $m,m^*$ chosen appropriately. Let $\hat V=\hat V_1\cup \hat V_2$ and $V^*=V^*_1\cup V^*_2$.

First, we align the partitions of $H$ and $G^*$. Recall that pivoting at a vertex will swap it between $L^*$ and $R^*$. The vertices in $U_1\cap R^*$ and $U_2\cap L^*$ are ``on the wrong side," so in \cref{lem:align_partition} we prove that with high probability, we can pivot on edges of $[V^*_1, U_1]$ and $[U_2,V^*_2]$ and delete $V^*$ to obtain $G=(L,R,E)$ such that $L\supseteq U_1$ and $R\supseteq U_2$. Furthermore, we guarantee that after these pivots, $G$ is still uniformly random, i.e.\ $G\sim \bG(L,R,1/2)$.

Next, we let $\hat V_1=\{v_1,\ldots, v_m\}\subseteq L$ and $\hat V_2=\{w_1,\ldots,w_m\}\subseteq R$. We perform a series of pivot operations inside $\hat V$ to ``mix" the induced graph on the set $U$. For each $J\subseteq [m]$, where $J=\{i_1<\ldots<i_s\}$, define:
\begin{equation}\label{eq:GJdefnbipartite}
G_J\coloneqq G\apiv v_{i_1}w_{i_1}\apiv v_{i_2}w_{i_2}\apiv\cdots \apiv v_{i_s}w_{i_s}.
\end{equation}
Note that the $\apiv$ operator is an involution and preserves bipartiteness, so $G_J[U]$ is still uniformly random. We also prove that if $J\symdif J'$ is large, then $G_J[U],G_{J'}[U]$ are sufficiently uncorrelated (see \cref{lem:JJ' are mixed bipartite}). Then using the second moment method, we have $H=G_J[U]$ for some $J$ with high probability. The final result follows from a union bound over all choices of $H$ and $U$.


We now describe how to align the partitions of $H$ and $G$. Our goal will be to swap every vertex in $Y_2=U_2\cap L^*$ and $Y_1=U_1\cap R^*$ to the other side of the bipartition such that the resulting graph is still uniformly random.

\begin{lemma}\label{lem:align_partition}
Let $G^*=(L^*,R^*,E^*)\sim \bG(\ell,r,1/2)$. Let $V_1^*,Y_2\subseteq L$, $V_2^*,Y_1\subseteq R$, be disjoint sets with $|V_1^*|,|V_2^*|=m$ and $|Y_1|,|Y_2|\leq k$. Then with probability at least $1-2m^k2^{-m}$ we may construct a pivot-minor $G=(L,R,E)$ of $G^*$ with $L=L^*\cup Y_1\setminus (V_1^*\cup Y_2)$ and $R=R^*\cup Y_2\setminus (V_2^*\cup Y_1)$ such that $G\sim \bG(L,R,1/2)$.
\end{lemma}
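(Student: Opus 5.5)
We will move the vertices of $Y_1\subseteq R^*$ and $Y_2\subseteq L^*$ across the bipartition one at a time. Each move is a pivot between $y\in Y_1$ and a neighbour $v\in V_1^*$, or between $y\in Y_2$ and a neighbour $w\in V_2^*$. At the end we delete all of $V^*=V_1^*\cup V_2^*$. The pivot $G\times yv$ with $y\in R^*$, $v\in L^*$ swaps the sides of $y$ and $v$, so $y$ lands in $L$ and $v$ lands in $R$. Since $v$ is deleted afterward, the final graph has exactly the bipartition $L=L^*\cup Y_1\setminus(V_1^*\cup Y_2)$ and $R=R^*\cup Y_2\setminus(V_2^*\cup Y_1)$. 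By \cref{lem:cancel pivots}, a second pivot at the same $v$ cannot be used for a different $y$ without undoing the first. So I will use distinct partners: a matching $y\mapsto v_y$ from $Y_1$ into $V_1^*$ and from $Y_2$ into $V_2^*$, with $yv_y$ an edge at the moment we pivot.

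The partners are chosen greedily and adaptively. Order $Y_1=\{y_1,\dots,y_a\}$ and process them in order. When handling $y_j$, look at the unused vertices of $V_1^*$, of which at least $m-k$ remain. Among them, pick the first one, in a fixed order, that is adjacent to $y_j$ in the current graph, and pivot on that edge. Then do the same for $Y_2$ using $V_2^*$.

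The key claim is an independence statement in the spirit of \cref{lem:complementing_independence}. At each stage, conditioned on everything revealed so far, the current edges between the current vertex $y_j$ and the unused vertices of $V_1^*$ are uniform and independent. Moreover, all edges of the current graph not incident to used partner vertices are jointly uniform. To show this, note that a pivot $\times yv$ changes edges by a symmetric difference with a complete bipartite graph between $N(y)\setminus\{v\}$ and $N(v)\setminus\{y\}$. This change is determined by the revealed edges at $y$ and $v$, and it acts as an involution, hence a measure-preserving bijection, on the edges avoiding $y$ and $v$.

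I would run the exposure as follows. Before pivoting, reveal the edges from $y_j$ to the unused part of $V_1^*$ and all edges at the chosen $v$. After the pivot, $v$ is never touched again except for deletion. So everything not incident to the used partners stays uniform given the history, by the involution argument applied conditionally. Edges incident to $y_j$ and to other vertices are re-randomized in the same way, because $N(v)$ is independent of them. This is the step I expect to be the main obstacle: tracking exactly which edge sets stay uniform after pivots that touch $y_j$, whose neighbourhood was partially revealed. My approach is to reveal only the edges from $y_j$ into $V_1^*$ and to argue that after the pivot the new neighbourhood of $y_j$, which equals the old $N(v)$ with $v$ replaced by $y_j$, is fresh.

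The failure probability comes only from the greedy step. A given $y$ has no neighbour among at least $m-k$ fresh uniform candidates with probability at most $2^{-(m-k)}$. A union bound over at most $2k$ vertices gives failure probability at most $2k\,2^{-(m-k)}\le 2m^k2^{-m}$ for $k\ge1$ and $m\ge 2$, since $k2^{k}\le m^k$ in that range. On success, delete $V^*$. The remaining edges avoid every used partner, and the unused vertices of $V^*$ are deleted too, so the remaining edges are uniform by the invariant. Hence $G\sim\bG(L,R,1/2)$ conditionally on success, and we may condition on the success event.
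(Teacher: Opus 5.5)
Your proof is correct in substance, but it takes a genuinely different route from the paper's. The paper tests $v_1y_1, v_2y_1,\dots$ in turn and deletes each tested $v_i$ immediately, whether or not it pivoted. All revealed information therefore sits on deleted vertices, so ``the current graph is uniform given the history'' is immediate. The cost is that every failed test uses up a candidate, so failure is the binomial tail $\PP[\mathrm{Bin}(m,1/2)<|Y_1|]$, which is where $m^k2^{-m}$ comes from.

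You instead keep skipped candidates. This works because a skipped $u$ is a non-neighbour of $y_j$ and lies on the same side as $N(y_j)$, so $\times y_jv$ toggles no edge at $u$. The revealed zero $uy_j$ is moved by the label swap onto the partner $v$, which is later discarded. The payoff is that failure now only requires some $y$ to have no neighbour among at least $m-k$ fresh candidates. That gives roughly $k2^{k-m}$ per side instead of a binomial tail. This only improves a lower-order $O(k\log m)$ term in the exponent, so it does not change the constant in \cref{thm:bip_pm_universal}. The price is a weaker invariant: only edges avoiding used partners are uniform, and this needs more care.

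That care is exactly where your write-up wobbles. Your first exposure, revealing all edges at the chosen $v$, would break the invariant. After $\times y_jv$ the neighbourhood of $y_j$ is the old $N(v)$, and $y_j$ survives into $G$.

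Your fallback, revealing only $y_j$'s edges into the unused part of $V_1^*$, is the right one. But then the involution argument no longer works as you state it, because the toggled set $[N(y_j)\setminus v,\,N(v)\setminus y_j]$ depends on unrevealed edges. Argue instead as follows. Condition on the full old row of $y_j$. Then the map from old edges avoiding $y_j$ and the used partners to new edges avoiding $v$ and the used partners is a bijection:
\begin{itemize}
\item the new bit $y_jw$ is the old bit $vw$;
\item the new bit $xw$ is the old bit $xw$ plus (mod $2$) the product of the old bits $xy_j$ and $vw$.
\end{itemize}
So those new edges are uniform, and you then average over the unrevealed part of $N(y_j)$.

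Two small slips. First, later pivots do toggle edges at $v$; this is harmless, since you never track them. Second, $k2^k\le m^k$ fails for $m=2$, $k\ge 2$, but there $2m^k2^{-m}\ge 1$, so the statement is vacuous.
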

\begin{proof} We give an algorithm to construct $G$. Let $V_1^*=\{v_1,\ldots, v_m\}\subseteq L^*$ and $Y_1=\{y_1,\ldots, y_t\}\subseteq R^*$. We will reveal edges one at a time and construct a sequence of graphs $(G_i^*)_{i\geq 0}$ such that each $G_i^*$ is uniformly random on its partition. Let $G_0^*=G^*$.

First, we reveal whether $v_1y_1$ is in $E(G^*_0)$. If so, then we call this a ``success" and let $G^*_1=G^*_0\times v_1y_1-v_1$, so we have swapped $y_1\in L(G^*_1)$. Moreover, conditional on the edges we have revealed so far, by deleting $v_1$, $G^*_1$ consists of un-revealed edges and is uniformly random on its partition. If $v_1y_1\notin E(G^*_0)$, we call this a ``failure" and let $G^*_1=G^*_0-v_1$. By deleting $v_1$, note that $G^*_1$ is still uniformly random. Repeat this process with pairs $v_2y_1,v_3y_1,\ldots$ until a success occurs, then repeat with vertices $y_2,\ldots, y_t$ until $t$ successes occur. Finally, we obtain $G$ by deleting the remaining vertices of $V_1^*$. 

Observe that this algorithm fails precisely when among the $m$ vertices of $V_1^*$, we have fewer than $|Y_1|$ successes, where each trial succeeds independently with probability $1/2$. Thus the algorithm fails with probability $\PP[\rm{Bin}(m,1/2)<|Y_1|]<m^k2^{-m}$. 
Recall that conditional on the previous successful indices, each $G_i^*$ is uniformly random, and thus conditioned on the algorithm succeeding, the resulting graph is uniformly random. Repeat this algorithm on the sets $V_2^*$ and $Y_2$.
\end{proof}

Next, we adapt the mixing arguments of Section~\ref{section:randomwalks} to show that we can obtain $H$ inside the aligned graph $G$.

\begin{definition} Let $\cB_{\ell,r}$ be the set of ordered bipartite graphs on some fixed bipartition $(L,R)$ with sizes $\ell,r$. Suppose $\mu$ is a probability distribution on $\cB_{\ell,r}$ and let $\rm{bPiv}(\mu)$ be the the distribution of the graph $\Gamma\symdif K_{S,T}$, where $\Gamma\sim \mu$ and $S\subseteq L$ and $T\subseteq R$ are uniformly random and independent of $\Gamma$. Also, let $\mu_{\rm{unif}}$ be the uniform distribution on $\cB_{\ell,r}$
\end{definition}

Observe that the operator $\rm{bPiv}$ simulates the transformation of supposing there is an edge $vv'$ outside of $\Gamma$ with random neighborhoods in $\Gamma$ (respecting the partition), and then pivoting on $vv'$ and deleting $v,v'$.

\begin{lemma}\label{lem:eigenvalue_rank_bipartite}
The eigenvectors of $\rm{bPiv}$ are given by $\{\chi_G \mid G \in \cB_{\ell,r}\}$ where $\chi_G \in \mathbb{R}^{\cB_{\ell,r}}$ is defined as $\chi_G(H) =(-1)^{|E(G) \cap E(H)|}$. For each $G \in \cB_{\ell,r}$, let $\lambda_G$ denote the eigenvalue of $\chi_G$ under $\rm{bPiv}$, and let $A\in \bF_2^{r\times \ell}$ be the bipartite adjacency matrix of $G$. Then we have
\[\lambda_G = 2^{-\rm{rank}_{\bF_2}(A)}.\]
\end{lemma}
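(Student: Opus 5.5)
The argument follows the same template as \cref{lem:eigenvalues_of_walk_explicit} and the $\rm{Piv}$ part of \cref{lem:eigenvalue_rank}. We view $\rm{bPiv}$ as a random walk on the Cayley graph of the additive group $\bF_2^{L\times R}$ whose generators are the edge sets of the complete bipartite graphs $K_{S,T}$ with $S\subseteq L$, $T\subseteq R$. The characters $\chi_G$, $G\in\cB_{\ell,r}$, form an orthogonal basis of $\R^{\cB_{\ell,r}}$. Since $\chi_G(H\symdif K_{S,T})=\chi_G(H)\,(-1)^{|E(G)\cap E(K_{S,T})|}$, the same one-line computation as in \cref{lem:eigenvalues_of_walk_explicit} gives
\[
\rm{bPiv}(\chi_G)(H)=2^{-\ell-r}\sum_{S\subseteq L,\,T\subseteq R}\chi_G(H\symdif K_{S,T})=\chi_G(H)\cdot \Ex_{S,T}\big[(-1)^{|E(G)\cap E(K_{S,T})|}\big].
\]
So each $\chi_G$ is an eigenvector, with eigenvalue $\lambda_G=\Ex_{S,T}[(-1)^{|E(G)\cap E(K_{S,T})|}]$.

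To evaluate $\lambda_G$, let $x_S\in\bF_2^L$ and $y_T\in\bF_2^R$ be the indicator vectors of $S$ and $T$. The edges of $G$ lying in $K_{S,T}$ are exactly the pairs $(u,v)$ with $u\in T\subseteq R$, $v\in S\subseteq L$ and $A_{u,v}=1$. Hence $|E(G)\cap E(K_{S,T})|=y_T^\top A x_S$ over $\R$, and this identity holds without double counting, since $S$ and $T$ lie on opposite sides. It follows that
\[
\lambda_G=2^{-\ell-r}\sum_{x\in\bF_2^L}\sum_{y\in\bF_2^R}(-1)^{y^\top Ax}.
\]
Consider the inner sum over $y$. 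It equals $2^r$ when $Ax=0$ over $\bF_2$. Otherwise $Ax$ is a nonzero vector, so $y\mapsto y^\top Ax$ is a nontrivial linear functional, and the inner sum is $0$. Therefore
\[
\lambda_G=2^{-\ell-r}\cdot 2^r\,|\rm{Null}_{\bF_2}(A)|=2^{-\ell}\cdot 2^{\ell-\rank(A)}=2^{-\rank_{\bF_2}(A)},
\]
where we used that $A$ has $\ell$ columns, so $\dim\rm{Null}(A)=\ell-\rank(A)$.

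The only step needing care is the bookkeeping of which side $S$ and $T$ live on, so that the null space is taken with respect to the $L$-indexed variable (the sum over $y\in\bF_2^R$ is the one that collapses). Unlike the general $\rm{Piv}$ case, there is no parity subtlety from the triangle $S\cap T$, because $L$ and $R$ are disjoint. The rest is a direct transcription of the earlier proof.
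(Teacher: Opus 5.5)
Your proposal is correct and follows essentially the same route as the paper. Both proofs write the eigenvalue as the average $\Ex_{S,T}[(-1)^{|E(G)\cap E(K_{S,T})|}]$, rewrite the exponent as the bilinear form $y^\top A x$, and collapse the inner sum over the $R$-side vector so that only $x \in \mathrm{Null}(A)$ contributes, which gives $2^{-\mathrm{rank}(A)}$.
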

\begin{proof}
Similarly to \cref{lem:eigenvalues_of_walk_explicit}, we have
\[
\rm{bPiv}(\chi_G)(H)=2^{-\ell-r}\sum_{S,T}(-1)^{|E(H\symdif K_{S,T})\cap E(G)|}= \chi_G(H)\cdot 
2^{-\ell-r}\sum_{S,T}(-1)^{|E(G)\cap E(K_{S,T})|}
\]
Thus $\chi_G$ is an eigenvector with eigenvalue $\lambda_G = \bE_{S,T}[(-1)^{|E(G) \cap E(K_{S,T})|}]$. Then, analogous to \cref{lem:eigenvalue_rank}, if we let $x_S,x_T$ be the indicators for $S$ and $T$, we have 
\begin{equation*}
\lambda_G=2^{-\ell-r}\sum_{\substack{S\subseteq L\\T\subseteq R}}(-1)^{x_T^\top Ax_S^{}}=2^{-\ell-r}\sum_{x_S\in \bF_2^{\ell}}\bigg(\sum_{x_T\in\bF_2^r}(-1)^{x_T^\top Ax_S^{}}\bigg).
\end{equation*}
If $x_S\in \rm{Null}(A)$ then the inner sum is $2^r$, and otherwise it is $0$. Thus $\lambda_G=2^{-\ell-r}\cdot 2^r|\rm{Null}(A)|=2^{-\rm{rank}(A)}$.
\end{proof}

We now conclude that the small eigenvalues cause this random walk to converge quickly to the uniform distribution.

\begin{lemma}\label{lem:bip_random_walk_mixing}
Let $\mu_0 = \delta_{I_{\ell,r}}$ be the point mass distribution on the independent graph $I_{\ell,r} \in \cB_{\ell,r}$. Then for $t=1,2,\ldots$ define  $\mu_t = \rm{bPiv}(\mu_{t-1})$. For any graph $H \in \cB_{\ell,r}$ we have the inequality
$$|\mu_\ell(H) - \mu_{\rm{unif}}(H)| \leq 2^{- t}.$$
\end{lemma}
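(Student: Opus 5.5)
The plan is to mirror the proof of \cref{lem:random_walk_mixing}, with \cref{lem:eigenvalue_rank_bipartite} supplying the eigenvalues. Here the subscript in $\mu_\ell$ should be read as the time index $t$, since $\ell$ is already the size of $L$. The claimed bound $2^{-t}$ is crude, so no rank-counting as in \cref{thm:graphrank_counting} should be needed: the trivial bound on each nontrivial eigenvalue already suffices.

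\textbf{Spectral expansion.} First I will note that $\{\chi_G\mid G\in\cB_{\ell,r}\}$ is an orthogonal basis of $\R^{\cB_{\ell,r}}$, since these are exactly the characters of the group $\bF_2^{L\times R}$. The point mass $\delta_{I_{\ell,r}}$ has inner product $1$ with every character, because $\chi_G(I_{\ell,r})=1$ for all $G$. It follows that
\[
\mu_0=2^{-\ell r}\sum_{G\in\cB_{\ell,r}}\chi_G .
\]
Applying $\rm{bPiv}$ $t$ times and using \cref{lem:eigenvalue_rank_bipartite} gives
\[
\mu_t=2^{-\ell r}\sum_{G}\lambda_G^t\chi_G .
\]
The $G=I_{\ell,r}$ term is $2^{-\ell r}\chi_{I_{\ell,r}}=\mu_{\rm{unif}}$, since $\lambda_{I_{\ell,r}}=2^{0}=1$. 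Hence
\[
\mu_t-\mu_{\rm{unif}}=2^{-\ell r}\sum_{G\neq I_{\ell,r}}\lambda_G^t\chi_G .
\]

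\textbf{Bounding the error.} Evaluating at any $H\in\cB_{\ell,r}$ and using $|\chi_G(H)|=1$, I get
\[
|\mu_t(H)-\mu_{\rm{unif}}(H)|\le 2^{-\ell r}\sum_{G\neq I_{\ell,r}}2^{-t\,\rank(A_G)} .
\]
Every $G\neq I_{\ell,r}$ has a nonzero bipartite adjacency matrix, so $\rank(A_G)\ge 1$ and each summand is at most $2^{-t}$. There are fewer than $2^{\ell r}$ such $G$, so the whole expression is at most $(1-2^{-\ell r})2^{-t}\le 2^{-t}$.

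The only point needing care is the orthogonality and eigenvector bookkeeping, and that is already handled by \cref{lem:eigenvalue_rank_bipartite}. Unlike in the nonbipartite case, no condition such as $m>2k$ is required, because we never sum a geometric series over rank classes. A sharper bound of the form $2^{O(\ell+r)-\ell r-t}$ could be obtained by counting rank-$s$ matrices (at most $2^{s(\ell+r)}$ of them), if later applications need it.
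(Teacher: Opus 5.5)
Your proof is correct and takes the same route the paper indicates: expand $\delta_{I_{\ell,r}}$ in the characters $\chi_G$ as in \cref{lem:random_walk_mixing}, then bound every nontrivial eigenvalue $2^{-\rank(A_G)}$ from \cref{lem:eigenvalue_rank_bipartite} by $1/2$, so no rank counting is needed. You are also right that the subscript in $\mu_\ell$ is a typo for $\mu_t$.
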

The proof is similar to \cref{lem:random_walk_mixing}, using the simpler observation that every eigenvector other than the steady state has eigenvalue at most $1/2$.

Next, we argue that this mixing process is exactly simulated by pivoting on a sequence of edges outside of the set. We require a subtle independence lemma analogous to \cref{lem:complementing_independence} which shows that pivoting on an edge in $\bG(\ell,r,1/2)$ does not affect the distribution of the rest of the graph, even if we condition on other information. Recall that for sets $X,Y$ we define $[X,Y]=\{\{x,y\}\mid x\in X, y\in Y\}$.

\begin{lemma}\label{lem:pivot_independence_bipartite}
Let $L,R$ be disjoint sets; let $W_1\subseteq \hat V_1\subseteq L$ and $U_1=L\setminus \hat V_1$, and let $W_2\subseteq \hat V_2\subseteq R$ and $U_2=R\setminus \hat V_2$. Fix an ordered bipartite graph $\hat G=(\hat V_1,\hat V_2,\hat E)$ and let $e_1,\ldots, e_t$ be a (possibly repeating) sequence of pairs contained in $W_1 \cup W_2$ such that $\hat G\times (e_i)_{i=1}^t$ is defined. Partition $[L,R]$ into 
\[E_1=[\hat V_1,\hat V_2]\cup [W_1,U_2]\cup [U_1,W_2]\text{\quad and \quad}E_2=[U_1,U_2]\cup [\hat V_1\setminus W_1,U_2]\cup [U_1,\hat V_2\setminus W_2].\]
Let $G$ be a random graph drawn from $\bG(L,R,1/2)$ conditioned on $G[\hat V]=\hat G$ and let $G'=G\times (e_i)_{i=1}^t$. Then $E(G')\cap E_2$ is uniformly random and independent of $E(G)\cap E_1$.
\end{lemma}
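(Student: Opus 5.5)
We mirror the proof of \cref{lem:complementing_independence}. Fix a subset $A$ of $E_1$. It suffices to show that, conditioned on the event $E(G)\cap E_1=A$, the random set $E(G')\cap E_2$ is uniformly distributed on subsets of $E_2$. Note that the conditioning on $G[\hat V]=\hat G$ is already absorbed in $A$, since $[\hat V_1,\hat V_2]\subseteq E_1$. Under $\bG(L,R,1/2)$ the edge sets $E(G)\cap E_1$ and $E(G)\cap E_2$ are independent and uniform. We proceed by induction on $t$; the case $t=0$ is immediate from the definition of $\bG(L,R,1/2)$.

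For the inductive step, let $G''=G\times(e_i)_{i=1}^{t-1}$ and $e_t=\{v,w\}$ with $v,w\in W_1\cup W_2$. By the induction hypothesis, $E(G'')\cap E_2$ is uniform conditioned on $E(G)\cap E_1=A$. We then use the description of a pivot recorded before the Observation: $G''\times vw$ is obtained from $G''$ by taking the symmetric difference with the complete tripartite graph on $N(v)\cap N(w)$, $N(v)\setminus(N(w)\cup\{w\})$, $N(w)\setminus(N(v)\cup\{v\})$, and then swapping the labels of $v$ and $w$. In the bipartite setting the first part is empty, so this graph is the complete bipartite graph $K_{S,T}$, where $S=N_{G''}(w)\setminus\{v\}$ and $T=N_{G''}(v)\setminus\{w\}$ (with one of $v,w$ in $L$ and the other in $R$).

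The key point is that $S$, $T$, and the label swap are determined by $A$ alone. The neighborhoods of $v,w\in W_1\cup W_2$ in $G''$ consist of pairs in $[\hat V_1,\hat V_2]\cup[W_1,U_2]\cup[U_1,W_2]=E_1$. So we must check that edges incident to $W_1\cup W_2$ in $G''$ are a function of $E(G)\cap E_1$, which is a secondary induction along the sequence. Each pivot on a pair inside $W_1\cup W_2$ toggles edges between neighborhoods of two $W$-vertices and permutes labels within $W_1\cup W_2$. An edge between two neighbors of $\{v,w\}$ whose endpoint is in $W$ is toggled based on $E_1$-data and stays in $E_1$; edges inside $E_2$ may be toggled, but by a rule depending only on $E_1$-data. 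Also, $\hat G\times(e_i)$ being defined means each $e_i$ is an edge at the time of pivoting, a fact determined by $A$.

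Hence the step $G''\mapsto G'$ acts on $E_2$ as $X\mapsto X\symdif(E(K_{S,T})\cap E_2)$ composed with the relabeling of $v,w$. The relabeling maps $E_2$-pairs incident to $\{v,w\}$ to pairs still incident to $W$, hence into $E_1$, so it does not touch $E_2$ at all. Since $S,T$ are fixed given $A$, the map is an involutive bijection on subsets of $E_2$ and preserves the uniform measure, completing the induction. The main obstacle is the bookkeeping showing that the neighborhoods of $W$-vertices and the label swaps stay inside $E_1$-determined data throughout the sequence; everything else is identical to \cref{lem:complementing_independence}.
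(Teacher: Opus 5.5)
Your proof is correct and follows the paper's argument: fix $A=E(G)\cap E_1$, induct on $t$, and note that each pivot acts on $E_2$ as the translation $X\mapsto X\symdif (E(K_{S,T})\cap E_2)$ with $S,T$ determined by $A$. This preserves the uniform measure, and the label swap only moves vertices of $W_1\cup W_2$, so it never touches $E_2$. Your secondary induction, showing that every edge of $G''$ incident to $W_1\cup W_2$ is a function of $A$, is exactly the bookkeeping the paper leaves implicit. State it in that form rather than saying these neighborhoods lie in $E_1$: once a vertex of $W$ switches sides, its new neighbors can lie in its original part (e.g.\ a pair $\{v,u\}$ with $v\in W_1$, $u\in U_1$), and such a pair is not in $[L,R]$ at all.
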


Note that as defined above, $L(G')\supseteq L\setminus W_1$ and $R(G')\supseteq R\setminus W_2$, so $E_2$ respects the partition of $G'$ and hence it makes sense to say that $E(G)\cap E_2$ is uniformly random.
\begin{proof}
We sketch the same argument as \cref{lem:complementing_independence}. Fix a subset $A\subseteq E_1$ and condition on $E(G)\cap E_1=A$; we prove by induction on $t$ that $E(G')\cap E_2$ is uniformly random. The base case $t=0$ follows by definition of $\bG(L,R,1/2)$.

Suppose $t\geq 1$ and let $G''=G\times(e_i)_{i=1}^{t-1}$ where by induction $E(G'')\cap E_2$ is uniformly random. Then letting $e_t=\{v,w\}$, $G'=G''\times e_t$ is obtained by the involutive map $H\mapsto H\triangle K_{S,T}$ where $S=N_{G''}(v)$ and $T=N_{G''}(w)$ are determined by $A=E(G)\cap E_1$. This induces an involution $E(G'')\cap E_2\mapsto E(G')\cap E_2$ and therefore preserves the uniform measure.
\end{proof}

\begin{lemma}\label{lem:complementing_mixing_bipartite}
Let $G\sim \bG(\ell,r,1/2)$. Let $U_1, \hat V_1\subseteq L(G)$ and $U_2,\hat V_2\subseteq R(G)$ be disjoint where $|U_1|=k_1$ and $|U_2|=k_2$. Let $\hat G=(\hat V_1,\hat V_2,\hat E)$ be an ordered bipartite graph and let $(v_iw_i)_{i=1}^D$ be a sequence of disjoint pairs in $[\hat V_1,\hat V_2]$ such that $\hat G\times (v_iw_i)_{i=1}^D$ is well-defined. Then
\[
\Big|\PP\Big[G\times (x_i)_{i=1}^D[U_1,U_2]=G[U_1,U_2]\ \Big|\ G[\hat V]=\hat G\Big]-2^{-k_1k_2}\Big|\leq 2^{-D}
\]
\end{lemma}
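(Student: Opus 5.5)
The plan mirrors the proof of \cref{lem:complementing_mixing}, with $\rm{bPiv}$ in place of $\rm{Com}$/$\rm{Piv}$ and \cref{lem:pivot_independence_bipartite} in place of \cref{lem:complementing_independence}; here $x_i=\{v_i,w_i\}$. Condition throughout on $G[\hat V]=\hat G$. For $t=0,\ldots,D$, set $G_t\coloneqq G\times(v_iw_i)_{i=1}^t$, which is defined because the pivots only need the edges $v_iw_i$, and these are determined by $\hat G$ and the earlier pivots inside $\hat V$. Let $\mu_t$ be the conditional distribution of $H_t\coloneqq(G_t\symdif G)[U_1,U_2]$, viewed as an element of $\cB_{k_1,k_2}$ on the fixed bipartition $(U_1,U_2)$. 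Since the pairs are disjoint and each lies in $\hat V$, the vertices of $U_1$ stay in the left part and those of $U_2$ in the right part throughout, so $H_t$ is a well-defined ordered bipartite graph. Then $\mu_0=\delta_{I_{k_1,k_2}}$.

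\textbf{Identifying each step as a $\rm{bPiv}$ step.} Use the description of the pivot as a symmetric difference with the complete tripartite graph on $N(v)\cap N(w)$, $N(v)\setminus N(w)$, $N(w)\setminus N(v)$. In a bipartite graph with $v\in L$ and $w\in R$ we have $N(v)\subseteq R$ and $N(w)\subseteq L$, so the first class is empty and the pivot toggles exactly the pairs in $[N(w)\cap L,\,N(v)\cap R]$. The label swap of $v,w$ does not affect $[U_1,U_2]$. Hence $H_t=H_{t-1}\symdif K_{S,T}$ with $S=N_{G_{t-1}}(w_t)\cap U_1$ and $T=N_{G_{t-1}}(v_t)\cap U_2$. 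Note that $S$ consists of $U_1$-vertices because $w_t\in R$ at time $t-1$: the earlier disjoint pivots never touched $w_t$.

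\textbf{Independence.} Apply \cref{lem:pivot_independence_bipartite} with $W_1=\{v_i:i<t\}$ and $W_2=\{w_i:i<t\}$. Then $E(G_{t-1})\cap E_2$ is uniform and independent of $E(G)\cap E_1$. Since $v_t\in\hat V_1\setminus W_1$ and $w_t\in\hat V_2\setminus W_2$, the edge sets $[v_t,U_2]$ and $[U_1,w_t]$ lie in $E_2$, so $S$ and $T$ are uniform and mutually independent. They are also independent of $E(G)\cap E_1$, which contains $\hat G$ and all edges from $W_1\cup W_2$ to $U$. The variable $H_{t-1}$ is a function of that data: each earlier toggle set is determined by neighborhoods of the $v_i,w_i$ with $i<t$ in the graph before step $i$, and those neighborhoods are functions of $E(G)\cap E_1$ (inductively, by the same tripartite description). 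Therefore $\mu_t=\rm{bPiv}(\mu_{t-1})$.

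\textbf{Conclusion.} By \cref{lem:bip_random_walk_mixing}, after $D$ steps $|\mu_D(I_{k_1,k_2})-2^{-k_1k_2}|\le 2^{-D}$. The event $G_D[U_1,U_2]=G[U_1,U_2]$ is exactly $H_D=I_{k_1,k_2}$, which gives the claim.

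The main obstacle is the bookkeeping in the independence step: checking that $H_{t-1}$ really is measurable with respect to $E(G)\cap E_1$, and that the fresh edges at $v_t$ and $w_t$ into $U$ lie in $E_2$ with the correct orientation after the earlier pivots. This works because of disjointness and because pivots only change adjacencies among neighbors of the pivoted pair. The proof is an inductive check like the one in \cref{lem:complementing_mixing}; if necessary, \cref{lem:pivot_independence_bipartite} can be invoked with $U_1,U_2$ enlarged to all unrevealed vertices.
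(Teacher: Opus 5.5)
Your proposal is correct and matches the paper's proof. You track $H_t=(G_t\symdif G)[U_1,U_2]$, apply \cref{lem:pivot_independence_bipartite} with $W_1=\{v_i:i<t\}$ and $W_2=\{w_i:i<t\}$ to get $\mu_t=\rm{bPiv}(\mu_{t-1})$, and finish with \cref{lem:bip_random_walk_mixing}; your bookkeeping ($S=N_{G_{t-1}}(w_t)\cap U_1$, $T=N_{G_{t-1}}(v_t)\cap U_2$, and enlarging $U_1,U_2$ to $L\setminus\hat V_1$, $R\setminus\hat V_2$ for the independence lemma) is slightly more careful than the paper's, which swaps $S$ and $T$.
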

\begin{proof}
We follow \cref{lem:complementing_mixing} with adjusted notation. Let $U=U_1\cup U_2$ and $\hat V=\hat V_1\cup \hat V_2$. For times $t=0,1,\ldots, D$, define $G_t\coloneqq G\times (v_iw_i)_{i=1}^t$ and define $\mu_t\in [0,1]^{\cG_k}$ as the distribution of $H_t:=(G_t\symdif G)[U]$, conditioned on $G[\hat V]=\hat G$. Then we have the point mass $\mu_0=\delta_{I_{k_1,k_2}}$.

At time $t$, let $W_1=\{v_i\mid i<t\}$ and $W_2=\{w_i\mid i<t\}$. Observe that $H_{t-1}$ is a function of $G[\hat V]=\hat G$ and $N_G(u)$ for $u\in W_1\cup W_2$. Then we have $H_t=H_{t-1}\symdif K_{S,T}$ where $S=N_{G_{t-1}}(v_t)\cap U_2$ and $T=N_{G_{t-1}}(w_t)\cap U_1$. By \cref{lem:pivot_independence_bipartite} (applied with $W_1,W_2,\hat V_1,\hat V_2,U_1,U_2$), we have that $S$ (resp. $T$) is a uniformly random subset of $U_1$ (resp. $U_2$), and conditionally independent of $G[\hat V]$ and $N_G(u)$ for $u\in W_1\cup W_2$. Thus $S$ (resp. $T$) is independent of $H_{t-1}$. Then by definition, we have $\mu_t=\rm{bPiv}(\mu_{t-1})$ for all $t$, so by \Cref{lem:bip_random_walk_mixing},
\begin{align*}
    \Big|\PP\Big[G\circ (x_i)_{i=1}^D[U]=G[U]\ \Big|\ G[\hat V]=\hat G\Big]-2^{-k_1k_2}\Big| &= |\mu_D(I_{k_1,k_2}) - \mu_{\rm{unif}}(I_{k_1,k_2})|\leq 2^{- D}
\end{align*}
as desired.
\end{proof}

We would now like to prove the analogue of \cref{lem:JJ' are mixed} to relate two mixed graphs $G_J,G_{J'}$, but we first need an analogue of the reordering \cref{lem:minor-reordering}. The following analogue of \cref{thm:structure of minors} follows from \cref{cor:pm_equiv_to_matroid_minor}.

\begin{corollary}
    Let $H,G$ be a ordered bipartite graphs and let $H$ be a pivot-minor of $G$. Then for every $v \in V(G) \setminus V(H)$, $H$ is a pivot-minor of
    \begin{enumerate}
        \item $G - v$, or
        \item $G \times vu - v$ for some $u \in N(v)$.
    \end{enumerate}
\end{corollary}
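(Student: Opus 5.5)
The plan is to pass to matroids via \cref{cor:pm_equiv_to_matroid_minor} and \cref{prop:fundamental_graph_equivalence}, apply the standard matroid fact about a single element, and translate back. Write $G=(L,R,E)$ and $H=(L',R',E')$, and let $M=M(G)$ and $N=M(H)$. By the converse direction of \cref{cor:pm_equiv_to_matroid_minor}, $N$ is a minor of $M$ on ground set $V(H)\not\ni v$. For any single element $v$ outside the ground set of a minor $N$ of $M$, it is standard (Oxley) that $N$ is a minor of $M\setminus v$ or of $M/v$: write $N=M/C\setminus D$ with $v\in C\disjcup D$ and reorder, since deletions and contractions commute.

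\textbf{Case $v\in L$.} Suppose first that $N$ is a minor of $M\setminus v$. By \cref{prop:fundamental_graph_equivalence}(3), $M\setminus v=M(G-v)$, and the ordered bipartite graph $G-v$ is a fundamental graph of it. Now suppose $N$ is a minor of $M/v$. If $v$ is a loop of $M$, its column in $[I_R\ A]$ is zero, so $v$ is isolated in $G$. Then $M/v=M\setminus v$ and we are back in the first case. Otherwise $v$ has a neighbor $u\in R$, and $G\times vu$ puts $v$ in the $R$-side. By \cref{prop:fundamental_graph_equivalence}(2),(3), $M/v=M(G\times vu - v)$.

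\textbf{Case $v\in R$.} This is symmetric. Deleting $v$ from $G$ gives a fundamental graph of $M/v$. For $M\setminus v$, either $v$ is a coloop, meaning its row of $A$ is zero so $v$ is isolated and $M\setminus v=M/v$, or we pivot on an edge $vu$ to move $v$ into $L$ and then delete it.

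It remains to return from matroids to ordered graphs. In each case we have an ordered bipartite graph $G_1\in\{G-v,\ G\times vu-v\}$ with $N$ a minor of $M(G_1)$. Applying the forward direction of \cref{cor:pm_equiv_to_matroid_minor}, the fundamental graph $H$ of $N$, with respect to the basis $R'$, is a pivot-minor of the fundamental graph $G_1$ of $M(G_1)$. The main subtlety lies here: the corollary must be read in the ordered sense, with the chosen bases matching the sides. I expect this to hold because the proof of that corollary in \cite{Oum05} builds the pivot sequence by pivoting to move the basis to one containing $R'$, which respects orientation, and the ordered pivot operation tracks exactly this basis exchange. If that reading is in doubt, I would argue directly by induction on $|V(G)\setminus V(H)|$ using a first-operation analysis, as in \cref{lem:inductive_reordering}, combined with \cref{lem:cancel pivots}. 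The isolated-vertex subcases need separate care only because no edge $vu$ exists there; they reduce to $G-v$ as shown above.
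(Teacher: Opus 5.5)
Your proposal is correct and takes the same route as the paper. The paper's proof is a one-line appeal to the matroid fact that $N$ is a minor of $M/v$ or $M\setminus v$, translated back through \cref{prop:fundamental_graph_equivalence} and \cref{cor:pm_equiv_to_matroid_minor}. You spell out what the paper leaves implicit: pivoting so that $v$ is on the correct side, the isolated-vertex (loop/coloop) subcases, and the ordered reading of the corollary. That last reading is indeed valid, because when $N=M/C\setminus D$ one can pivot from the basis $R$ to the basis $R'\cup C$.
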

\begin{proof}
This follows from the analogous fact about matroids: If $N$ is a minor of $M$ and $e$ is an element in the ground set of $M$ but not $N$, then $N$ is a minor of $M/e$ or $M \setminus e$.
\end{proof}

From this we get the following ``reordering lemma" analogue of \cref{lem:minor-reordering}. 

\begin{lemma}\label{lem:bipartite-pivot-reordering}
    Fix a set $\hat V$. Let $\hat{G}$ be an ordered bipartite graph with $V(\hat G) = \hat V$. Let $v_1w_1, \dots, v_sw_s$
    be a (possibly repeating) sequence of pairs such that $\hat G \times v_1w_1 \times \cdots \times v_sw_s$ is defined. Then there is a sequence $e_1,\dots, e_D$ of pairs in $\hat V$ such that for any ordered bipartite graph $G$ with $\hat V \subseteq V(G)$ and $G[\hat V] = \hat{G}$, we have
    \begin{enumerate}[(I)]
    \item[\emph{(I)}] $G\times v_1w_1 \times \cdots \times v_sw_s-\hat V =G\times e_1\times \cdots \times e_D-\hat V$.
    \item[\emph{(II)}] The pairs $e_1,\ldots, e_D$ are pairwise disjoint.
    \item[\emph{(III)}] If $v\in \hat V$ appears exactly once in $(v_i)_{i=1}^s \cup (w_i)_{i=1}^s$, then $v\in e_j$ for some $j\in [D]$.
    \end{enumerate}
\end{lemma}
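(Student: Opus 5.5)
We follow the proof of \cref{lem:minor-reordering} in \cref{section:proofofreorderinglemma}, adapted to the pivot-only setting. We first prove a $G$-dependent inductive statement, and then remove the dependence on $G$ with a gadget graph.

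\textbf{Inductive statement.} For every ordered bipartite $G$ with $V(G)=\hat V\disjcup U$, every sequence of pivots inside $\hat V$, and every special vertex $v\in\hat V$, there is a sequence $(e_i)$ of pairs satisfying (I) and (II) for $G$ such that either $v\notin\bigcup_i e_i$ or $v\in e_1$. We induct on $|V(G)|$. Let $H=G\times v_1w_1\times\cdots\times v_sw_s-\hat V$, which is a pivot-minor of $G$. By the bipartite analogue of \cref{thm:structure of minors} (the corollary just stated), $H$ is a pivot-minor of $G-v$ or of $G\times vu-v$ for some $u\in N(v)$.
\begin{itemize}
\item[(a)] In the case $G-v$, apply induction; the resulting sequence avoids $v$.
\item[(b)] In the case $G\times vu-v$, apply induction with $u$ as the special vertex to get $(e_i)_{i\ge2}$.
\begin{itemize}
\item[(b1)] If $u\notin\bigcup_{i\ge2}e_i$, set $e_1=\{v,u\}$.
\item[(b2)] If $e_2=\{u,u'\}$, then by \cref{lem:cancel pivots} we have $G\times vu\times uu'=G\times vu'$, so replace $e_2$ by $\{v,u'\}$ and shift.
\end{itemize}
\end{itemize}
This is simpler than \cref{lem:inductive_reordering}, since the single local complementation cases (item~\ref{itm:v second}) never occur. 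Disjointness is preserved because the inductive sequence lives in $V(G)\setminus\{v\}$.

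Some care is needed because the inductive step really is about $H$ being a pivot-minor of the smaller graph, not about equality of the remaining graphs. The right formulation is the following: if $H$ is a pivot-minor of $G$ with $V(H)=U$, then there is a disjoint sequence $(e_i)$ in $\hat V$ with $G\times(e_i)-\hat V=H$. The argument above proves exactly this, since deletions commute with pivots away from the deleted vertices. One also checks that the ordered structure is respected: a pivot swaps both ends across the bipartition, and the pivot-minor relation is taken in the ordered sense, so the parts of the final graph on $U$ agree automatically.

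\textbf{Gadget.} To make the sequence depend only on $\hat G$, build an ordered bipartite gadget $\gadget\supseteq\hat G$. For each of the two sides and each possible neighbourhood pattern into $\hat V$ (respecting the bipartition), add vertices so that every ordered bipartite graph on $\hat V\cup\{u_1,u_2\}$ extending $\hat G$ appears as an induced subgraph of $\gadget$ via an isomorphism fixing $\hat V$. This requires pairs $u_1,u_2$ on opposite sides, adjacent or not, and pairs on the same side. Apply the inductive statement to $\gadget$ to obtain $(e_i)$. Then (I) holds for any $G$ by checking each pair $u_1u_2\notin\hat V$ inside the induced copy $G[\hat V\cup\{u_1,u_2\}]$, exactly as in the proof of \cref{lem:minor-reordering}. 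Here we use that whether $u_1u_2$ is an edge after pivots inside $\hat V$ depends only on this induced subgraph.

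\textbf{Property (III).} Suppose $v$ appears exactly once among the endpoints of the pairs $v_iw_i$, say in the pair $v_{i^*}w_{i^*}$, but $(e_i)$ avoids $v$. Equating the two pivot sequences and cancelling gives $\gadget'\times v_{i^*}w_{i^*}=\gadget'\times(\text{pivots avoiding }v)$ for a suitable pivot-equivalent graph $\gadget'$. Choose gadget vertices $j_1,j_2$ on the side opposite $v$, nonadjacent to each other, each with neighbourhood exactly $\{v\}$ in $\hat V$ (and with $j_1,j_2$ outside $N(w_{i^*})$). These neighbourhoods are unchanged in $\gadget'$. Pivoting $vw_{i^*}$ toggles edges between $N(v)\setminus N(w)$ and $N(w)\setminus N(v)$, and pivots avoiding $v$ never touch pairs inside $\{j_1,j_2\}$ or their edges to $v$. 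The main obstacle is making this contradiction precise. The cleanest witness is likely a single vertex $j$ with $N(j)\cap\hat V=\{v\}$ together with one neighbour $j'$ of $w_{i^*}$ only: the pivot creates the edge $jj'$ (bipartite-valid when $j,j'$ lie on opposite sides), while pivots avoiding $v$ cannot, since $j$ stays adjacent only to $v$. We will have to verify this bookkeeping against the intermediate pivots.
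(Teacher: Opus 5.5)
Your induction (cases (a), (b1), (b2)) and your gadget follow the paper's route. The paper proves this lemma via the two-alternative inductive lemma and the gadget from \cref{lem:minor-reordering}. You are also right that the paper's vertex-minor witness for (III) cannot transfer, since two non-adjacent vertices with neighbourhood $\{v\}$ lie in the same part.

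A caution on your ``right formulation'': as stated it is false. Take $\hat V=\emptyset$ and $H=G\times u_1u_2$ for an edge inside $U$. Then $H$ is a pivot-minor on $U$, but the only available sequence is the empty one, which gives $G\neq H$. The invariant you need to carry is that $H$ arises from $G$ by pivots inside $\hat V$. The corollary must then be used in a refined form: $u\in\hat V$, and the smaller graph still realizes $H$ by pivots inside $\hat V\setminus\{v\}$. This follows from basis exchange in $M(G)$, since the final basis agrees with $R$ on $U$.

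\textbf{The gap: property (III).} You leave (III) open, and neither ingredient you propose works.

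First, (I) only gives equality \emph{after deleting} $\hat V$. So you cannot undo the later pivots on both sides to reach a full-graph identity $\gadget'\times v_{i^*}w_{i^*}=\gadget'\times(\cdots)$.

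Second, the witness ``$j$ adjacent only to $v$, $j'$ adjacent only to $w_{i^*}$'' fails. Take $L(\hat G)=\{v\}$, $R(\hat G)=\{w,x\}$, edges $vw,vx$, and the sequence $(vw,wx)$, in which $v$ occurs once with partner $w$. By \cref{lem:cancel pivots} this sequence is the single pivot $vx$. Add $j\in R$ with $N(j)=\{v\}$ and $j'\in L$ with $N(j')=\{w\}$. A direct computation shows $jj'$ is a non-edge at the end, exactly as for the empty sequence. There is a further problem: $w_{i^*}$ may have switched sides before step $i^*$, which puts both of your witnesses in the same part.

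\textbf{A correct argument, with no cancellation.} Each pivot flips the side of both its endpoints. So under $(v_iw_i)$ the vertex $v$ ends on the side opposite to where it started, while under any $(e_i)$ avoiding $v$ it ends where it started.
\begin{enumerate}[(i)]
\item Say $v\in L$; the other case is symmetric. Extend $\hat G$ by $j'\in R$ with $N(j')\cap\hat V=\{v\}$, and $j\in L$ with $N(j)\cap\hat V=N_{\hat G}(v)$ and $jj'\notin E$.
\item In the representation $[I_R\ A]$ of $M(G)$, the column of $v$ is the sum of the columns of $j$ and $j'$. Every other element of $\hat V$ has zero $j'$-coordinate.
\item Let $G_f$ be the result of any pivots inside $\hat V$, and let $X=R(G_f)\cap\hat V$. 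Then $X\cup\{j'\}$ is a basis. By \cref{prop:fundamental_graph_equivalence}, $G_f-\hat V$ is the fundamental graph of $M(G)/X\setminus(\hat V\setminus X)$ with respect to $\{j'\}$.
\item If $v\notin X$, the columns of $X$ span $\F_2^{R\cap\hat V}\times\{0\}$, which contains the column of $j$. So $j$ is a loop and $jj'$ is a non-edge.
\item If $v\in X$, the column of $j$ cannot lie in that span, or the column of $j'$ would too. So $j$ and $j'$ are parallel and $jj'$ is an edge.
\end{enumerate}
Hence $jj'$ survives exactly when $v$ changed sides an odd number of times, and (I) fails for this two-vertex extension, which your gadget contains. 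This shows that any sequence satisfying (I) for all $G$ satisfies (III).
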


The proof of \cref{lem:bipartite-pivot-reordering} is similar to the proof of \cref{lem:minor-reordering}. In particular, it uses the following inductive lemma, analogous to \cref{lem:inductive_reordering}.
\begin{lemma}
For any ordered bipartite graph $G$ with $V(G) = \hat V \disjcup U$, sequence of pairs $v_1w_1, \dots, v_s w_s$ such that $G \times v_1w_1 \times \dots \times v_s w_s$ is defined, and special vertex $v \in \hat V$, there exists a sequence $(y_i)_{i=1}^D$ satisfying (I) and (II) above such that either $v\notin \bigcup_iy_i$ or $v\in y_1$.
\end{lemma}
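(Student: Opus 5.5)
We argue by induction on $|V(G)|$, mirroring the proof of \cref{lem:inductive_reordering}, with the bipartite analogue of \cref{thm:structure of minors} (the preceding corollary) replacing the Bouchet--Fon-Der-Flaass theorem. Let $H=G\times v_1w_1\times\cdots\times v_sw_s - \hat V$. This is a pivot-minor of $G$ in the ordered sense, and $V(H)=U$.

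Fix the special vertex $v\in\hat V$. By the corollary, $H$ is a pivot-minor of either $G-v$ or $G\times vu - v$ for some $u\in N_G(v)$.

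In the first case, $H$ is a pivot-minor of $G-v$. We first need to write $H$ as $(G-v)$ pivoted along some sequence, minus $\hat V\setminus\{v\}$. Any pivot-minor of a graph can be realized by a sequence of pivots followed by deletions, since a deletion followed by a pivot equals the pivot followed by the deletion when the pivot avoids the deleted vertex. So there is a sequence of pairs in $\hat V\setminus\{v\}$ producing $H$ from $G-v$. Applying the induction hypothesis to $G-v$ with vertex set $(\hat V\setminus\{v\})\sqcup U$, with any special vertex, gives a disjoint sequence $(y_i)$ avoiding $v$. Since pivots at edges not containing $v$ commute with deleting $v$, we have
\[
G\times(y_i)_i-\hat V=(G-v)\times(y_i)_i-(\hat V\setminus\{v\})=H,
\]
so (I) and (II) hold and $v\notin\bigcup_i y_i$.

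In the second case, $H$ is a pivot-minor of $G':=G\times vu - v$. Apply the induction hypothesis to $G'$ with special vertex $u$, obtaining $(y_i)_{i\ge 2}$ with either $u\notin\bigcup_{i\geq 2} y_i$ or $u\in y_2$.
If $u\notin\bigcup_{i\geq 2} y_i$, set $y_1=\{v,u\}$. Then $G\times vu\times(y_i)_{i\ge2}-\hat V=H$, the pairs are disjoint, and $v\in y_1$.
If $y_2=\{u,x\}$, then $ux\in E(G\times vu)$, so by \cref{lem:cancel pivots} we have $vx\in E(G)$ and $G\times vu\times ux=G\times vx$. Replace $y_2$ by $\{v,x\}$ and drop the old first pivot, giving the sequence $\{v,x\},y_3,\dots$. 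This is disjoint because $v$ does not occur later (the later pairs lie in $G'$, which does not contain $v$), and $v$ now lies in the first pair.

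The point requiring care, and the main obstacle, is bookkeeping for the ordered setting. Pivoting swaps $v,u$ between the sides, and the identity from \cref{lem:cancel pivots} must hold as ordered bipartite graphs. The side labels of the surviving vertices after $G\times vu\times ux$ and after $G\times vx$ agree: in both, $v$ and $x$ end on swapped sides, and $u$ is swapped twice, so it is restored. Since $v$ is deleted, and the statement only concerns $U$ after deleting $\hat V$, this is enough.

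The base case is $\hat V=\varnothing$, where we take the empty sequence. Unlike the vertex-minor case, there is no third option involving a single local complementation, so the case analysis is strictly simpler. Finally, the sequence depends only on $G$, as the statement requires. Making it depend only on $\hat G$ is a separate step via a gadget, exactly as in the proof of \cref{lem:minor-reordering}.
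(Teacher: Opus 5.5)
Your case analysis ($y_1=\{v,u\}$ when $u$ is unused, merging $\{u,x\}$ into $\{v,x\}$ via \cref{lem:cancel pivots}, and the side-label check) is the right one. It follows the template the paper points to; the paper gives no separate proof of this lemma. The gap is in the two invocations of the induction hypothesis. That hypothesis requires $H$ to be produced from $G-v$ (resp.\ $G\times vu-v$) by pivots on pairs inside $\hat V\setminus\{v\}$. In the second case it also requires the new special vertex $u$ to lie in $\hat V$. The corollary gives neither: it only says $H$ is \emph{some} ordered pivot-minor, and $u$ is an arbitrary neighbor of $v$.

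Commuting deletions past pivots does give $H=(G-v)\times e_1\times\cdots\times e_q-(\hat V\setminus\{v\})$. But nothing prevents the $e_j$ from using vertices of $U$, so your ``so there is a sequence of pairs in $\hat V\setminus\{v\}$'' does not follow. If $u\in U$, the second case actually breaks. Pivots inside $\hat V$ never move vertices of $U$, so $R(H)=R(G)\cap U$, whereas $u$ lies on the opposite side in $G\times vu-v$. Hence $H$ is not of the form $(G\times vu-v)\times(\text{pairs in }\hat V\setminus\{v\})-(\hat V\setminus\{v\})$, and $u$ cannot serve as the special vertex. (The paper's proof of \cref{lem:inductive_reordering}, which you are mirroring, passes over the same point.)

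The missing ingredient is the matroid structure of the ordered setting. Pivots inside $\hat V$ preserve $R\cap U$. Conversely, suppose $H$ is an ordered pivot-minor of $G'$ on $U$ with $R(H)=R(G')\cap U$; then $H=G'\times(\text{pairs outside }U)-(V(G')\setminus U)$. To see this:
\begin{itemize}
\item By \cref{cor:pm_equiv_to_matroid_minor}, $M(H)=M(G')/C\setminus D$ with $C$ independent and $D$ coindependent.
\item So $R(H)\cup C$ is a basis of $M(G')$ that agrees with $R(G')$ on $U$.
\item Such bases are linked by single exchanges outside $U$, and each exchange is a pivot on an edge outside $U$.
\item A binary matroid has a unique fundamental graph for each basis, so deleting $V(G')\setminus U$ at the end returns $H$.
\end{itemize}
This justifies your first case. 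In the second case, the analogous basis $B''$ of $M(G)$ satisfies $B''\cap U=R(G)\cap U$ and contains $v$ exactly when $v\notin R(G)$. Symmetric exchange between $R(G)$ and $B''$ at $v$ then yields a neighbor $u'\in\hat V$ of $v$. By \cref{lem:cancel pivots}, $G\times vu'-v=(G\times vu-v)\times uu'$, so you may take $u\in\hat V$, and the rest of your argument goes through.

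This viewpoint in fact proves the lemma outright, bypassing the corollary and the gadget. Let $X\subseteq\hat V$ be the set of vertices used an odd number of times. Then $G\times v_1w_1\times\cdots\times v_sw_s$ is the fundamental graph of $M(G)$ for the basis $R(G)\symdif X$. The $(X\cap R(G))\times(X\cap L(G))$ block of $\hat G$ is nonsingular over $\F_2$. So you can pivot greedily along edges inside $X$, starting at $v$ if $v\in X$. This produces disjoint pairs that depend only on $\hat G$ and give the same final graph.
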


Applying the above to the appropriate ``gadget" $\gadget$ as in the proof of \cref{lem:minor-reordering} yields the desired result. We now apply the $\rm{bPiv}$ random walk to $G_J$,$G_{J'}$ to obtain the analogue of \cref{lem:JJ' are mixed}.

\begin{lemma}\label{lem:JJ' are mixed bipartite} Let $G=(L,R,E)\sim \bG(\ell,r,1/2)$ and $U\subseteq V(G)$ such that $|U\cap L|=k_1$ and $|U\cap R|=k_2$. Let $\hat V_1=\{v_1,\ldots, v_m\}\subseteq L\setminus U$ and $\hat V_2=\{w_1,\ldots, w_m\}\subseteq R\setminus U$. Let $J,J'\subseteq [m]$ and define $G_J,G_{J'}$ as in Equation \eqref{eq:GJdefnbipartite}. Then we have
\[
\big|\PP[G_J[U]=G_{J'}[U]]-2^{-k_1k_2}\big|\leq (3/4)^{-|J\symdif J'|}
\]
\end{lemma}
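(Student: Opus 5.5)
The plan is to copy the proof of \cref{lem:JJ' are mixed}, using \cref{lem:bipartite-pivot-reordering} and \cref{lem:complementing_mixing_bipartite} in place of \cref{lem:minor-reordering} and \cref{lem:complementing_mixing}. The one new feature is the attempted pivots. I read the intended bound as $(3/4)^{|J\symdif J'|}$, since the exponent as printed has the wrong sign. The factor $3/4$ should come from averaging: if each index of $J\symdif J'$ yields a genuine pivot independently with probability $1/2$, and the number $D$ of genuine pivots gives conditional error $2^{-D}$, then $\Ex[2^{-D}]=\big(\tfrac12+\tfrac14\big)^{|J\symdif J'|}=(3/4)^{|J\symdif J'|}$.

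\textbf{Step 1: reduce to genuine pivots.} Condition on $G[\hat V]=\hat G$, which determines which attempted pivots in \eqref{eq:GJdefnbipartite} succeed, for both $J$ and $J'$. If $v_iw_i$ is an edge then it stays an edge after $\times v_iw_i$, so $\apiv v_iw_i$ is an involution. Hence, as in \eqref{eq:GJ'fromGJ},
\[
G_{J'}=G_J\apiv v_{i_s}w_{i_s}\apiv\cdots\apiv v_{i_1}w_{i_1}\apiv v_{j_1}w_{j_1}\apiv\cdots .
\]
Deleting the trivial attempts leaves a sequence of genuine pivots, depending only on $\hat G$. Let $S(\hat G)\subseteq J\symdif J'$ be the set of indices $i\in J\symdif J'$ whose attempt succeeds in the construction of $G_J$ (for $i\in J\setminus J'$) or of $G_{J'}$ (for $i\in J'\setminus J$). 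For $i\in S(\hat G)$, both $v_i$ and $w_i$ occur exactly once in this sequence, while indices in $J\cap J'$ occur twice or not at all.

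\textbf{Step 2: apply the reordering and mixing lemmas.} Apply \cref{lem:bipartite-pivot-reordering} to the graph $\hat G_J:=G_J[\hat V]$, which is a function of $\hat G$. This gives disjoint pairs $e_1,\dots,e_D$ with
\[
G_{J'}[U]=G_J\times(e_i)_{i=1}^D[U],
\]
and property (III) gives $|\bigcup_i e_i|\ge 2|S(\hat G)|$, hence $D\ge |S(\hat G)|$.

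The graph $G_J$ is uniform by the involution argument of \cref{prop:complementation preserves random}. The event $G[\hat V]=\hat G$ coincides with $G_J[\hat V]=\hat G_J$, since $\hat G_J$ determines $\hat G$ by undoing the attempts inside $\hat V$. So \cref{lem:complementing_mixing_bipartite}, applied to $G_J$, gives
\[
\big|\PP[G_J[U]=G_{J'}[U]\mid G[\hat V]=\hat G]-2^{-k_1k_2}\big|\le 2^{-|S(\hat G)|}.
\]
By the law of total probability, the unconditional error is at most $\Ex[2^{-|S|}]$ over a uniform $\hat G$.

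\textbf{Step 3 (main obstacle): distribution of $S$.} It remains to show that $\Ex[2^{-|S|}]\le(3/4)^{|J\symdif J'|}$. It suffices that the success indicators for $i\in J\symdif J'$ are i.i.d.\ Bernoulli$(1/2)$, or that they stochastically dominate this.

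The plan is to reveal the indices in increasing order. A pivot on $v_jw_j$ changes the edge $v_iw_i$, for $i\neq j$, only through the edges $v_iw_j$ and $v_jw_i$. So the status of $v_iw_i$ at the moment it is attempted equals its original bit XOR a function of cross edges $v_aw_b$ with $a\ne b$. I would show, by a measure-preserving involution argument in the spirit of \cref{lem:pivot_independence_bipartite}, that conditioned on all cross edges and all earlier diagonal bits, the current status is a fair coin. The diagonal bit $v_iw_i$ has not yet been touched by any pivot except through such XORs, and the two chains for $J$ and $J'$ share the same diagonal bits.

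The delicate point is that the same original diagonal bit could affect both the $J$-chain and the $J'$-chain. This is handled by noting that for $i\in J\symdif J'$ only one of the two chains attempts index $i$, so each fresh index contributes exactly one fair, conditionally independent coin. If exact independence proves awkward, I would fall back on showing the conditional success probability is exactly $1/2$ given the past, which already yields $\Ex[2^{-|S|}]=(3/4)^{|J\symdif J'|}$ by iterated conditioning.
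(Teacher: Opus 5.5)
Your proposal is correct and follows essentially the same route as the paper. Both arguments condition on $G[\hat V]$, pass to the subsequence of successful attempted pivots, and use property (III) of \cref{lem:bipartite-pivot-reordering} to get $D\geq X$, where $X$ counts the successful indices in $J\symdif J'$. Both then apply \cref{lem:complementing_mixing_bipartite} to get a conditional error of $2^{-X}$, and average using $X\sim\bin(|J\symdif J'|,1/2)$ to obtain $(3/4)^{|J\symdif J'|}$. You are right that the printed exponent has a sign typo; the paper's proof ends with $(3/4)^{|J\symdif J'|}$.

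There is one harmless slip. An index $i\in J\cap J'$ can succeed in only one of the two chains, since its attempts take place in $G_{J\cap[i-1]}$ and $G_{J'\cap[i-1]}$, so it need not occur ``twice or not at all.'' This does not affect your bound, because extra singly-occurring indices only increase $D$.

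Your Step~3 argument writes each diagonal status as a fresh bit XOR a function of cross edges and earlier diagonal bits. This is a more explicit version of the paper's brief appeal to \cref{lem:pivot_independence_bipartite} for the mutual independence of the success events.
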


\begin{proof}
Write $J=\{i_s<\ldots< i_1\}$ and $J'=\{i_{s+1}<\ldots<i_{s+s'}\}$. Fix some $\hat G\in \cB_{m,m}$ and condition on $G[\hat V_1,\hat V_2]=\hat G$. Then we have
\begin{align}
G_{J'}&=G_J\apiv (v_{i_t}w_{i_t})_{t=1}^s\apiv (v_{i_t}w_{i_t})_{t=s+1}^{s+s'}\notag\\
&=G_J\apiv (v_{i_t}w_{i_t})_{t=1}^{s+s'}\label{eq:pmGJ'fromGJ}\\
&=G_J \times (v_{j_t}w_{j_t})_{t=1}^{s''},\label{eq:pmGJ'fromGJ filtered_bipartite}
\end{align}
where $(j_t)_t$ is the subsequence of successful pivot attempts, which is a function $\hat G$. Let $X= X(\hat G)$ be the number of indices $j$ which occur exactly once in $(j_t)_t$ and such that $j\in J\symdif J'$. By \cref{lem:bipartite-pivot-reordering}, there exists $D = D(\hat G)$ and a sequence $(e_i)_{i=1}^{D}$ with properties (I)-(III), in particular  
$G_{J'}[U]=G_J\times(e_i)_{i=1}^{D}[U]$. By property (III), we have $D\geq X$, since $v_{j_t},w_{j_t}\in \bigcup_ie_i$ for every index $j_t$ appearing exactly once. By \cref{lem:complementing_mixing_bipartite},
\[\Big|\PP\big[G_J[U]=G_{J'}[U]\ \big|\  G[\hat V]=\hat G\big]-2^{-k_1k_2}\Big|\leq 2^{-D(\hat G)}\leq 2^{-X(\hat G)}.
\]
We now consider the unconditioned $G[\hat V]\sim \bG(m,m,1/2)$, and we prove that $X(G[\hat V])\sim \bin(|J\symdif J'|,1/2)$. Consider the sequence of attempted pivots in Equation \eqref{eq:pmGJ'fromGJ}. Now, for any index $j\in J\symdif J'$, there is some unique $t^*$ such that $i_{t^*}=j$. Moreover, in the graph $G'=G_J\apiv (v_{i_t}w_{i_t})_{t=1}^{t^*-1}$, we have the edge $v_jv_j'\in E(G')$ with probability 1/2. Therefore $j$ appears in the subsequence $(j_t)_t$ of successful pivot attempts with probablity $1/2$. By \cref{lem:pivot_independence_bipartite}, these edge events are mutually independent, so $X(G[\hat V])\sim \bin(|J\symdif J'|,1/2)$. Finally, using the law of total probability over $G[\hat V]$,
\begin{align*}
\Big|\PP\big[G_J[U]=G_{J'}[U]\big]-2^{-k_1k_2}\Big|
&=\Big|\Ex_{G[\hat V]}\big[\PP[G_J[U]=G_{J'}[U]\mid  G[\hat V]]-2^{-k_1k_2}\big]\Big|\\
&\leq \Ex_{G[\hat V]}\Big[\big|\PP[G_J[U]=G_{J'}[U]\mid  G[\hat V]]-2^{-k_1k_2}\big|\Big]\\
&\leq \Ex_{G[\hat V]}\big[2^{-X(G[\hat V])}\big]\\
&=(3/4)^{|J\symdif J'|},
\end{align*}
where the last equality uses the moment generating function of the binomial distribution.
\end{proof}

Next we provide the following fact about independence, similar to \cref{prop:symdiffindep}.

\begin{proposition}\label{prop:symdiffindep_bipartite}
Let $G\sim \bG(\ell,r,1/2)$, $U\subseteq V(G)$, and let $\hat V_1\subseteq L\setminus U$ and $\hat V_2\subseteq R\setminus U$. Let $J,J'\subseteq [m]$ and define $G_J,G_{J'}$ as in Equation \eqref{eq:GJdefnbipartite}. Then the graphs $G_J[U]$ and $(G_J\symdif G_{J'})[U]$ are independent.
\end{proposition}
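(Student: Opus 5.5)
We follow the same resampling strategy as in the proof of \cref{prop:symdiffindep}, with the attempted pivot $\apiv$ in place of local complementation. The plan has three steps.

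\emph{Step 1: $G_J$ is uniform.} Each map $\Gamma\mapsto \Gamma\apiv vw$ is an involution on $\cB_{\ell,r}$ with the partition $(L,R)$ fixed. Indeed, if $vw\in E(\Gamma)$, then $\Gamma\times vw$ still contains $vw$, and pivoting twice on the same edge is the identity. If $vw\notin E(\Gamma)$, the map does nothing. In the ordered setting the pivot swaps $v$ and $w$ between the parts. To stay in one fixed space, we regard the parts as fixed and track the label swap, or equivalently note that pivoting on $vw$ is a bijection between two ordered bipartite spaces of the same shape. Since a bijection preserves the uniform measure, we get $G_J\sim\bG$ on its (deterministic) partition. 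Because $v_i,w_i\notin U$, the sides of the vertices of $U$ never change. In particular, $E(G_J[U])$ and $E(G_J)\setminus E(G_J[U])$ are independent.

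\emph{Step 2: resample.} We sample $G_J$ first and then recover $G_{J'}$ through the identity \eqref{eq:pmGJ'fromGJ}. This expresses $G_{J'}$ as a fixed sequence of attempted pivots $\apiv v_{i_t}w_{i_t}$ applied to $G_J$, which is valid because each $\apiv$ is an involution.

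\emph{Step 3: the difference depends only on edges outside $G_J[U]$.} We claim that $(G_J\symdif G_{J'})[U]$ is a function of $E(G_J)\setminus E(G_J[U])$ alone. We prove by induction along the sequence that each intermediate graph $\Gamma_t$ satisfies two properties:
\begin{itemize}
\item[(a)] $\Gamma_t - U$ and the edges between $U$ and $\hat V_1\cup\hat V_2$ in $\Gamma_t$ are functions of $E(G_J)\setminus E(G_J[U])$;
\item[(b)] $\Gamma_t[U]\symdif G_J[U]$ is a function of $E(G_J)\setminus E(G_J[U])$.
\end{itemize}
For the inductive step, consider the next attempted pivot at $v w$ with $v,w\in\hat V$. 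Whether it succeeds depends only on whether $vw\in E(\Gamma_t)$, which is determined by (a). If it succeeds, the pivot toggles edges according to $K_{S,T}$, where $S=N_{\Gamma_t}(v)$ and $T=N_{\Gamma_t}(w)$. Both neighbourhoods are determined by (a), since they consist of edges incident to $\hat V$. The toggle therefore changes $\Gamma_t[U]$ by a set determined by (a), which preserves (b). It also changes edges not in $U$ by sets determined by (a), which preserves (a). The edges between $U$ and the rest of $V(G)$ outside $\hat V$ are also toggled, but those are again determined by the neighbourhoods together with the current edges, so they stay functions of the outside edges.

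Once the claim holds at the end of the sequence, $G_J[U]$ and $(G_J\symdif G_{J'})[U]$ are functions of independent random variables, and hence are independent.

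The main obstacle is Step 3. The success of each pivot attempt is itself random, so we must check that it is determined only by edges inside $V(G)\setminus U$ and never reads an edge of $G_J[U]$. This holds because every attempted pair lies in $\hat V_1\times\hat V_2$, which is disjoint from $U$.
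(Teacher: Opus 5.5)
Your proposal is correct and follows the paper's own argument: sample $G_J$ uniformly first, recover $G_{J'}$ from it via the attempted-pivot identity \eqref{eq:pmGJ'fromGJ}, and observe that $(G_J\symdif G_{J'})[U]$ depends only on $E(G_J)\setminus E(G_J[U])$, with your induction in Step 3 spelling out what the paper only asserts. One small point: with ordered pivots, the partition of $G_J$ depends on which attempts succeed, so ``deterministic partition'' holds only in your fixed-parts convention (undo the $v\leftrightarrow w$ relabeling after each successful pivot); this is harmless because that relabeling avoids $U$ and commutes with the subsequent attempted pivots, whereas your alternative ``bijection between two spaces of the same shape'' reading does not hold for attempted pivots.
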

The proof is identical to \cref{prop:symdiffindep}, by sampling $G_J$ first and then obtain $G_{J'}$ via Equation~\eqref{eq:pmGJ'fromGJ}. Next, we apply the second moment method to show that each ordered bipartite graph is a labeled pivot-minor of a random graph with high probability.

\begin{lemma}\label{lem:pm_high_probability} Let $G^*\sim \bG(\ell,r,1/2)$. Let $U\subseteq V(G)$ and $H=(U_1,U_2,E_H)$ be an ordered bipartite graph on $U$ and $m,m^*\geq 0$ such that $\ell,r\geq m+m^*+|U|$. Then $H$ is a labeled pivot-minor of $G$ (in the ordered sense) with probability at least $1-2^{|U_1||U_2|}(7/8)^m-2^{-m^*+k\log_2(m^*)+1}$.
\end{lemma}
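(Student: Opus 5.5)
The proof has two stages: first align the bipartitions of $H$ and $G^*$, then run a second moment argument on the aligned graph using the $\rm{bPiv}$ mixing machinery.

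\textbf{Alignment.} Set $k=|U|$. Choose disjoint sets $V_1^*\subseteq L^*\setminus U$ and $V_2^*\subseteq R^*\setminus U$ of size $m^*$. Then choose $\hat V_1=\{v_1,\ldots,v_m\}\subseteq L^*\setminus(U\cup V_1^*)$ and $\hat V_2=\{w_1,\ldots,w_m\}\subseteq R^*\setminus(U\cup V_2^*)$; the hypothesis $\ell,r\geq m+m^*+|U|$ guarantees room for all of these. Apply \cref{lem:align_partition} with $Y_1=U_1\cap R^*$ and $Y_2=U_2\cap L^*$. With probability at least $1-2(m^*)^k2^{-m^*}=1-2^{-m^*+k\log_2 m^*+1}$ this produces a pivot-minor $G=(L,R,E)$ of $G^*$ with $U_1\subseteq L$, $U_2\subseteq R$, and $G\sim\bG(L,R,1/2)$ conditional on success. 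The pivots used only touch $V^*\cup U$, so the sets $\hat V_1\subseteq L$ and $\hat V_2\subseteq R$ are untouched. Since pivot-minors compose, it now suffices to show that $H$ is a pivot-minor of this uniformly random $G$, failing with probability at most $2^{|U_1||U_2|}(7/8)^m$.

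\textbf{Second moment.} Define $G_J$ for $J\subseteq[m]$ by \eqref{eq:GJdefnbipartite}, let $X_J=\bbone_{G_J[U]=H}$, and set $X=\sum_J X_J$. Each attempted pivot is an involution preserving the uniform distribution on $\cB$, so $G_J[U]$ is uniform on ordered bipartite graphs with parts $(U_1,U_2)$. Hence $\Ex[X]=2^{m-k_1k_2}$, where $k_1=|U_1|$ and $k_2=|U_2|$. Repeating the variance computation of \cref{lem:vm_high_probability}, with \cref{prop:symdiffindep_bipartite} supplying the independence, gives
\[
\var[X]=\sum_J 2^{-k_1k_2}\sum_{J'}\big(\PP[G_J[U]=G_{J'}[U]]-2^{-k_1k_2}\big).
\]
By \cref{lem:JJ' are mixed bipartite}, whose bound is meant to read $(3/4)^{|J\symdif J'|}$, each inner term is at most $(3/4)^{|J\symdif J'|}$. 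This bound needs no large-$|J\symdif J'|$ threshold, so summing over $J'$ gives $\sum_i\binom mi(3/4)^i=(7/4)^m$. Therefore $\var[X]\leq 2^{m-k_1k_2}(7/4)^m$, and Chebyshev (\cref{lem:chebyshev}) yields
\[
\PP[X=0]\leq \frac{2^{m-k_1k_2}(7/4)^m}{2^{2m-2k_1k_2}}=2^{k_1k_2}(7/8)^m.
\]
A union bound over the alignment failure and $X=0$ then gives the stated probability.

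\textbf{Main obstacle.} The delicate point is that the second stage is applied to the random graph output by the alignment algorithm rather than to a fresh $\bG$ sample. I would handle it by conditioning on the algorithm's revealed information and the success event. By \cref{lem:align_partition}, conditional on success the resulting $G$ is exactly uniform on its (now deterministic, given the conditioning) bipartition. The second-moment bound holds for every fixed bipartition with $U_1\subseteq L$ and $U_2\subseteq R$, so it transfers to this conditional setting.

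A secondary check is that \cref{lem:JJ' are mixed bipartite} applies with $U$ split as $U_1\subseteq L$ and $U_2\subseteq R$, with mixing happening on $[U_1,U_2]$. This is exactly the setting of \cref{lem:complementing_mixing_bipartite}, so no further modification should be needed.
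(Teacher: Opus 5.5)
Your proposal is correct and follows the paper's proof essentially step for step. You align the bipartitions with \cref{lem:align_partition} at failure cost $2^{-m^*+k\log_2 m^*+1}$. You then run the second moment argument on the aligned uniform graph, using \cref{prop:symdiffindep_bipartite} and \cref{lem:JJ' are mixed bipartite} (correctly reading its exponent as $(3/4)^{|J\symdif J'|}$), to get $\var[X]\leq 2^{m-k_1k_2}(7/4)^m$ and hence failure probability $2^{k_1k_2}(7/8)^m$. Your explicit conditioning on the alignment's revealed edges and success event is a slightly more careful version of the paper's ``suppose the aligning step succeeds.'' Choosing $\hat V$ before rather than after alignment changes nothing, since those vertices are never touched.
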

\begin{proof} Let $G^*=(L^*,R^*,E^*)$ and let $V_1^*\subseteq L^*\setminus U$ and $V_2^*\subseteq R^*\setminus U$ have size $m^*$. By \cref{lem:align_partition}, with probability $1-2^{-m^*+k\log_2(m^*)+1}$ we can obtain a pivot-minor $G=(L,R,E)$ such that $U_1\subseteq L$ and $U_2\subseteq R'$ and $G\sim \bG(L,R,1/2)$.

Suppose that the aligning step succeeds. Now, let $\hat V_1\subseteq L\setminus U_1$ and $\hat V_2\subseteq R\setminus U_2$ have size $m$. For $J\subseteq [m]$, define $G_J$ as in Equation~\eqref{eq:GJdefnbipartite} and let $X=\sum_{J\subseteq [n]}X_J$, where $X_J=\bbone_{G_J[U]=H}$. Note that each $G_J[U]$ is uniformly random and thus we have probability of success $\PP[G_J[U]=H]=2^{-|U_1||U_2|}$. By linearity of expectation, $\Ex[X]=2^{-|U_1||U_2|}2^m$. We also have
\begin{align*}
\var[X]&=\sum_{J,J'\subseteq [m]}\Ex[X_JX_{J'}]-\Ex[X_J]\Ex[X_{J'}]\\
&=2^{-|U_1||U_2|}\sum_{J,J'\subseteq [m]}\PP[G_J[U]=G_{J'}[U]]-2^{|U_1||U_2|},
\end{align*}
using \cref{prop:symdiffindep_bipartite}. Then by \cref{lem:JJ' are mixed bipartite} and the binomial theorem,
\[
\var[X]\leq 2^{-|U_1||U_2|}\sum_{J\subseteq [m]}\sum_{i=0}^m{m\choose i}(3/4)^{-i}=2^{-|U_1||U_2|}2^m(7/4)^m.\]
Finally, by Chebyshev's inequality (\cref{lem:chebyshev}),
\[
\PP[X=0]\leq \frac{\var[X]}{\Ex[X]^2}=2^{-|U_1||U_2|+m}(7/4)^m\cdot 2^{2|U_1||U_2|-2m}=2^{|U_1||U_2|}(7/8)^n.
\]
Thus with probability at least $1-2^{-m^*+k\log_2(m^*)+1}-2^{|U_1||U_2|}(7/8)^n$, both steps succeed, implying that $H$ is a pivot-minor of $G$.
\end{proof}

\begin{proof}[Proof of \cref{thm:bip_pm_universal}]
Let $G\sim\bG(\ell,r,1/2)$, where $\ell\geq r= (1+c)Ck^2\geq 1$, where $C=\frac{1}{2\log_2(8/7)}+\frac 14$ and $c=c(k)$ is bounded below by some positive constant. Let $m=(1+c)\frac{k^2}{2\log_2(8/7)}$ and $m^*=(1+c)k^2/4-k$, so then $\ell,r\geq m+m^*+k$. By \cref{lem:pm_high_probability}, the probability that a given ordered bipartite graph $(H,U_1,U_2)$ on $k$ vertices of $V(G)$ is not a vertex-minor of $G$ is at most $2^{|U_1||U_2|}(7/8)^m+2^{-m^*+k\log_2(m^*)+1}$. Observe that
\[2^{|U_1||U_2|}(7/8)^m\leq 2^{k^2/4}2^{-(1+c)k^2/2}=2^{-(1+2c)k^2/4} \leq 2^{-(1+c)k^2/4},
\]
and $2^{-m^*+O(k\log_2(m^*))}=2^{-(1+c+o(1))k^2/4}$. By a union bound over all ${\ell+r\choose k}\sum_{i=0}^k{k\choose i}2^{i(k-i)}<(\ell+r)^k2^{k+k^2/4}$ choices of $H$,
\begin{align}
\PP[G\text{ is not $k$-p.m.\ universal}]&\leq (\ell+r)^k2^{k+k^2/4}\cdot 2^{-(1+c+o(1))k^2/4}\notag\\
&= \exp_2\left(k\log_2(\ell+r)-(c+o(1))k^2/4\right).\notag\label{eq:pm-final-bound-bipartite}
\end{align}
If $\ell\leq 2^{o(\sqrt r)}$, then $k\log(\ell+r)=ko(\sqrt{ck^2})=o(ck^2)$ and thus $G$ is $k$-pivot-minor universal with probability at least $1-2^{-(c+o(1))k^2/4}$. 
\end{proof}
\begin{remark} If we weaken the definition of $k$-universality to require that we contain every graph (resp. matroid) of size $k$ \textit{up to isomorphism}, then the analogous Theorems to \ref{thm:bip_pm_universal} and \ref{thm:stronger_matroid_universal} hold without the condition $\max(\ell,r)\leq 2^{o(\sqrt{\min(\ell,r)})}$ (by restricting to a balanced subgraph with $\ell=r$). Moreover, the constant could be improved to $C=\frac{1}{2\log_2(8/7)}\approx 2.60$ by skipping the ``aligning step" (\cref{lem:align_partition}).

Furthermore, if we are asking whether graph $G$ (resp.\ matroid $M$) contains just one particular minor (labeled or up to isomorphism) with high probability, the union bound above can be reduced to improve the constant to $C=\frac{1}{4\log_2(8/7)}\approx 1.30$. In general, the constants in this section could likely be improved with more work.
\end{remark}

\section{Vertex-minor Ramsey and concluding remarks}\label{section:VM Ramsey conclusion}

We now prove our bounds on the vertex-minor Ramsey problem stated in the introduction.

\ramseybounds*

\begin{proof} We first prove the upper bound. When $k=0,1$ this is satisfied with equality. If $k\geq 2$, let $N=R_{\rm{vm}}(k)-1$, so there is a graph $G$ on $N$ vertices with $U \subseteq V(G)$ an independent set in $G$ such that $|U| \leq k-1$ and $G$ does not contain a larger independent set as a vertex-minor. For all $S \subseteq U$, let $V_S = \{v \in V(G) \setminus U : N(v) \cap U = S\}$. Then we have the disjoint union $V(G) = U \cup \bigcup_{S \subseteq U} V_S$. We now claim that $|V_S| \leq \min\{2,|S|\}$ for all $S \subseteq U$. Assuming this, we conclude
$$R_{\rm{vm}}(k) - 1 = N \leq (k-1) + 1 \cdot \binom{k-1}{1} + 2 \cdot (2^{k-1}-k) = 2^k-2.$$

Note that if $x \in V_{\emptyset}$, then $U\cup \{x\}$ is an independent set, contradicting the maximality of $U$. Suppose towards a contradiction that $x,y \in V_{\{u\}}$ for some $u \in U$. By locally complementing at $u$, we may assume that $xy \not\in E(G)$. Then $U \cup \{x,y\} \setminus \{u\}$ is an independent set, contradicting the maximality of $U$.

Suppose towards a contradiction that $|S| \geq 2$ and $x,y,z \in V_S$. Let $u \in S$. By locally complementing at $u$, we may assume that $G[\{x,y,z\}]$ contains at most 1 edge. If $\{x,y,z\}$ is an independent set in $G$, then $U \cup \{y,z\} \setminus \{u\}$ is an independent set in $G \times ux$. Otherwise without loss of generality $yz \in E(G)$. Then $U \cup \{y,z\} \setminus \{u\}$ is an independent set in $G * x * u$. In every case we obtain an independent set of size larger than $U$ as a vertex-minor, a contradiction.

For the lower bound, let $G\sim \bG(n,1/2)$. Note that $G$ contains $I_k$ as a vertex-minor if and only if $G$ is locally equivalent (l.e.) to some graph $H$ with $\alpha(H)\geq k$. A result of Bahramgiri and Beigi \cite{Bahramgiri2007EnumeratingTC} states that for any graph $H$ on $n$ vertices, there are at most $3^n$ graphs locally equivalent to $H$. Then, by a union bound,
\begin{align*}
\PP[G\text{ l.e.\ }H,\ \alpha(H)\geq k]
&\leq 3^n\PP[\alpha(G)\geq k]\\
&\leq 3^n{n\choose k}2^{-{k\choose 2}}
\leq \exp_2\big(n\log_23+k\log_2n-k^2/2+k/2\big)
<1,
\end{align*}
as long as $n\leq (1+o(1))\frac{1}{2\log_23}k^2$.
\end{proof}
We note that our upper bound is the correct answer for $k \leq 3$. (For $k=3$, it is not hard to verify that the wheel on 6 vertices does not contain an independent of size 3 as a vertex-minor.)
However, 
since the lower bound in \cref{thm:ramsey_bounds} comes from the random graph, it is natural to believe that the polynomial lower bound is closer to the truth for sufficiently large $k$ than our modest upper bound. This belief leads us to \cref{conj:ramsey_conj}, restated below.

\ramseyconj*

By \cref{thm:vm_universal}, the random graph $\bG(n,1/2)$ cannot be a counterexample, and indeed it is possible that $R_{\rm{vm}}(k) = O(k^2)$. We note that proper vertex-minor closed classes have the Erd\H{o}s-Hajnal property~\cite{chudnovsky2018vertex}. This implies that if we restrict our graphs to any proper vertex-minor closed class, they contain an independent set of size $\Omega(n^c)$ as a vertex-minor for some $c>0$ possibly depending on the class. We also make the connection between pivot-minor universality and its corresponding Ramsey-type question.

\begin{definition}
Let $R_{\rm{piv}}(k)$ be the minimum $n$ such that any $n$-vertex graph contains an independent set or clique of size $k$ as a pivot-minor.
\end{definition}

Note that unlike the definition of $R_{\mathrm{vm}}(k)$, we need to allow either a clique or an independent set in this definition since $K_n$ does not contain any non-trivial independent set as a pivot-minor. 

We note that trivially $R_{\rm{vm}}(k-1) \leq R_{\rm{piv}}(k) \leq R(k)$, the diagonal Ramsey number. \cref{thm:ER_pm_universal} supports the following conjecture, which would imply \cref{conj:ramsey_conj}.

\begin{conjecture}
    $R_{\rm{piv}}(k) \leq \mathsf{poly}(k)$.
\end{conjecture}

We note that proper pivot-minor closed classes also have the Erd\H{o}s-Hajnal property \cite{davies2025pivot}, so this conjecture holds if we restrict to any proper pivot-minor closed class. We also improve the trivial upper bound of $R(k)$ for $R_{\rm{piv}}(k)$.

\begin{theorem}
    For all $k \geq 0$, $R_{\rm{piv}}(k) \leq (k-1)(2^k - 1) +1$.
\end{theorem}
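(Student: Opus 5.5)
We want to show that $R_{\rm{piv}}(k) \leq (k-1)(2^k-1)+1$. The plan is to adapt the proof of the upper bound in \cref{thm:ramsey_bounds}, replacing local complementations with pivots, and allowing cliques as well as independent sets. The bound $(k-1)(2^k-1)+1$ suggests the following pigeonhole structure. Take a maximal set $U$ that is \emph{homogeneous} (a clique or an independent set) in some graph pivot-equivalent to $G$, with $|U|\le k-1$. Then show that the remaining vertices, classified by their neighborhood in $U$, can be assigned to at most $(k-1)(2^{k-1})$-ish classes of bounded size. Concretely, I would aim for a count of the form
\[
n-1 \le (k-1)(2^k-1),
\]
that is, $|V(G)\setminus U|\le (k-1)2^k-(k-1)-|U|$ roughly.

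Suppose $G$ has $n$ vertices and contains no clique or independent set of size $k$ as a pivot-minor. Among all graphs pivot-equivalent to $G$, choose one, still called $G$, together with a homogeneous set $U$ of maximum size $s\le k-1$. Partition $V(G)\setminus U$ into the classes $V_S=\{v: N(v)\cap U=S\}$ for $S\subseteq U$. This gives $2^{s}\le 2^{k-1}$ classes.

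The key step is to bound $|V_S|$. Within $V_S$, Ramsey's theorem does not help directly, so instead I would use the pivot analogue of the arguments in \cref{thm:ramsey_bounds}:
\begin{itemize}
\item If $U$ is independent and $S=\emptyset$, any $x\in V_\emptyset$ extends $U$, giving a contradiction.
\item If $S\neq\emptyset$, pick $u\in S$ and $x\in V_S$. Pivoting on $ux$ swaps the roles of $u$ and $x$ and toggles edges among $N(u)$, $N(x)$ and their intersection according to the tripartite rule. In the graph $G\times ux - u$, the vertex $x$ has neighborhood $S\setminus\{u\}\cup\{u\text{-stuff}\}$ inside $U\setminus\{u\}\cup\{x\}$.
\end{itemize}
The aim is to show that the vertices of $V_S$ induce a graph in which every vertex has small degree or small co-degree. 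Failing that, $|V_S|\le 2(k-1)$ or so, by finding, whenever $V_S$ is large, two vertices that can replace one vertex of $U$ and enlarge the homogeneous set. The count $(k-1)(2^k-1)$ suggests instead a bound of $|V_S|\le k-1$ for each of roughly $2^k$ classes. Since cliques are allowed as targets, one can handle the complement symmetrically (the bound doubles the class count from $2^{k-1}$ to $2^k$).

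So I would instead run the argument with $U$ ranging over both types simultaneously. I would try to show that each class contains at most $k-1$ vertices, as follows. Within $V_S$, consider the induced graph. If it contains an independent set or clique of size $|S|+1$ that is compatible, then a pivot at $u\in S$ with one of them should produce a homogeneous set of size $s+1$. Otherwise each $V_S$ is small by induction on $k$, using the bound for $R_{\rm{piv}}(k-1)$ recursively.

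The main obstacle is this class bound: pivots need edges and change adjacency in a tripartite pattern, so the clean single-vertex tricks of the vertex-minor case are unavailable. I would first try a recursion of the form
\[
R_{\rm{piv}}(k)\le 2R_{\rm{piv}}(k-1)+(k-1),
\]
obtained by picking a vertex $v$ and splitting the rest into $N(v)$ and its non-neighborhood. In the non-neighborhood, a homogeneous set of size $k-1$ extends by $v$ if it is independent. In $N(v)$, a clique extends by $v$. In the mismatched cases, pivoting on an edge from $v$ into the set converts it (via \cref{lem:cancel pivots} and the tripartite description) at the cost of one vertex. Checking that this recursion unrolls to exactly $(k-1)(2^k-1)+1$ is the part I expect to need the most care.
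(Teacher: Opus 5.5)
\textbf{There is a genuine gap: your proposal never proves the class bound on $|V_S|$.} Everything depends on that bound, and the routes you suggest for it do not work.

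First, pivot-minors are not closed under complementation, so you cannot ``handle the complement symmetrically.'' In the path $a\!-\!b\!-\!c$, both the edge and the non-edge on $\{a,c\}$ are pivot-minors. In its complement, only the edge $ac$ is. Also, if $U$ may be a clique, then $V_\emptyset$ need not be empty, so your class count breaks down.

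Second, in the recursion $R_{\mathrm{piv}}(k)\le 2R_{\mathrm{piv}}(k-1)+(k-1)$, the step ``a clique in $N(v)$ extends by $v$'' is false. A pivot on an edge $ab$ inside $N(v)$ puts $v$ in the class $N(a)\cap N(b)$. It therefore toggles $v$'s adjacency to $N(a)\setminus N[b]$ and $N(b)\setminus N[a]$, where $N[x]=N(x)\cup\{x\}$. So once you lift the pivots from $G[N(v)]$ to $G$, $v$ is no longer complete to the clique. For instance, take $v$ universal over the path $c\!-\!a\!-\!b$; after $\times ab$, $\{b,c\}$ is a clique but $vc$ is a non-edge.

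The mismatched cases are not handled either. A $(k-1)$-clique among the non-neighbours of $v$ stays non-adjacent to $v$, with no edge from $v$ into it to pivot on. ``Converting at the cost of one vertex'' only returns a set of size $k-1$. As a sanity check, that recursion would unroll to roughly $3\cdot 2^{k-1}$, which is strictly stronger than the statement, not ``exactly'' $(k-1)(2^k-1)+1$.

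\textbf{The missing idea is a specific one-for-two swap, and it only needs $U$ independent.} Over all graphs pivot-equivalent to $G$, choose an independent set $U$ of maximum size, so $|U|\le k-1$. Then $V_\emptyset=\emptyset$.

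For $u\in S$ and $x\in V_S$, pivot on $ux$. This toggles pairs lying in different classes among $A=N(u)\cap N(x)$, $B=N(u)\setminus N[x]$ and $C=N(x)\setminus N[u]$. Because $U$ is independent, these classes sit as follows:
\begin{itemize}
\item $C\supseteq S\setminus\{u\}$;
\item neighbours of $x$ in $V_S$ lie in $A$;
\item non-neighbours of $x$ in $V_S$ lie in $B$;
\item $U\setminus S$ lies in no class.
\end{itemize}

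Now suppose $\{x,y,z\}\subseteq V_S$ is independent (so $y,z\in B$), or $x\!-\!y\!-\!z$ is an induced path (so $y\in A$, $z\in B$). Then $(U\setminus\{u\})\cup\{y,z\}$ is independent in $G\times ux$. The label swap of $u$ and $x$ is irrelevant here, since this set contains neither. This contradicts the maximality of $U$.

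Hence $G[V_S]$ is $P_3$-free with no independent triple. That is, it is a disjoint union of at most two cliques, each of size at most $k-1$ because $G$ has no $K_k$ as an induced subgraph. So $|V_S|\le 2(k-1)$, and
\[
|V(G)|\le (k-1)+(2^{k-1}-1)(2k-2)=(k-1)(2^k-1).
\]

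Your first instinct, $|V_S|\le 2(k-1)$ by trading one vertex of $U$ for two of $V_S$, was the right one. What was missing is identifying exactly which two configurations make the trade work, and noticing that cliques are needed only to cap component sizes, not as the set $U$.
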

\begin{proof}
    Let $G$ be a graph with no independent set or clique of size $k$ as a pivot-minor on $R_{\rm{piv}}(k)-1$ vertices. By possibly replacing $G$ with a pivot-equivalent graph, we let $U$ be an independent set in $G$ such that $G$ does not contain a larger independent set as a pivot-minor. For all $S \subseteq U$, let $V_S = \{v \in V(G) \setminus U : N(v) \cap U = S\}$. Clearly $V_{\emptyset}$ is empty by the maximality of $U$. For $S$ containing a vertex $u$, we claim that $V_S$ can not contain an independent set of size 3. Indeed otherwise if $\{x,y,z\} \in V_S$ is independent, then $U \cup \{y,x\} \setminus \{u\}$ is an independent set in $G \times ux$. We also note that $V_S$ can not contain an induced path on 3 vertices. Indeed if $x,y,z$ are the vertices of an induced path in $V_S$ in that order, then $U \cup \{y,x\} \setminus \{u\}$ is an independent set in $G \times ux$. Clearly $V_S$ can not contain a clique of size $k$ by definition of $G$. It follows that $G[V_S]$ is a disjoint union of cliques and thus $|V_S| \leq 2(k-1)$. Thus
    \begin{equation*}R_{\rm{piv}}(k) - 1 \leq (k-1) + (2^{k-1} - 1)(2k-2) = (k-1)(2^k - 1). \qedhere\end{equation*}
\end{proof}

We make some final conjectures about vertex-minor universality. \cref{thm:vm_universal} and our proof of the lower bound in \cref{thm:ramsey_bounds} together show that for any $\eps>0$, the random graph $G\sim \bG(n,1/2)$ is with high probability $(\alpha-\eps)\sqrt{n}$-vertex-minor universal but not $(\beta+\eps)\sqrt{n}$-vertex-minor universal, where $\alpha = \sqrt{2\log_2(4/3)} \approx 0.911$ and $\beta = \sqrt{2\log_2(3)n} \approx 1.78$. It is natural to conjecture that there is a sharp threshold for universality, as follows.
\begin{conjecture}
There exists $C\in [\sqrt{2\log_2(4/3)},\sqrt{2\log_2(3)}]$ such that for any $\eps > 0$, with high probability $G\sim \bG(n,1/2)$ is $(C-\eps)\sqrt n$-vertex-minor universal and not $(C+\eps)\sqrt n$-vertex-minor universal.
\end{conjecture}

It is also natural to ask for what values of $p$ the random graph $\bG(n,p)$ is $k$-vertex-minor universal for $n$ polynomial in $k$. The main difficulty in extending our approach to $p \neq 1/2$ is that the random walk defined in \cref{section:randomwalks} now has steps which are correlated to each other: after flipping an edge which was not yet revealed, its probability of existing is now $1-p$ instead of $p$. If one could get around this issue, the resulting random walk would then have the second largest eigenvalue bounded in absolute value by the max of $1-2p^2, 1-2(1-p)^2,$ and $1 - 2p(1-p)$. Assuming without loss of generality that $p \leq 1/2$, this means it would suffice to run the random walk for $O(k^2/p^2)$ steps. Thus we conjecture the following. 

\begin{conjecture}\label{conj:p neq 1/2}
    If $p = \omega(1/\sqrt{n})$ is at most $1/2$ and $G \sim \bG(n,p)$ or $G \sim \bG(n,1-p)$, then $G$ is $k$-vertex-minor universal with high probability for some $k = \Omega(p\sqrt n)$.
\end{conjecture}

If true, this would qualitatively generalize \cref{thm:vm_universal}, although unlike \cref{thm:vm_universal} it is not necessarily tight up to a constant for all $p$. \cref{conj:p neq 1/2} would also imply that $\bG(n,p)$ is not a counterexample to \cref{conj:ramsey_conj} for any $p$. This follows for instance by noting that if $p$ is outside the range $[n^{-1/4}, 1-n^{-1/4}]$, then $\bG(n,p)$ already contains a clique or independent set of size $\Omega(n^{1/4}\log n)$.

\vspace{-0.2cm}

\paragraph{Acknowledgments.} 
This work began at the Fall 2025 Atlanta Graduate Combinatorics Workshop at Georgia Tech. We thank Georgia Tech for providing funding for the workshop, Sean Longbrake for helping to organize it, and Elliot Frederickson for his motivating presence there. We also extend our thanks to Xiaoyu He and Rose McCarty for enlightening conversations, and to Rose McCarty for telling us about the problem of whether the uniformly random graph is vertex-minor universal (which was originally posed by Nathan Claudet) and pointing us to relevant sources. Finally, we thank Nathan Claudet for helpful comments on an earlier version of this paper.

\end{document}